\numberwithin{equation}{section}
\newtheorem{conj}{Conjecture}
\newtheorem{theorem}[conj]{Theorem}
\newtheorem{cor}[conj]{Corollary}
\newtheorem{prop}[conj]{Proposition}
\newtheorem{lemma}[conj]{Lemma}
\newtheorem{ass}{Assumption}
\newtheorem{cond}{Condition}
\newtheorem{definition}{Definition}
\providecommand{\customgenericname}{}
\newcommand{\newcustomtheorem}[2]{%
	\newenvironment{#1}[1] 
	{%
		\renewcommand\customgenericname{#2}%
		\renewcommand\theinnercustomgeneric{##1}%
		\innercustomgeneric
	}
	{\endinnercustomgeneric}
}
\theoremstyle{remark}\newtheorem{remark}{Remark}
\def\Cov{{\rm Cov}}   
\def\PP{\mathbb{P}}
\def\EE{\mathbb{E}}
\def\RR{\mathbb{R}}
\def\wh{\widehat}
\def\bI{\mathbf{I}}
\def\T{\top}
\def\i{\infty}  
\def\sw{\Sigma_w}
\def\sxy{\Sigma_{XY}}
\def\dist{\bM^\T\Sigma_w^{-1}\bM}
\def\Var{{\rm Var}}
\def\supp{{\rm supp}} 
\def\rank{{\rm rank}}
\def\tr{{\rm tr}}
\def\op{{\rm op}}
\def\diag{{\rm diag}}
\def\rI{\textrm{I}}
\def\rII{\textrm{II}}
\def\Dt{\Delta}
\def\bM{{ M}}
\def\bX{{\mathbf X}}
\def\bY{{\mathbf Y}}
\def\be{{\bm e}}
\def\bD{{\bm D}}
\def\cE{\mathcal{E}}
\def\cO{\mathcal{O}}
\def\cS{\mathcal{S}}
\def\cN{\mathcal{N}} 
\def\cC{\mathcal{C}}
\def\cR{\mathcal{R}}
\def\cQ{\mathcal{Q}}
\def\1{\mathbbm{1}} 
\def\pen{\text{pen}}
\DeclareMathOperator*{\argmax}{arg\,max}
\DeclareMathOperator*{\argmin}{arg\,min}
\title{A New Regression Lens on Multi-Class Classification}
\author{
    Xin Bing\thanks{Department of 
  Statistical Sciences, University of Toronto. E-mail: \texttt{xin.bing@utoronto.ca}}
  ~~~~~ Bingqing Li\thanks{Department of 
  Statistical Sciences, University of Toronto. E-mail: \texttt{bbingqing.li@mail.utoronto.ca}}
    ~~~~~Marten Wegkamp\thanks{Department of Mathematics and  Department   of Statistics and Data Science, Cornell University.  E-mail: \texttt{marten.wegkamp@cornell.edu}.}
}
\begin{document}

\date{\today}
\maketitle

\begin{abstract}
Linear Discriminant Analysis (LDA) is a fundamental method for classification. Its simple linear structure facilitates interpretation, and it is naturally suited to multi-class settings. LDA is also closely connected to several classical multivariate techniques, including Fisher’s discriminant analysis, canonical correlation analysis, and linear regression.

In this paper, we strengthen the connection between LDA and multivariate response regression by establishing an explicit relationship between discriminant directions and regression coefficients. This characterization yields a new regression-based framework for multi-class classification that accommodates structured, regularized, and even non-parametric regression methods. In contrast to existing regression-based approaches, our formulation is particularly amenable to theoretical analysis: we develop a general strategy for deriving bounds on the excess misclassification risk of the proposed classifier across all such regression procedures.

As concrete applications, we provide complete theoretical guarantees for two widely used methods---$\ell_1$-regularization and reduced-rank regression---neither of which has previously been fully analyzed in the LDA context. The theoretical results are supported by extensive simulation studies and empirical evaluations on real data.

\end{abstract}

{\em Keywords:} Dimension reduction, discriminant analysis, high-dimensional data, multi-class classification, multivariate response regression, regularization. 

\section{Introduction}
 
Linear Discriminant Analysis (LDA) is a popular tool for predicting a categorical response using a set of explanatory variables. Its popularity stems from several favorable properties, such as ease of interpretation, reasonable robustness to departures from normality, and the capability to handle responses with multiple classes. See, \cite{hastie1995penalized} and many references in \cite{seber2009multivariate}.  This paper introduces a new regression-based approach to multi-class, high-dimensional linear discriminant analysis (LDA).

Assume that $(X_1,Y_1),\ldots,(X_n,Y_n)$ are i.i.d. copies of a random pair $(X, Y)$, where the label $Y$ takes values in $\{ \be_1,\ldots,\be_L\}$, the canonical basis vectors in $\RR^L$, with $L\ge 2$ classes, and the feature vector $X\in \RR^p$ has conditional means collected in the $p$ by $L$ matrix
 $   \bM  = (\mu_1, \ldots, \mu_L)$ 
and the  {\em within-class} covariance matrix 
$$
    \sw =\Cov(X\mid Y = \be_\ell),\quad \text{for all }\ell \in [L]:=  \{1,\ldots, L\}.
$$
We assume $\sw$ to be strictly positive definite and the class  probabilities $\pi_\ell = \PP\{Y = \be_\ell\}$, for all $\ell \in [L]$, are strictly positive. 
 We denote by $\Sigma$ the {\em unconditional} covariance matrix of $X$.
 Since we can always subtract the marginal mean of $X$, we assume  
 $\EE(X) = 0_p$.  

   For a new feature $x\in \RR^p$, LDA predicts its corresponding label as
    \begin{equation}\label{bayes_rule}
       \argmin_{\ell \in [L]} ~   (x-\mu_\ell)^\T \sw^{-1}(x- \mu_\ell)  -  2\log(\pi_\ell).             
    \end{equation} 
     The LDA rule in \eqref{bayes_rule} coincides with the Bayes rule when $X\mid Y$ is Gaussian.
     See, for instance, \cite{Izenman-book}.  
     Estimation of the classifier \eqref{bayes_rule} based on $(X_1,Y_1),\ldots,(X_n,Y_n)$  becomes challenging in (i) the high-dimensional setting $p>n$ and (ii) the multi-class setting $L>2$ that allows for $L\to\i$ as $n\to \i$. 

\subsection{Existing approaches}


Few approaches consider (ii) but notable exceptions are \citet{levy2023generalization,AP19,nibb22}.

We recall various approaches to deal with (i). 
  In high-dimensional settings $p>n$, 
 \citet{cai2011direct} assume that the direction  $\beta^*:=\sw^{-1} (\mu_2 - \mu_1)$ is sparse in the binary case $L=2$, and propose a Dantzig-type procedure to estimate this direction by approximately solving the equation $ \sw \beta^* = \mu_2-\mu_1$ for $\beta^*$. Other procedures based on the same or similar equations include, for instance, \cite{qiao2009sparse, Shao2011,Fan2012} for $L=2$, and \cite{Tibshirani2002,FanFan2008,caizhang2019,chen2022distributed,mai2019multiclass,gaynanova2016simultaneous} for $L>2$. Specifically, for multi-class responses, there are $L-1$ directions $\beta_\ell^* = \sw^{-1}(\mu_\ell - \mu_1)$ for $2\le \ell \le L$, which can be solved via 
 \begin{equation}\label{crit_EE}
        (\beta_2^*,\ldots,\beta_L^*) = \argmin_{\beta_2,\ldots,\beta_L}  ~ \sum_{\ell=2}^L \left({1\over 2} \beta_\ell^\T \sw \beta_\ell - \beta_\ell^\T (\mu_\ell - \mu_1)\right).
 \end{equation}
  For large $p>n$, sparsity is imposed on $\beta_2^*,\ldots,\beta_L^*$.  These directions can be estimated  by replacing $\sw$ and $\mu_\ell$ in \eqref{crit_EE} by their sample counterparts, and adding
  a penalty that encourages sparsity \citep{mai2019multiclass,gaynanova2016simultaneous,wang2021penalized,zeng2024subspace}.
    However, it is not always reasonable to assume that the directions $(\beta_2^*,\ldots,\beta_L^*)$ are sparse, especially when $\sw$ 
    has many non-negligible off-diagonal entries.
Moreover, the above procedure--based on solving a quadratic program--is less appealing than the regression-based approaches discussed below in \eqref{OS_B}, \eqref{OS_B_penalized_mat} and \eqref{new_procedure}, particularly in high-dimensional settings where structural or penalized estimation is desired and
    issues such as tuning parameter selection and computational efficiency become more critical. Also, see part (a) of our main contributions in   \cref{sec_contri}.
    
 \citet{Witten2011} alternatively use  the fact   that
  the first term in the LDA rule \eqref{bayes_rule} is equivalent with Fisher's discriminant rule
        \begin{equation}\label{Fisher_rule}
            \argmin_{\ell \in [L]} ~(x- \mu_\ell)^\T  F  {F^\T} (x-\mu_\ell),
        \end{equation} 
        where  the columns of $F = (F_1, \ldots, F_K)\in \RR^{p\times K}$, for some $K < L$, are defined via, for $k\in [K]$,
        \begin{equation}\label{FDA_B}
            \begin{split} 
            F_k := \argmax_{\beta\in\RR^p}   ~ \beta^\T \Sigma \beta \quad  \text{subject to}&  ~ \beta^\T \sw \beta = 1, ~ \beta^\T \sw F_i = 0, ~ ~\forall ~ i<  k.
            \end{split}
        \end{equation} 
     Criteria similar to \eqref{Fisher_rule} -- \eqref{FDA_B} and their connection to discriminant analysis have been considered in \cite{SafoAhn2016,ahn2021trace,jung2019penalized}.
    \citet{Witten2011} develop methodology  tailored to the high-dimensional scenarios with a diagonal matrix  $\sw$ and sparse columns of $F$.
    However, the formulation (\ref{FDA_B}) of the matrix $F$ makes it extremely difficult to obtain the global solution and to derive theoretical properties of the regularized estimator and the resulting classifier.  
    
    More closely related to our approach is optimal scoring which dates back to \citet{de1976additive,young1978principal} and regains attention in the LDA setting, see \citet{hastie1995penalized,hastie1994flexible,clemmensen2011sparse,Irina} and the references therein. By writing the response matrix and feature matrix as $\bY=(Y_1,\ldots,Y_n)^\T\in \{0,1\}^{n\times L}$ and $\bX = (X_1,\ldots,X_n)^\T\in \RR^{n\times p}$,
    optimal scoring is a regression-based approach that aims to solve a sequence of optimization problems: for $1\le k \le K < L$,  
        \begin{equation}\label{OS_B}
            \begin{split} 
            \min_{\ \theta \in \RR^L,\ \beta\in\RR^p} & ~   \|\bY \theta - \bX \beta\|_2^2\quad \text{s.t.}   ~ \theta^\T \bY^\T \bY \theta = n, ~   \theta^\T \bY^\T \bY \wh\theta_i  = 0, ~ \forall ~ i< k.
            \end{split}
        \end{equation} 
        Here $\wh \theta_i$ is the solution to (\ref{OS_B}) in the $i$th iteration.  
        In the low-dimensional  case ($p<n$), 
        \cite{hastie1995penalized} shows that the solutions $\wh \beta_1,\ldots, \wh \beta_K$ from \eqref{OS_B} can be computed from canonical correlation analysis (CCA), and are further related, via CCA, with the matrix $F$ obtained from the sample analogue of \eqref{FDA_B}. As a result, classification can be based on the dimension reduction directions $\wh \beta_1,\ldots, \wh \beta_K$. In the high-dimensional setting, regularization is needed.  For instance, \citet{hastie1995penalized} considers adding $\beta^\T\Omega \beta$ to the loss function of \eqref{OS_B} 
        for a chosen $p\times p$ symmetric, positive definite matrix $\Omega$, and show that  the global solution 
        can be computed. A possible choice of $\Omega= \lambda \bI_p$ for some $\lambda>0$ leads to  the ridge  penalty \citep{campbell1980shrunken,friedman1989regularized}. To accommodate more general regularization,  \cite{Irina} studies the matrix formulation of  \eqref{OS_B},  
         \begin{equation}\label{OS_B_penalized_mat}
            \begin{split} 
            \min_{\Theta, \ B} & ~   \|\bY \Theta- \bX B\|_F^2 + \text{pen}(B)\quad \text{s.t.}  ~ \Theta^\T \bY^\T \bY \Theta = n\bI_K, \quad \Theta^\T   \bY^\T \bY  1_n = 0_K,
            \end{split}
        \end{equation} 
        and proves that the
        global solution 
        to \eqref{OS_B_penalized_mat} can be computed, provided pen$(B)$  satisfies 
        $\text{pen}(B R) =  \text{pen}(B)$ for any orthogonal $K\times K$ matrix $R$. \cite{Irina} focuses on the group-lasso penalty $\|B\|_{1,2} = \sum_{j=1}^p \|B_{j\cdot}\|_2$ (which is orthogonal invariant) and analyzes the resulting classifier when $X\mid Y$ is Gaussian. However, it is not clear how the analysis can be extended to other penalized estimation. In particular, the requirement that pen($B$) is orthogonal invariant excludes the familiar lasso penalty $\|B\|_1 = \sum_{j,k}|B_{jk}|$ and the elastic net penalty. Worse, for penalties that are not orthogonal invariant, there is no guarantee of computing the global solution to \eqref{OS_B_penalized_mat}, let alone any theoretical property of the resulting estimator.

  \subsection{Our contributions}\label{sec_contri}

    {We propose a new regression based multi-class classification approach that is suitable in both low- and high-dimensional settings and has {\em provable theoretical guarantees.}}
  Key to our approach is the following reformulation of the problem.
We show in Lemma \ref{lem_key} of \cref{sec_method} that  the {\em discriminant directional matrix} in \eqref{bayes_rule} 
    \begin{equation}\label{def_B_star}
        B^* = \sw^{-1}   \bM \in \RR^{p\times L}
    \end{equation}
   and the {\em regression matrix} 
    \begin{equation}\label{def_B}
        B  := \Sigma^{-1}\sxy := [\Cov(X)]^{-1} \Cov(X, Y) \in \RR^{p\times L}
    \end{equation} 
    not only share the same column space, but have a closed-form connection, specifically,
    \begin{align}\label{key_BB}
        B^* &=  B H^{-1}
    \end{align}
    for some invertible $L\times L$ matrix $H$, which we derive explicitly. 
    This connection is novel and distinguishes our approach from the current literature. While existing literature has leveraged the fact that $B$ and $B^*$ share the same column space to develop two-step classification procedures, typically by first estimating this shared subspace (see, e.g., \citet{hastie1995penalized, ye2007least, lee2015equivalence, nie2022equivalence, gaynanova2016simultaneous, ahn2021trace}), these approaches lack theoretical guarantees for the resulting classifier. In contrast, our procedure is built upon the new closed-form characterization of $B$ given in \eqref{key_BB}, which not only leads to a straightforward classifier but also enables a rigorous theoretical analysis of its misclassification risk. To the best of our knowledge, only in the binary case ($L=2$), \cite{mai2012} employs sparse linear regression between an appropriately encoded label vector and the feature matrix $\bX$ for classification and proves consistency of their classifier.  Extension of such regression formulation to handle multi-class responses with high-dimensional features has been a longstanding open problem.
    
    In view of \eqref{key_BB}, we propose  to estimate the discriminant direction matrix $B^*$  by first estimating the regression matrix $B$ as 
    \begin{equation}\label{new_procedure} 
          \wh B =   \argmin_{B\in \RR^{p\times L}}   ~  {1\over n}  \|\bY  - \bX B\|_F^2 + \text{pen}(B)
    \end{equation} 
     and then estimating $H^{-1}$ based on $\wh B$ via its explicit expression in \cref{lem_key}. 
     The advantages of our approach as well as our main contributions are two-fold:
 \begin{enumerate}
     \item[(a)] From a methodological perspective, our approach in \eqref{new_procedure}, akin to optimal scoring, is regression-based. As pointed out by \citet{hastie1995penalized,hastie1994flexible}, regression-based methods are generally much easier to compute -- especially in high-dimensional settings -- than canonical correlation analysis \citep{Irina}, Fisher's discriminant rule \citep{Witten2011}, and the aforementioned procedures based on approximately solving in \eqref{crit_EE} as in \citep{mai2019multiclass,caizhang2019}. Moreover, regression-based approaches are more amenable to extensions involving regularized and non-parametric regression, as well as model selection. We can also easily tap into the large literature on estimation of the high-dimensional regression matrix $B\in \RR^{p\times L}$ in \eqref{new_procedure}. However, we caution that the nonlinearity of $\EE[\bY \mid \bX]$ in $\bX$ complicates the theoretical analysis.
     
     After computing $\wh B$ from \eqref{new_procedure}, we only need to estimate the inverse of an $L\times L$ matrix to estimate the discriminant direction matrix $B^*$ rather than that of a $p\times p$ matrix $\sw$ as in the classical LDA context. Our approach in \cref{sec_method_rule} uses the explicit form of $H$ in \cref{lem_key} and enjoys both computational advantage and numerical stability, compared to the existing approaches.   In \cref{sec_method_low}, we further offer an alternative way to estimate the LDA rule that performs Fisher's discriminant analysis on low-dimensional transformations $\wh B^\T x \in \RR^{L}$ of the original features $x\in \RR^p$.   This leads to a practically useful method, which could extract fewer discriminant directions than those in $B^*$, for both downstream analysis and visualization (see, for instance, the real data analysis in \cref{sec_real_data}). 
     
    \item[(b)]  Theoretically, unlike the intractable analysis of optimal scoring with general regularization, we are able to quantify the errors in estimating both $B$ and $B^*$, and provide theoretical guarantees for the resulting classifier, thanks to \eqref{key_BB}. Concretely, under the assumption that the distributions of $X\mid Y=\be_\ell$ are Gaussian $\cN_p(\mu_\ell, \sw)$, we provide in \cref{sec_theory_risk,sec_theory_Q} a general strategy for analyzing the excess misclassification risk of the proposed classifier, which is valid for any generic estimator of $B^*$, {including all existing estimators as well as our proposed estimator $\wh B^* = \wh B \wh H^{-1}$. }While various choices of $\wh B$  are feasible in our approach, more specifically in \eqref{new_procedure}, we apply the general result to two popular estimators: the lasso estimator in \cref{sec_theory_lasso} and the reduced rank estimator in \cref{sec_theory_rr}. The former also contrasts our approach with \citet{caizhang2019} and \citet{Witten2011} which assume sparsity of $B^*$ and $F$, respectively. The latter is particularly suitable when the rank of the discriminant direction matrix $B^*$ is low and the number of classes $L$ is moderate / large. For instance, our earlier work \citet{BW2023} considers a multi-class classification setting with  high-dimensional 
    features $X\in \RR^p$ that follow a factor model $X=AZ+W$ with unobserved (latent) features $Z\in \RR^r$  with $r<p$ and random noise $W$, that is independent of both $Z$ and the label $Y$. In this case, the discriminant direction matrix $B^*$ has rank no greater than $r$ when $r<L$. The procedure in \cite{BW2023}, however, is different than the one proposed here. We refer to \cref{rem_BW} in Section \ref{sec_method_rule} for more discussion.  
 \end{enumerate}

Finally, we provide a thorough simulation study in \cref{sec_sims} to corroborate our theoretical findings. \cref{sec_real_data} contains our findings in several real data studies.  The Supplement \citep{Supp_LDRR} contains all the proofs.

\subsection{Notation}
For any numbers $a,b\in \RR$, we write $a\vee b = \max\{a, b\}$. For any vector $v$ and $1\le q\le \i$, we use $\|v\|_q$ to denote the standard $\ell_q$-norm. For any positive integer $d$, we write $[d] := \{1,\ldots, d\}$. For any matrix $Q \in \RR^{d_1\times d_2}$, any $i\in [d_1]$ and $j\in [d_2]$, we write $Q_{i\cdot}$ for its $i$th row and $Q_{\cdot j}$ for its $j$th column. 
    For norms, we write $\|Q\|_\i = \max_{i,j}|Q_{ij}|$, $\|Q\|_F^2 = \sum_{i,j} Q_{ij}^2$, $\|Q\|_1 = \sum_{i,j} |Q_{ij}|$, $\|Q\|_{1,2} = \sum_i \|Q_{i\cdot}\|_2$ and $\|Q\|_\op = \sup_{v: \|v\|_2=1} \|Qv\|_2$. For any symmetric, semi-positive definite matrix $Q\in \RR^{d\times d}$, we use $\lambda_1(Q), \lambda_2(Q), \ldots, \lambda_d(Q)$ to denote its eigenvalues in non-increasing order. 
    For any sequences $a_n$ and $b_n$, we write $a_n\lesssim b_n$ if there exists some constant $C>0$ such that $a_n \le Cb_n$, and $a_n\asymp b_n$ if $a_n\lesssim b_n$ and $b_n \lesssim a_n$.

\section{Methodology}\label{sec_method}

    We discuss our classification procedure in this section.  Our goal is to estimate the rule in \eqref{bayes_rule} which entails estimating the discriminant functions: 
    \begin{align}\label{def_DF}
        \mu_\ell^\T \sw^{-1} \mu_\ell - 2x^\T \sw^{-1}\mu_\ell -2\log(\pi_\ell) =  \mu_\ell^\T B^*_{\cdot \ell} - 2x^\T B^*_{\cdot \ell}-2\log(\pi_\ell),   \quad  \text{for all }\ell \in [L].  
    \end{align} 
 Note that there is no need to estimate the $p\times p$ matrix $\sw^{-1}$ explicitly. 
      The following lemma establishes the connection between the discriminant directional matrix $B^*= (B_{\cdot 1}^*,\cdots,B_{\cdot L}^*)$ in (\ref{def_B_star}) and the regression matrix $B$ in (\ref{def_B}). Write 
    $\pi = (\pi_1,\ldots, \pi_L)^\T$ and $D_\pi = \diag(\pi)\in \RR^{L\times L}$. 
    \begin{lemma}\label{lem_key}
        We have 
       \begin{align}\label{bingo}
           B^* =  B H^{-1}
       \end{align}
        for a matrix  $H \in \RR^{L\times L}$ that  can be inverted and satisfies
        \begin{equation}\label{def_H}
           H = D_\pi -  B^\T\Sigma B= D_\pi -  D_\pi\bM^\T B = D_\pi -  B^\T \bM D_\pi.
        \end{equation}
    \end{lemma}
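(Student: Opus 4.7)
The plan is to leverage the two defining identities of the paper, namely $B^* = \Sigma_w^{-1} M$ and $B = \Sigma^{-1}\Sigma_{XY}$, together with the decomposition $\Sigma = \Sigma_w + M D_\pi M^\T$ from \eqref{def_sigma}, to derive a linear system relating $B^*$ and $B$, and then to check that the transition matrix $H$ is invertible.

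First, I would compute $\Sigma_{XY}$ explicitly. Since $\EE(X) = 0_p$, we have $\Sigma_{XY} = \EE[X Y^\T]$, and conditioning on $Y$ gives
\begin{equation*}
    \EE[XY^\T] = \sum_{\ell=1}^L \pi_\ell\, \EE[X \mid Y = \be_\ell]\, \be_\ell^\T = \sum_{\ell=1}^L \pi_\ell\, \mu_\ell\, \be_\ell^\T = \bM D_\pi.
\end{equation*}
Consequently $B = \Sigma^{-1} \bM D_\pi$, or equivalently $\Sigma B = \bM D_\pi$. Substituting $\Sigma = \sw + \bM D_\pi \bM^\T$ yields
\begin{equation*}
    \sw B = \bM D_\pi - \bM D_\pi \bM^\T B = \bM\bigl(D_\pi - D_\pi \bM^\T B\bigr) = \bM H,
\end{equation*}
where I set $H := D_\pi - D_\pi \bM^\T B$. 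If $H$ is invertible, left-multiplying by $\sw^{-1}$ and right-multiplying by $H^{-1}$ gives precisely $B^* = \sw^{-1}\bM = B H^{-1}$.

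Next I would verify the three equivalent expressions for $H$ stated in \eqref{def_H}. The identity $\Sigma B = \bM D_\pi$ implies $B^\T \Sigma B = B^\T \bM D_\pi$, which combined with $B = \Sigma^{-1}\bM D_\pi$ also gives $B^\T \Sigma B = D_\pi \bM^\T \Sigma^{-1}\bM D_\pi$. The latter is manifestly symmetric (since $\Sigma^{-1}$ is), so $B^\T \bM D_\pi = (B^\T\bM D_\pi)^\T = D_\pi \bM^\T B$, establishing the three forms of $H$.

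The main technical obstacle is the invertibility of $H$. I would factor $H = D_\pi^{1/2}\bigl(I_L - D_\pi^{1/2}\bM^\T \Sigma^{-1}\bM D_\pi^{1/2}\bigr)D_\pi^{1/2}$, which is valid since all $\pi_\ell > 0$. By the Sylvester determinant identity $\det(I_L - XY) = \det(I_p - YX)$ with $X = D_\pi^{1/2}\bM^\T \Sigma^{-1/2}$ and $Y = \Sigma^{-1/2}\bM D_\pi^{1/2}$,
\begin{equation*}
    \det\!\bigl(I_L - D_\pi^{1/2}\bM^\T \Sigma^{-1}\bM D_\pi^{1/2}\bigr) = \det\!\bigl(\Sigma^{-1/2}(\Sigma - \bM D_\pi \bM^\T)\Sigma^{-1/2}\bigr) = \det\!\bigl(\Sigma^{-1/2}\sw \Sigma^{-1/2}\bigr) > 0,
\end{equation*}
where the last step uses \eqref{def_sigma} and the strict positive definiteness of both $\sw$ and $\Sigma$. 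Thus $H$ is invertible, completing the argument.
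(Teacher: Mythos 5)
Your proposal is correct, and it takes a genuinely different route from the paper's proof. The paper starts from the same identities ($\sxy = \bM D_\pi$, hence $B = \Sigma^{-1}\bM D_\pi$), but its core step is the Woodbury matrix identity: it expands $\sw^{-1}$ in terms of $\Sigma^{-1}$ and then pushes $\bM$ through a chain of matrix manipulations to arrive at $B^* = B\left(D_\pi - D_\pi \bM^\T B\right)^{-1}$. You instead avoid Woodbury entirely: substituting $\Sigma = \sw + \bM D_\pi \bM^\T$ into $\Sigma B = \bM D_\pi$ gives $\sw B = \bM H$ in one line, from which $B^* = \sw^{-1}\bM = BH^{-1}$ follows immediately once $H$ is invertible. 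Your derivation of the three equivalent forms of $H$ (via symmetry of $D_\pi \bM^\T \Sigma^{-1}\bM D_\pi$) is also sound. Notably, your argument is more careful on one point the paper glosses over: the paper's use of Woodbury implicitly assumes $\bI_L - D_\pi \bM^\T \Sigma^{-1}\bM$ (equivalently $H = (\bI_L - D_\pi\bM^\T\Sigma^{-1}\bM)D_\pi$) is invertible without proof, whereas you establish invertibility explicitly by factoring out $D_\pi^{1/2}$ and applying Sylvester's determinant identity to reduce the determinant to $\det(\Sigma^{-1/2}\sw\Sigma^{-1/2}) > 0$. The trade-off is that the paper's computation produces the formula for $H$ as a by-product of the algebra, while your approach requires positing $H$ and then checking it, but in exchange it is more elementary and closes the invertibility gap.
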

    \begin{proof}
        From the identity 
        \begin{align}\label{eq_sxy}
            &\sxy  = \EE(X Y^\T) - \EE(X)(\EE(Y))^\T = \bM D_\pi 
        \end{align}
        we see that
        \begin{equation}\label{eq_B}
            B = \Sigma^{-1}\bM D_\pi.            
        \end{equation}
        Next, using the fact that
        \begin{equation} \label{eq_Sigma}
             \Sigma =  \sw + \bM D_\pi \bM^\T,
        \end{equation}  
        the Woodbury matrix identity yields
        \[
            \sw^{-1} = \Sigma^{-1} + 
            \Sigma^{-1}\bM  \left(
                \bI_L -   D_\pi \bM^\T \Sigma^{-1}\bM 
              \right)^{-1}  D_\pi  \bM^\T \Sigma^{-1}.
        \]
        It follows that  
        \begin{align*}
            B^* &= \sw^{-1} \bM \\
            &= \Sigma^{-1}\bM  + \Sigma^{-1}\bM  \left(
                \bI_L -   D_\pi \bM^\T \Sigma^{-1}\bM 
              \right)^{-1}  D_\pi  \bM^\T \Sigma^{-1}\bM \\
            &= \Sigma^{-1}\bM   \left(
                \bI_L -  D_\pi \bM^\T \Sigma^{-1}\bM  
            \right)^{-1}  \\
            & = \Sigma^{-1}\bM  D_\pi \left(
                D_\pi -  D_\pi \bM^\T \Sigma^{-1}\bM  D_\pi
            \right)^{-1} \\
            & = B \left(D_\pi -  D_\pi \bM^\T B\right)^{-1} 
        \end{align*}
        which completes the proof. 
    \end{proof}

	\cref{lem_key} suggests that
	estimation of $B^*$ could be done by first estimating the matrix $B$ and then estimating $H^{-1}$ according to \eqref{def_H}. We discuss these two steps in detail in the next two sections.

    \subsection{Estimation of the regression matrix}

    To accommodate high-dimensional data, we estimate $B$ via penalized regression methods in \eqref{new_procedure}. 
    Choices of the penalty term depend on the concrete problem at hand. For instance, when entries of $B$ exhibit certain smoothness structure such as in some imaging or audio applications \citep{hastie1995penalized}, one could choose $\pen(B) = \tr(B^\T \Omega B)$ for some pre-specified ``roughness'' penalty matrix $\Omega$.
    In this paper we mainly focus on the following two types of structural estimation of $B$.  
    
    \subsubsection{Sparsity-based structural estimation.}  
    If we believe that only a subset of the original features are important for predicting the label, then sparsity structures of the regression matrix $B$ are reasonable assumptions.
     If the subset of features, that are pertinent for predicting the label, varies across different levels, then the entry-wise lasso penalty is appropriate. 
    The corresponding matrix estimator $\wh B$ is defined in \eqref{new_procedure} with the following lasso penalty \citep{Tibshirani1996}
     \begin{align}\label{pen:lasso}
         \pen(B) &= \lambda \sum_{j=1}^p \|B_{j\cdot}\|_1  
      =   \lambda \sum_{\ell=1}^L \| B_{\cdot \ell}\|_1 
     \end{align} 
     for some tuning parameter $\lambda>0$. In \cref{sec_theory_lasso} we provide a complete analysis of this estimator.
     
     If there exist many features that are not predictive for {\em all levels} of the response label, then the group-lasso penalty $\pen(B) = \lambda \sum_{j=1}^p \|B_{j\cdot}\|_2$ \citep{yuan2006model} is a more suitable choice. 
     In this case, one can retain a much smaller subset of important features and achieve feature selection. To further capture sparsity within the selected features, one might alternatively consider the sparse group lasso penalty $\pen(B) = \lambda \sum_{j=1}^p \|B_{j\cdot}\|_2 + \kappa \sum_{j=1}^p \sum_{\ell=1}^L |B_{j\ell}|$ \citep{VINCENT2014771,levy2023generalization}. By imposing an additional sparsity constraint within the non-zero rows, this approach allows for identifying specific relevant effects even within selected features to capture the local row-wise sparsity. From Lemma \ref{lem_key} we see that row-sparsity of $B$ is equivalent to that of $B^*$.  Our focus in this paper, however, is on prediction, not feature selection. Furthermore, one can also consider either a row-wise or a column-wise elastic net penalty \citep{zou2005regularization} if the features tend to (highly) correlate with each other. 
       Our numerical results in \cref{sec_sims,sec_real_data} include both the lasso penalty and the elastic net penalty.

    \subsubsection{Low-rank structural estimation.} Different from sparsity, if the goal is to construct a low-dimensional subset of linear combinations of the original feature for classification purposes, then a low-rank penalty in \eqref{new_procedure} could be more appropriate. For example, if the matrix of conditional means $\bM \in \RR^{p\times L}$ has a reduced-rank $r \ll \min\{p, L\}$, then \cref{lem_key} implies that $\rank(B) \le r$. In this case, estimation of $B \in \RR^{p\times L}$ in \eqref{new_procedure} could be done by using the following reduced-rank penalty (see, for instance, \cite{IzenmanRR})
    \begin{equation}\label{pen:rr}
        \pen(B) = \lambda ~ \rank(B)
    \end{equation}
    for some tuning parameter $\lambda > 0$. The reduced-rank regression estimator has a closed-form solution obtained via singular value decomposition. The procedure for deriving this solution is detailed in \cite{bunea2011}. A full theoretical analysis of using \eqref{pen:rr} is presented in \cref{sec_theory_rr}.  Some alternative penalties include, but are not limited to, the reduced-rank penalty plus the matrix ridge penalty in \cite{mukherjee2011reduced}, the nuclear norm penalty in \cite{Koltchinskii2011}, adaptive nuclear norm penalty in \cite{chen2013reduced} as well as the generalized cross-validation rank penalty in \cite{rank19}.
    \begin{remark}\label{rem_BW}
        We conclude this section by giving another concrete example under which $B$ has low rank. In \cite{BW2023} the authors consider the model $X = AZ+W$ with latent factors $Z\in \RR^r$ for some $r\ll p$ and some general additive noise $W$, that is independent of both $Z$ and the label $Y$. Conditioning on $Y = \be_\ell$ for all $\ell \in [L]$,  the low-dimensional factors $Z$ are assumed to follow Gaussian distributions with different conditional means $\alpha_\ell \in \RR^r$ but the same within-class covariance matrix $\Sigma_z \in \RR^{r\times r}$. When $W$ is also Gaussian, this model is a sub-model of \eqref{model}  with 
        $\mu_\ell = A\alpha_\ell$ and $\sw = A\Sigma_z A^\T + \Cov(W)$. When $r \le \min\{p,L\}$, it is easy to verify that $\rank(B) \le r$. The two-step approach in \cite{BW2023} 
        uses the {\em unsupervised} Principal Component Analysis (PCA) to reduce the feature dimension and then base classification on the reduced dimension. It is entirely different from our approach in \cref{sec_method_rule} but connected to the one in \cref{sec_method_low} in the way that our approach in \cref{sec_method_low} is also a two-step procedure but reduces the feature space in the first step in a {\em supervised} way via the regression step in \eqref{new_procedure}. Numerical comparison of both procedures are also included in our simulation study in \cref{app_sec_sim}.
    \end{remark}


    \subsection{Estimation of the discriminant functions}\label{sec_method_rule} 
    The discriminant functions in \eqref{def_DF}  necessitates that we estimate the conditional mean matrix $\bM$ and the discriminant directional matrix $B^*$. The latter also requires estimating the vector of prior probabilities $\pi$.
We adopt the following standard estimators
    \begin{align}\label{def_mu_pi_hat}
        \wh\bM = \bX^\T \bY (\bY^\T \bY)^{-1},\qquad  D_{\wh\pi} = \diag(\wh \pi) =  {1\over n}\bY^\T \bY
    \end{align}
    to estimate $\bM$ and $D_\pi$, respectively.
    Recall from \cref{lem_key} that $B^* = B H^{-1}$. Since the estimator $\wh B$ of $B$ is computed in \eqref{new_procedure},  in view of the closed-form of $H$ in \eqref{def_H}, we propose to estimate $H$ by the plug-in estimator
    \begin{equation}\label{def_H_hat}
   		\wh H = D_{\wh\pi} - {1\over n}  \wh B^\T \bX^\T \bX \wh B = {1\over n}\left(\bY^\T \bY - \wh B^\T \bX^\T \bX \wh B\right).
    \end{equation}
    Note that $\bX\wh B$ is simply
    the in-sample fit of the regression in \eqref{new_procedure}. Computation of $\bX\wh B$ thus can be done efficiently even in non-parametric regression settings, for instance, when the columns of $\bX$ consist of basis expansions.  
  Now,  using $B^* = BH^{-1}$,  we can further estimate $B^*$ by     
    \begin{equation}\label{def_B_star_est}
    	\wh B^* = \wh B \wh H^{+}.
    \end{equation}
    Here, for any matrix $M$, $M^+$ denotes its Moore-Penrose inverse.
   Finally, we estimate the discriminant functions in \eqref{def_DF}  by
    \begin{equation}\label{def_DF_hat}
         \wh\mu_\ell^\T \wh B^*_{\cdot \ell}   - 2x^\T \wh B^*_{\cdot \ell} -2\log(\wh \pi_\ell)  , \qquad \text{for all $\ell \in [L]$,}
    \end{equation}
    based on which classification can be done subsequently.
    Theoretically, we show in \cref{thm_Q} of \cref{sec_theory_Q} that $\wh H$ is invertible with overwhelming probability under a mild condition on $\wh B$ that is required for classification consistency. In \cref{thm_B_lasso} of \cref{sec_theory_lasso} and \cref{thm_B_rr} of \cref{sec_theory_rr}, this mild condition is verified for both the lasso estimator and the reduced rank estimator of $B$, respectively. Moreover, it is worth mentioning that 
     the inverse of $\wh H$  always exists  when any $\ell_2$-norm related penalty is deployed in \eqref{new_procedure}.


\section{Theoretical guarantees}\label{sec_theory}

    In this section, we provide a unified theory for analyzing the classifier based on the estimated discriminant functions in \eqref{def_DF_hat} via estimating the regression coefficient matrix. 
    In \cref{sec_theory_risk} we start with full generality and bound the excess risk of the classifier by a quantity that is related with the error of estimating $B^*$. The results there hold for any estimator $\wh B^*$ of $B^*$. In \cref{sec_theory_Q} we focus on the proposed estimator $\wh B^* = \wh B \wh H^+$ and provide a unified analysis of its estimation error that is valid for any generic estimator $\wh B$. Finally, we apply the general result to two particular choices of $\wh B$, the lasso estimator in \cref{sec_theory_lasso} and the reduced-rank estimator in \cref{sec_theory_rr}.  Throughout our analysis, both $p$ and $L$, as well as the parameters $M$ and $\sw$, are allowed to depend on the sample size $n$. For notational simplicity, we omit this dependence in our presentation.

    \subsection{A reduction scheme for bounding the excess risk}\label{sec_theory_risk}
    We adopt the following distributional assumption 
    \begin{equation}\label{model} 
        X  \mid Y =\be_\ell ~ \sim ~ \cN_p\left(\mu_\ell, \sw \right),\qquad \text{for all }  \ell \in [L].
    \end{equation}
    See \cref{rem_Gaussian} for extension to sub-Gaussian distributions of $X \mid Y$. 
    Let $(X, Y)$ be a new pair from \eqref{model} that is independent of the training data $\bD :=\{\bX, \bY\}$. The Bayes rule under \eqref{model}  reads as 
    \begin{equation}\label{def_G}
        g^*(x) = \argmin_{\ell \in [L]} G_\ell(x),\qquad \text{with }~  G_\ell(x) =  \mu_\ell^\T B^*_{\cdot \ell}\, - \, 2x^\T B^*_{\cdot \ell} - 2\log(\pi_\ell).
    \end{equation}
    For any classifier 
    \begin{equation}\label{def_G_hat}
        \wh g(x)= \argmin_{\ell \in [L]} \wh G_\ell(x),\qquad \text{with }~  \wh G_\ell(x) =  \wh\mu_\ell^\T \wh B^*_{\cdot \ell}   - 2x^\T \wh B^*_{\cdot \ell} - 2\log(\wh \pi_\ell),
    \end{equation} 
    its excess risk relative to the error rate of the Bayes rule is defined as 
    \[
    \cR(\wh g) := \PP\{Y \ne \wh g(X) \mid \bD \} - \PP\{Y \ne g^*(X)\}.
    \]
    Note that we use the identification $\{ Y=k\}  := \{ Y= \be_k\}$ in the above definition. Our analysis of $\cR(\wh g)$ requires the following condition on the priors $\pi = (\pi_1, \ldots, \pi_L)^\T$.
	\begin{ass}\label{ass_pi}
		There exist some absolute constants $0<c\le C< \i$ such that 
		\begin{align}
		   { c\over L} \le \min_{k\in[L]}\pi_k \le \max_{k\in[L]} \pi_k \le {C\over L}.
		\end{align}
	\end{ass}
	This is a common regularity assumption in binary classification problems (see, for instance, \cite{caizhang2019,mai2019multiclass,AP19}). We may relax Assumption \ref{ass_pi}  to $\min_{k\in[L]}\pi_k \ge c \log(n)/n$ in our analysis (see, for instance, \cref{lem_pi} of \cref{app_sec_aux_lem}) at the cost of more technical proofs and a less transparent presentation. 
    
    An important quantity in this problem is the pairwise Mahalanobis distance in the feature space between any two label classes. Let 
    $
             \Dt := \max_{k\ne\ell}(\mu_k-\mu_\ell)^\T \sw^{-1}(\mu_k-\mu_\ell).
    $
    \begin{ass}\label{ass_cond_separation}
        There exists 
        some absolute constant $c\in (0,1]$  such that 
    \begin{equation}\label{cond_separation} 
        \min_{k\ne \ell}(\mu_k-\mu_\ell)^\T \sw^{-1}(\mu_k-\mu_\ell)  \ge c \Dt.
    \end{equation}
    \end{ass}
    The following proposition states that for the purpose of bounding $\cR(\wh g)$, it suffices to study the estimation error of each discriminant function, that is,
    $\wh G_\ell(X) - G_\ell(X)$ for $\ell \in [L]$.
    The proof essentially follows the argument of proving Theorem 12 in \cite{BW2023} but with modifications to capture  {\em the sum of squared errors} of estimating all discriminant functions.

    \begin{prop}\label{prop_g}
    Under model \eqref{model} with \cref{ass_pi} and \cref{ass_cond_separation}, for any positive sequences $t_1,\ldots, t_L$ (that possibly depend on $\bD$), with probability equal to one, there exists an absolute constant $C$ such that
    \begin{equation}\label{fast_rate}
    \cR(\wh g) ~ \le  ~  
         {C\over  \sqrt \Dt}\sum_{\ell=1}^L  t_\ell^2  +   \sum_{\ell=1}^L \PP\left\{
            |\wh G_\ell(X) - G_\ell(X)| \ge t_\ell \mid \bD  
        \right\}.
    \end{equation}
    \end{prop}
    \begin{proof}
        See \cref{app_sec_proof_prop_g}.
    \end{proof}
    We remark that both \cref{ass_pi} and   \cref{ass_cond_separation} are needed to derive the above ``fast-rate'' bound in \eqref{fast_rate}. Although our proof reveals that the following ``slow-rate'' bound 
    \begin{equation}\label{slow_rate}
        \cR(\wh g) \le \max_{1\le \ell\le L} t_\ell + \sum_{\ell=1}^L \PP\left\{
            |\wh G_\ell(X) - G_\ell(X)| \ge t_\ell \mid \bD  
        \right\}
    \end{equation} 
    holds without these assumptions,  we will mainly focus on explicit excess risk bounds based on  \eqref{fast_rate}.

    \cref{prop_g} is valid for any estimator $\wh G_\ell$ of the discriminant function, in particular, for our proposed  $\wh G_\ell$ in \eqref{def_G_hat}. The choice of $(t_1,\ldots, t_L)$ depends on the estimation error of $|\wh G_\ell(X) - G_\ell(X)|$.  We therefore proceed to analyze 
    \begin{align}\label{diff_G}
		\wh G_\ell(X)  -  G_\ell(X)  &= 
		(\wh \mu_\ell - \mu_\ell)^\T B^*_{\cdot \ell} + ( \wh\mu_\ell- 2X)^\T (\wh B^*_{\cdot \ell} - B^*_{\cdot \ell}) + 2\log\left({\pi_\ell / \wh \pi_\ell}\right)
	\end{align}
    for each $\ell \in [L]$. It is worth mentioning that in this section we allow $\wh B^*$ in the above display to be any estimator of $B^*$ whereas $\wh \mu_\ell$ and $\wh \pi_\ell$ are given in \eqref{def_mu_pi_hat}.  
    \cref{ass_pi}, the Gaussian tail of $X \mid Y$ and the independence between $X$ and $(\bX, \bY)$ in \eqref{diff_G} entail the following upper bound of $|\wh G_\ell(X) - G_\ell (X)|$.

	\begin{theorem}\label{thm_G}
		Under model \eqref{model} with Assumptions \ref{ass_pi} and \ref{ass_cond_separation}, assume 
		$\Delta \ge 1$ and $L\log(n) \le  n$.  With probability at least $1- 3n^{-1}$, we have,  for all $\ell \in [L]$,  
    	\begin{align*}
    		 \left|\wh G_\ell (X) - G_\ell (X)\right|  ~ \lesssim  ~    
    		\sqrt{\Delta}\sqrt{L\log(n) \over n} &+ \|\sw^{1/2}(\wh B^*_{\cdot \ell} -B^*_{\cdot \ell})\|_2 \sqrt{\Delta  + \log(n)} +    |(\wh \mu_\ell - \mu_\ell)^\T (\wh B^*_{\cdot \ell}- B^*_{\cdot \ell})|.
    	\end{align*}
    \end{theorem}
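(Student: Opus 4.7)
The plan is to bound each summand of the decomposition \eqref{diff_G} separately on a single high-probability event, then union bound over $\ell \in [L]$. Write
\[
T_1 := (\wh\mu_\ell - \mu_\ell)^\T B_\ell^*, \qquad T_2 := (\wh\mu_\ell - 2X)^\T (\wh B_\ell^* - B_\ell^*), \qquad T_3 := 2\log(\pi_\ell/\wh\pi_\ell).
\]

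I would first control the multinomial class counts: a Bernstein-type bound combined with \cref{ass_pi} shows that $n_\ell := |\{i: Y_i=\ell\}| \asymp n/L$ and $|\wh\pi_\ell - \pi_\ell| \lesssim \pi_\ell \sqrt{L\log(n)/n}$ for all $\ell$ on an event of probability at least $1 - n^{-1}$. A Taylor expansion of $\log$ then controls $|T_3|$ by $\sqrt{L\log(n)/n}$, which under $\Delta_\i \ge 1$ is dominated by the target $\sqrt{\Delta_\i L\log(n)/n}$. Conditional on $\bY$, one has $\wh\mu_\ell - \mu_\ell \sim \cN(0, \sw/n_\ell)$ and $X$ is independent of this, so $T_1$ is a centered Gaussian with variance at most $B_\ell^{*\T}\sw B_\ell^*/n_\ell = \mu_\ell^\T\sw^{-1}\mu_\ell/n_\ell \le \Delta_\i/n_\ell$; a Gaussian tail bound with union over $\ell$ then delivers $|T_1| \lesssim \sqrt{\Delta_\i L \log(n)/n}$.

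The main work is $T_2$. I would split it as $T_2 = (\wh\mu_\ell - \mu_\ell)^\T v_\ell + (\mu_\ell - 2X)^\T v_\ell$ with $v_\ell := \wh B_\ell^* - B_\ell^*$. The first piece is kept intact and contributes the last term of the stated bound. For the second piece, I would further decompose $\mu_\ell - 2X = (\mu_\ell - 2\mu_Y) - 2(X - \mu_Y)$. The deterministic part is handled by Cauchy--Schwarz in the $\sw^{-1}$-inner product: for any class $k$,
\[
|\mu_k^\T v_\ell| \le \sqrt{\mu_k^\T\sw^{-1}\mu_k}\,\|\sw^{1/2}v_\ell\|_2 \le \sqrt{\Delta_\i}\,\|\sw^{1/2}v_\ell\|_2,
\]
so $|(\mu_\ell - 2\mu_Y)^\T v_\ell| \le 3\sqrt{\Delta_\i}\|\sw^{1/2}v_\ell\|_2$. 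For the random part, since $X$ is independent of $\bD$ and $v_\ell$ is $\bD$-measurable, conditionally on $(\bD, Y)$ we have $(X - \mu_Y)^\T v_\ell \sim \cN(0, v_\ell^\T\sw v_\ell)$, and a Gaussian tail bound with union over $\ell$ yields $|(X - \mu_Y)^\T v_\ell| \lesssim \sqrt{\log(n)}\,\|\sw^{1/2}v_\ell\|_2$ uniformly with probability at least $1 - n^{-1}$. Combining the two parts gives the $\sqrt{\Delta_\i + \log(n)}\,\|\sw^{1/2}v_\ell\|_2$ contribution in the stated bound.

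Putting the three bounds on $T_1, T_2, T_3$ together and collecting the three high-probability events by a union bound yields the claim with probability at least $1 - 3n^{-1}$. The main obstacle is the careful bookkeeping for $T_2$: one has to isolate the $X$-randomness from the $\bD$-randomness hidden in $v_\ell$, ensure the union-bound cost of $\log L$ is absorbed into $\log(n)$ via the hypothesis $L\log(n) \le n$ (hence $\log L \le \log n$), and verify that the Cauchy--Schwarz step applies uniformly over the random label $Y$.
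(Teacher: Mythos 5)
Your proposal is correct and follows essentially the same route as the paper: the same three-term decomposition of $\wh G_\ell(X)-G_\ell(X)$, multinomial concentration for $\wh\pi_\ell$ with $L\log(n)\le n$, Gaussian tails (conditional on $\bY$, resp.\ on $(\bD,Y)$) for $(\wh\mu_\ell-\mu_\ell)^\T B^*_\ell$ and $(X-\mu_Y)^\T(\wh B^*_\ell-B^*_\ell)$, Cauchy--Schwarz in the $\sw^{-1}$-inner product for the mean terms, and keeping $(\wh\mu_\ell-\mu_\ell)^\T(\wh B^*_\ell-B^*_\ell)$ intact. The only cosmetic difference is that you center $X$ at $\mu_Y$ explicitly whereas the paper conditions on $Y=\be_k$ and unions over $k$; the bounds and probability accounting are the same.
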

    \begin{proof}
        See \cref{app_sec_proof_thm_G}. 
    \end{proof}

    Condition $L\log(n) \le n$ together with \cref{ass_pi} ensures that $\wh \pi_\ell \asymp \pi_\ell$ for all $\ell \in [L]$. The  reasonable condition $\Delta  \ge 1$   requires that the pairwise separation of conditional distributions between distinct classes does not vanish. It is assumed  to simplify the presentation, but our analysis can be easily extended to the case  $\Delta = \Delta(p) \to 0$ as $p\to \i$.
     We refer to \cite{BW2023}, in particular
Theorems 3 \& 7 and Corollary 11 for $L=2$ and Theorem 12 for $L>2$.

    Combining \cref{prop_g} with \cref{thm_G} immediately yields the following corollary.  Denote by $\EE_{\bD}$ the expectation taken with respect to the training data $\bD$ only. For any estimator $\wh B^*$ of $B^*$, define  the following (random) quantity
	\begin{equation}\label{def_Q_metric}
		\cQ(\wh B^*) :=   \|\sw^{1/2}(\wh B^* -B^*)\|_F^2 (\Delta  + \log n)  +    \sum_{\ell=1}^L \left[(\wh \mu_\ell - \mu_\ell)^\T (\wh B^*_{\cdot \ell}- B^*_{\cdot \ell})\right]^2.
	\end{equation}

    \begin{cor}\label{cor_g}
        Under model \eqref{model} with Assumptions \ref{ass_pi} and   \ref{ass_cond_separation}, assume 
		$\Delta  \ge 1$. For any estimator $\wh B^*$ of $B^*$,  the corresponding classifier $\wh g$ in \eqref{def_G_hat}  satisfies 
        \[
            \EE_\bD\left[\cR(\wh g)\right]  ~ \lesssim ~  \EE_{\bD}\left[\min\left\{\Delta L^2 {\log(n) \over n} + \cQ(\wh B^*), ~ 1\right\}\right].
        \]
    \end{cor}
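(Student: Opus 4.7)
The plan is to combine Proposition \ref{prop_g} with Theorem \ref{thm_G} by choosing each $t_\ell$ to match the high-probability upper bound supplied by Theorem \ref{thm_G}, and then to convert the resulting pointwise bound on $\cR(\wh g)$ into the desired expectation bound with $\min\{\cdot,1\}$ inside. Concretely, I would set, for each $\ell\in[L]$,
\[
t_\ell := C_0\sqrt{\Delta_\i L\log(n)/n}+C_0\|\sw^{1/2}(\wh B^*_\ell-B^*_\ell)\|_2\sqrt{\Delta_\i+\log n}+C_0\,|(\wh\mu_\ell-\mu_\ell)^\T(\wh B^*_\ell-B^*_\ell)|,
\]
where $C_0$ is the absolute constant hidden in the $\lesssim$ of Theorem \ref{thm_G}; crucially, each $t_\ell$ is a function of $\bD$ alone. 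Using $(a+b+c)^2\le 3(a^2+b^2+c^2)$ and summing over $\ell$ identifies, after comparison with \eqref{def_Q_metric},
\[
\sum_{\ell=1}^L t_\ell^2\ \lesssim\ \Delta_\i L^2\log(n)/n+\cQ(\wh B^*).
\]

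With this $t_\ell$, Proposition \ref{prop_g} yields, almost surely in $\bD$,
\[
\cR(\wh g)\ \le\ C\sum_{\ell=1}^L t_\ell^2+L\sum_{\ell=1}^L \PP\bigl\{|\wh G_\ell(X)-G_\ell(X)|\ge t_\ell\,\bigm|\,\bD\bigr\}.
\]
By construction, the joint event $\cE:=\bigcap_{\ell\in[L]}\{|\wh G_\ell(X)-G_\ell(X)|\le t_\ell\}$ on $(\bD,X)$ from Theorem \ref{thm_G} satisfies $\PP(\cE)\ge 1-3n^{-1}$, so each conditional probability above is at most $\PP(\cE^c\mid\bD)$, giving $\cR(\wh g)\le C\sum_\ell t_\ell^2+L^2\PP(\cE^c\mid\bD)$ a.s. I would then combine this with the trivial bound $\cR(\wh g)\le 1$ and use $\min\{A+B,1\}\le\min\{A,1\}+\min\{B,1\}$ for $A,B\ge 0$ to split the pointwise bound as
\[
\cR(\wh g)\ \le\ \min\Bigl\{C\sum_{\ell=1}^L t_\ell^2,\,1\Bigr\}+\min\bigl\{L^2\PP(\cE^c\mid\bD),\,1\bigr\}.
\]

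Taking $\EE_\bD$, the first term becomes $\lesssim \EE_\bD[\min\{\Delta_\i L^2\log(n)/n+\cQ(\wh B^*),1\}]$ after absorbing the multiplicative constant via $\min\{cA,1\}\le c\min\{A,1\}$ for $c\ge 1$; the second is at most $\EE_\bD[L^2\PP(\cE^c\mid\bD)]=L^2\PP(\cE^c)\le 3L^2/n$. Since $\Delta_\i\ge 1$ and $\log n\ge 1$ force $L^2/n\le \Delta_\i L^2\log(n)/n+\cQ(\wh B^*)$ pointwise, this residual gets absorbed into the main term via a short case split on whether $L^2/n\le 1$: if so, $\min\{A,1\}\ge L^2/n$ a.s.\ and hence $3L^2/n\le 3\EE_\bD[\min\{A,1\}]$; otherwise, $\min\{A,1\}=1$ a.s.\ and the a priori bound $\EE_\bD[\cR(\wh g)]\le 1=\EE_\bD[\min\{A,1\}]$ closes the argument. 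The main obstacle is precisely this final bookkeeping around the $\min\{\cdot,1\}$ truncation: the additive residual $L^2\PP(\cE^c\mid\bD)$ produced by Theorem \ref{thm_G}, though small in expectation, is not obviously of the same form as the target $\min\{\Delta_\i L^2\log(n)/n+\cQ(\wh B^*),1\}$, and the case split on $L^2/n$ combined with the trivial bound $\cR\le 1$ is the delicate step that reconciles them. Everything else is a direct concatenation of the two earlier results.
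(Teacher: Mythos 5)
Your proposal is correct and follows essentially the same route as the paper: choose $t_\ell$ to be the high-probability bound from \cref{thm_G}, feed it into \cref{prop_g} so that $\sum_\ell t_\ell^2$ produces $\Delta_\i L^2\log(n)/n+\cQ(\wh B^*)$, and absorb the residual probability term using $\cR(\wh g)\le 1$ and $\Delta_\i\ge 1$. The only difference is bookkeeping: the paper restricts to the $\bD$-measurable events $\cE_\mu\cap\cE_\pi$ to get a residual of order $L/n+n^{-1}$, whereas you use the cruder bound $L^2\PP(\cE^c\mid\bD)$ and the tower property to get $3L^2/n$, which your case split on $L^2\le n$ (under which $L\log n\le n$ also holds, so \cref{thm_G} applies) absorbs just as well.
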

    \begin{proof}
        See \cref{app_sec_proof_cor_g}. 
    \end{proof}

    \cref{cor_g} ensures that bounding  $\cR(\wh g)$ can be reduced to controlling $\cQ(\wh B^*)$ for any estimator $\wh B^*$ of $B^*$.  In particular,   \cref{cor_g} is valid for any existing approach based on estimating $B^*$, such as the estimator proposed in \cite{mai2019multiclass} to which our result could yield a smaller risk bound using their rate-analysis of $\|\wh B^*-B^*\|_F^2$. In the next section we provide a unified analysis of $\cQ(\wh B^*)$ for our proposed estimator $\wh B^* = \wh B \wh H^+$ based on a generic $\wh B$.   As shown later, the first term in \eqref{def_Q_metric} is typically the dominating term.

    \subsection{A unified analysis of the estimation of $B^*$ based on generic  regression coefficient estimation}\label{sec_theory_Q}
	
	In this section we provide more explicit bounds  of  $\cQ(\wh B^*)$ in \eqref{def_Q_metric} for the estimator 
	$ 
        \wh B^* = \wh B \wh H^+
	$
	with $\wh H$ obtained from \eqref{def_H_hat}.  Our results in this section are valid for any estimator $\wh B$ of $B$ that satisfies the following assumption.
	\begin{cond}
	    \label{ass_B}
		There exists some deterministic sequences $\omega_1$ and   $\omega_2$ such that the following holds with probability at least $1-n^{-1}$, 
		\begin{align*}
			   \max\left\{n^{-1/2}\| 
      \bX(\wh B - B)\|_F, ~ \|\sw^{1/2}(\wh B - B)\|_F \right\} \le  \omega_2,  \quad\|\wh B - B\|_{1,2} \le \omega_1.
		\end{align*}
	\end{cond}
	 
	We need the following regularity conditions on $\sw$. It assumes the entries in $\sw$ are bounded from above and the smallest eigenvalue of $\sw$ is bounded away from zero. The lower bound condition on $\lambda_p(\sw)$ can be relaxed to a restricted eigenvalue condition on $\sw$, see the discussion after \cref{thm_B_lasso}.
	\begin{ass}\label{ass_sw}
		There exist some positive constants  such that 
		$c< \lambda_p(\sw) \le  \|\sw\|_\i \le C<\i$.
	\end{ass}  
	
	
	\begin{theorem}\label{thm_Q}
		Under model \eqref{model} with Assumptions \ref{ass_pi}, \ref{ass_cond_separation} \& \ref{ass_sw}, assume $\Delta \asymp 1$ and $ L\log (n) \le c n$ for some small constant $c>0$. For any $\wh B$ satisfying \cref{ass_B} with  
		$	  \omega_2\sqrt{L} \le c, $
		we have, with probability at least $1-  2n^{-1}$, that
  \begin{align*}
      \text{the matrix 
      $\wh H$ in \eqref{def_H_hat} is invertible,}
  \end{align*} and 
		\begin{align}\label{bd_Q_all}
            & \cQ(\wh B^*)	\lesssim  ~      L^2 {\log^2(n) \over  n}     +  \omega_2^2 L^2 \log(n)    + \omega_1^2 L^3 {\log(n\vee p) \over n}.
        \end{align}
	\end{theorem}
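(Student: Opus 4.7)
The proof naturally divides into three phases. First, applying the Sherman-Morrison-Woodbury identity to the formula $H = D_\pi - D_\pi \bM^\T \Sigma^{-1} \bM D_\pi$ from \cref{lem_key} yields the compact closed form $H^{-1} = D_\pi^{-1} + \bM^\T \sw^{-1} \bM$. Under \cref{ass_pi} and $\Delta_\i \asymp 1$, this immediately delivers the key $L$-scalings $\|H^{-1}\|_\op \lesssim L$ (so $\lambda_{\min}(H) \gtrsim 1/L$), $\|\sw^{1/2} B^*\|_F^2 = \tr(\bM^\T \sw^{-1} \bM) \lesssim L$, and $\|B^\T \Sigma B\|_\op = \|D_\pi - H\|_\op \lesssim 1/L$; these population-level scalings are what drive the final rates.

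Second, writing $\wh\Sigma = n^{-1}\bX^\T\bX$ and expanding
\begin{equation*}
\wh H - H = (D_{\wh\pi} - D_\pi) - B^\T(\wh\Sigma - \Sigma)B - B^\T\wh\Sigma(\wh B - B) - (\wh B - B)^\T\wh\Sigma B - (\wh B - B)^\T\wh\Sigma(\wh B - B),
\end{equation*}
I would bound each piece in operator norm: Bernstein for the binomial class counts gives $\|D_{\wh\pi} - D_\pi\|_\op \lesssim \sqrt{\log(n)/(nL)}$; sub-Gaussian sample covariance concentration for the $L$-dimensional projection $B^\T X$ gives $\|B^\T(\wh\Sigma - \Sigma)B\|_\op \lesssim \sqrt{\log(n)/(nL)}$; \cref{ass_B} combined with $\|B^\T\Sigma B\|_\op \lesssim 1/L$ and a Cauchy-Schwarz step yields $\|B^\T\wh\Sigma(\wh B - B)\|_\op \lesssim \omega_2/\sqrt L$; and finally $\|(\wh B - B)^\T\wh\Sigma(\wh B - B)\|_\op \le \|n^{-1/2}\bX(\wh B - B)\|_F^2 \le \omega_2^2$. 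Together these give $\|\wh H - H\|_\op \lesssim \sqrt{\log(n)/(nL)} + \omega_2/\sqrt L$, which under the assumptions $L\log n \le c n$ and $\omega_2\sqrt L \le c$ is strictly below $\lambda_{\min}(H)/2 \asymp 1/L$; a Neumann series argument then shows $\wh H$ invertible with $\|\wh H^{-1}\|_\op \lesssim L$.

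Third, the algebraic identity $\wh B^* - B^* = (\wh B - B)\wh H^{-1} + B^*(H - \wh H)\wh H^{-1}$ governs both summands of $\cQ$. For the first summand, the triangle inequality and submultiplicativity give $\|\sw^{1/2}(\wh B^* - B^*)\|_F \lesssim \omega_2 L + L^{3/2}\|\wh H - H\|_\op$; squaring, multiplying by $(\Delta_\i + \log n) \asymp \log n$, and plugging in the $\|\wh H - H\|_\op$ estimate recovers $\omega_2^2 L^2 \log n + L^2 \log^2(n)/n$ up to absorbed lower-order terms. For the second summand, I would split $\wh B^*_\ell - B^*_\ell$ into its two contributions. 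The $B^*(H - \wh H)\wh H^{-1}$ piece is handled via the chi-square bound $\|(\wh\mu_\ell - \mu_\ell)^\T B^*\|_2^2 \lesssim L^2/n$ (its conditional mean is $\tr(\bM^\T\sw^{-1}\bM)/n_\ell$) combined with $\|\wh H^{-1}\|_F^2 \lesssim L^3$ and the bound on $\|\wh H - H\|_\op$, giving a contribution absorbed by the earlier terms. The $(\wh B - B)\wh H^{-1}$ piece, which produces the $\omega_1^2$ term, uses the H\"older-type inequality $|(\wh\mu_\ell - \mu_\ell)^\T (\wh B - B) c_\ell| \le \|\wh\mu_\ell - \mu_\ell\|_\i \cdot \|\wh B - B\|_{1,2}\cdot \|c_\ell\|_2$ together with the Gaussian maximal inequality $\max_\ell \|\wh\mu_\ell - \mu_\ell\|_\i \lesssim \sqrt{L\log(n\vee p)/n}$. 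Attaining the sharp $L^3$ (not $L^4$) scaling in the sum over $\ell$ is the principal technical hurdle; I expect this requires trading the naive $\|\wh H^{-1}\|_\op^2$ bound for $\sum_\ell \|\wh H^{-1}_{\cdot\ell}\|_2^2 = \|\wh H^{-1}\|_F^2 \lesssim L^3$, combined with a Bernstein-type sum-of-squares bound that exploits the conditional independence of the $\wh\mu_\ell$'s across classes given $\bY$ rather than a union bound over $\ell$.
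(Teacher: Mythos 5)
Your Phases 1 and 2, and your treatment of the Frobenius part of $\cQ$, coincide with the paper's own proof: the same closed form $H^{-1}=D_\pi^{-1}+\bM^\T\sw^{-1}\bM$, the same expansion of $\wh H-H$ with the same four operator-norm bounds giving $\|\wh H-H\|_\op\lesssim\sqrt{\log(n)/(nL)}+\omega_2/\sqrt L$, the same Weyl-type invertibility argument with $\|\wh H^{-1}\|_\op\lesssim L$, and the same identity $\wh B^*-B^*=(\wh B-B)\wh H^{-1}+B^*(H-\wh H)\wh H^{-1}$ leading to $\omega_2^2L^2\log n+L^2\log^2(n)/n$ for the first summand. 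One minor caveat: the rows of $\bX$ are drawn from a Gaussian mixture, so your appeal to generic sub-Gaussian covariance concentration for $B^\T X$ needs the extra observation that the class-mean shifts satisfy $\max_k(u^\T B^\T\mu_k)^2\lesssim u^\T B^\T\Sigma B\,u$ under $\Delta_\i\asymp 1$; the paper instead decomposes $\wh\Sigma-\Sigma$ into a within-class part, a $\wh\pi-\pi$ part and mean cross terms. This is presentational rather than substantive.

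The genuine gap is in the $\omega_1^2$ term. Your H\"older step $|(\wh\mu_\ell-\mu_\ell)^\T(\wh B-B)\wh H^{-1}_\ell|\le\|\wh\mu_\ell-\mu_\ell\|_\i\,\|\wh B-B\|_{1,2}\,\|\wh H^{-1}_\ell\|_2$, summed over $\ell$, yields $\max_\ell\|\wh\mu_\ell-\mu_\ell\|_\i^2\,\omega_1^2\,\|\wh H^{-1}\|_F^2\lesssim (L\log(n\vee p)/n)\,\omega_1^2\,L^3=\omega_1^2L^4\log(n\vee p)/n$, one power of $L$ above the claim, and the remedy you sketch points in the wrong direction: $\|\wh H^{-1}\|_F^2\lesssim L^3$ is exactly what creates the $L^4$, and a Bernstein/conditional-independence argument across classes is blocked because $\wh B$ and $\wh H^{-1}$ are functions of the same training sample as the $\wh\mu_\ell$, so the summands are neither independent nor centered. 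The paper closes this purely deterministically: it bounds $\sum_\ell[(\wh\mu_\ell-\mu_\ell)^\T(\wh B-B)\wh H^{-1}_\ell]^2\le\max_\ell\|\wh\mu_\ell-\mu_\ell\|_\i^2\sum_\ell\|(\wh B-B)\wh H^{-1}_\ell(\wh H^{-1}_\ell)^\T(\wh B-B)^\T\|_1$, writes each $\ell_1$ norm as a double sum over feature pairs $(i,j)$, and applies Cauchy--Schwarz over $\ell$ inside that sum to obtain $\sum_\ell\|\cdot\|_1\le\|\wh H^{-1}\|_\op^2\|\wh B-B\|_{1,2}^2\lesssim L^2\omega_1^2$ --- operator norm, not Frobenius norm, of $\wh H^{-1}$ --- which gives exactly $\omega_1^2L^3\log(n\vee p)/n$. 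A secondary bookkeeping issue: for the $B^*(H-\wh H)\wh H^{-1}$ piece, your bound $\max_\ell\|(\wh\mu_\ell-\mu_\ell)^\T B^*\|_2^2\lesssim L^2/n$ multiplied by $\delta^2\|\wh H^{-1}\|_F^2\lesssim\delta^2L^3$ gives $\delta^2L^5/n$, and its claimed absorption into the stated rate needs roughly $L^2\lesssim n$, which does not follow from $L\log n\le cn$ alone; you should verify this absorption explicitly rather than assert it.
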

	\begin{proof}
		See \cref{app_proof_thm_Q}.
	\end{proof}
	
	\cref{thm_Q} relates $\cQ(\wh B^*)$ to the estimation error of $\wh B$ only in  both the Frobenius norm and the $\ell_1/\ell_2$-norm. 
	In the bound of \eqref{bd_Q_all}, the second term on the right-hand-side is usually the leading term and its rate depends on specific choice of $\wh B$.  This allows us to exploit certain properties, such as sparsity or low-rank, of the regression matrix $B$. We provide concrete examples in \cref{sec_theory_lasso} for the lasso estimator and in \cref{sec_theory_rr} for the reduced-rank estimator.
	
	Condition $\omega_2\sqrt{L} \le c$ ensures that, with high probability, $\lambda_K(\wh H) \ge \lambda_K(H)/2$ holds so that  $\wh H$ is invertible, as $H$ is invertible by \cref{lem_key}.  Furthermore, the bound in \eqref{bd_Q_all} suggests that the condition $\omega_2\sqrt{L} \le c$ is also needed for consistent classification. 
	 

	\subsection{Estimation of $B$ with entry-wise $\ell_1$-regularization}\label{sec_theory_lasso}

    In this section we focus on the case where the columns of $B$ are (potentially) sparse and allow for different sparsity patterns across its columns. Therefore, for some positive integer $s \le p$, we consider the following parameter space of $B$:
    \begin{equation*}
        \Theta_s := \left\{
        B \in \RR^{p\times L}: \max_{\ell \in [L]}\|B_{\cdot \ell}\|_0 \le s
        \right\}.
    \end{equation*} 
    For ease of presentation, we simply use the largest number of non-zero entries among all columns of $B$. However, our analysis can be easily extended to  $\|B_{\cdot \ell}\|_0 \le s_\ell$ by allowing different $s_\ell$ for $\ell \in [L]$. Our analysis also needs the following mild regularity condition on the matrix $\bM$ of conditional means. 
    \begin{ass}\label{ass_mu}
        There exists an absolute constant $C <\i$ such that $\|\bM\|_\i \le C$.
    \end{ass}   
 
    The following theorem establishes explicit rates of $\omega_2$ and $\omega_1$ in \cref{ass_B}  
    for the regression matrix estimator $\wh B$ in \eqref{new_procedure} using the lasso-penalty in \eqref{pen:lasso}. 
 
    \begin{theorem}\label{thm_B_lasso}
        Under model \eqref{model} with Assumptions \ref{ass_pi}, \ref{ass_cond_separation}, \ref{ass_sw} and \ref{ass_mu},  
        assume that there exists  some sufficiently small constant $c>0$ such that
  $B \in \Theta_s$ for some $1\le s \le p$ with
        $
            (s\vee L)\log(n\vee p)\le c ~ n.
        $ 
    Let the estimator 
     $\wh B$ be obtained from \eqref{new_procedure} with any $\lambda$ satisfying
            \begin{equation}\label{rate_lbd}
                \lambda \ge C \sqrt{\|\sw\|_\i + \|\bM\|_\i^2}\sqrt{\log(n\vee p) \over  n L}
            \end{equation}
         and some finite, large enough constant $C>0$.  
    Then, with probability at least $1-3n^{-1}$, \cref{ass_B} holds for 
    \[
        \omega_2 \asymp  
                \lambda\sqrt{s L},\qquad \omega_1 \asymp 
                \lambda   s L.
    \] 
    \end{theorem}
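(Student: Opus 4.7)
The plan is to exploit two structural features of the problem: (i) the entry-wise $\ell_1$ penalty $\sum_{j=1}^{p}\|B_{j\cdot}\|_1 = \sum_{\ell=1}^L \|B_\ell\|_1$ separates column by column, so \eqref{new_procedure} reduces to $L$ independent lasso problems
\[
\wh B_\ell \;=\; \argmin_{\beta \in \RR^p}\; n^{-1}\|\bY_\ell - \bX\beta\|_2^2 + \lambda\|\beta\|_1, \qquad \ell \in [L];
\]
and (ii) because $Y_{i,\ell}\in\{0,1\}$ and $\pi_\ell \asymp 1/L$ under \cref{ass_pi}, the residual $\eta_{i,\ell}:= Y_{i,\ell} - B_\ell^\T X_i$ has variance bounded by $\pi_\ell \asymp 1/L$ (by the best-linear-predictor identity $B = \Sigma^{-1}\sxy$, which in particular gives $\EE[X_i\eta_{i,\ell}]=0$). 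This $1/L$ saving is the source of the $1/\sqrt{L}$ factor in the prescribed $\lambda$ and is what ultimately yields the stated rates for $\omega_1,\omega_2$.

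The core probabilistic step is to show that with probability $\ge 1-n^{-1}$,
\[
\max_{\ell\in[L]}\bigl\|\bX^\T \eta_\ell/n\bigr\|_\infty \;\lesssim\; \sqrt{(\|\sw\|_\infty + \|\bM\|_\infty^2)\log(n\vee p)/(nL)}.
\]
Conditionally on $Y_i$, the coordinate $X_{ij}$ is Gaussian with variance bounded by $\|\sw\|_\infty$ and mean bounded by $\|\bM\|_\infty$, so $X_{ij}$ is unconditionally sub-Gaussian with parameter $\lesssim \sqrt{\|\sw\|_\infty+\|\bM\|_\infty^2}$ (using \cref{ass_mu}). Since $\eta_{i,\ell}$ is bounded between $-B_\ell^\T X_i$ and $1-B_\ell^\T X_i$ and has variance $\le \pi_\ell \asymp 1/L$, the product $X_{ij}\eta_{i,\ell}$ is sub-exponential with variance parameter of order $(\|\sw\|_\infty+\|\bM\|_\infty^2)/L$. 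Bernstein's inequality followed by a union bound over $(j,\ell)\in[p]\times[L]$ yields the claim, and the rate \eqref{rate_lbd} is calibrated precisely so that $\lambda$ dominates this score term by a large constant.

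The second ingredient is a restricted eigenvalue (RE) condition for $n^{-1}\bX^\T\bX$. Since $\Sigma = \sw + \bM D_\pi\bM^\T \succeq \sw$, \cref{ass_sw} gives $\lambda_p(\Sigma) \ge \lambda_p(\sw) \ge c>0$, and under the scaling $s\log(n\vee p) \lesssim n$, standard concentration (e.g., Rudelson--Zhou) yields, on an event of probability $\ge 1-n^{-1}$, both an RE condition with constant of order $\lambda_p(\sw)$ on every cone $\cC_\ell := \{v: \|v_{S_\ell^c}\|_1 \le 3\|v_{S_\ell}\|_1\}$ with $S_\ell = \supp(B_\ell)$, and the two-sided bound $\|\sw^{1/2}v\|_2^2 \le \|\Sigma^{1/2}v\|_2^2 \lesssim n^{-1}\|\bX v\|_2^2$ for all $v \in \cC_\ell$. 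This converts prediction-norm bounds to $\sw$-Mahalanobis bounds. On the intersection of the two events, the usual lasso basic inequality shows $\wh B_\ell - B_\ell \in \cC_\ell$ and delivers, simultaneously for all $\ell$,
\[
\|\wh B_\ell - B_\ell\|_1 \lesssim s\lambda,\quad n^{-1}\|\bX(\wh B_\ell - B_\ell)\|_2^2 \lesssim s\lambda^2,\quad \|\sw^{1/2}(\wh B_\ell - B_\ell)\|_2^2 \lesssim s\lambda^2.
\]
Summing over $\ell$ gives $\|\wh B - B\|_{1,2} \le \|\wh B - B\|_1 \lesssim sL\lambda$, hence $\omega_1 \asymp sL\lambda$, and $n^{-1}\|\bX(\wh B - B)\|_F^2 \lesssim sL\lambda^2$ along with $\|\sw^{1/2}(\wh B - B)\|_F^2 \lesssim sL\lambda^2$, hence $\omega_2 \asymp \lambda\sqrt{sL}$, matching the stated rates.

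The principal obstacle is the score-term bound: $\eta_\ell$ is only sub-Gaussian (not centered Gaussian), and without the careful $\Var(\eta_{i,\ell}) \le \pi_\ell \asymp 1/L$ bookkeeping one loses the $1/\sqrt{L}$ factor in $\lambda$ and fails to match the target rates. Everything else (decoupling, RE, and the lasso oracle bound) is routine once the correct $\lambda$ is in hand.
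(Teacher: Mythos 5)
Your architecture is the same as the paper's: decouple \eqref{new_procedure} into $L$ column-wise lasso problems, calibrate $\lambda$ by a score-term bound whose $1/\sqrt{L}$ gain comes from the small variance of the residual $\bY_\ell-\bX B_\ell$, establish a restricted-eigenvalue property for $\wh\Sigma=n^{-1}\bX^\T\bX$, run the basic inequality plus cone argument per column, and sum over $\ell$ to get $\omega_2\asymp\lambda\sqrt{sL}$ and $\omega_1\asymp\lambda sL$. That deterministic part is fine and matches the paper's proof. However, the two probabilistic inputs you dismiss as routine are exactly the two steps the paper identifies as the main difficulties, and your justifications for them do not hold as written.

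For the score term, your residual $\eta_{i,\ell}=Y_{i,\ell}-B_\ell^\T X_i$ is not bounded (saying it lies "between $-B_\ell^\T X_i$ and $1-B_\ell^\T X_i$" is vacuous since those endpoints are Gaussian), so the bounded-variable Bernstein does not apply directly; and the phrase "sub-exponential with variance parameter of order $(\|\sw\|_\infty+\|\bM\|_\infty^2)/L$" cannot be justified by Orlicz-norm calculus: the Bernoulli component $Y_{i,\ell}$ has $\psi_2$-norm of order $1/\sqrt{\log(1/\pi_\ell)}$, not $\sqrt{\pi_\ell}$, so the standard $\psi_1$-Bernstein for $X_{ij}\eta_{i,\ell}$ only buys a $1/\sqrt{\log L}$ factor, not $1/\sqrt{L}$. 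To make your route work you would need a Bernstein inequality with separate variance and scale parameters and verify the moment condition $\EE|X_{ij}\eta_{i,\ell}|^k\lesssim k!\,\sigma^2 b^{k-2}$ with $\sigma^2\asymp(\|\sw\|_\infty+\|\bM\|_\infty^2)/L$ and $b\asymp\sqrt{\|\sw\|_\infty+\|\bM\|_\infty^2}$, which in turn requires the structural facts $B_\ell^\T\Sigma B_\ell\le\pi_\ell$, $|\mu_k^\T B_\ell|\le 1$, etc. (\cref{lem_B_fact}); only then is the $b\log(n\vee p)/n$ term absorbed using $(s\vee L)\log(n\vee p)\le cn$. The paper sidesteps this by splitting $n^{-1}\bX^\T(\bY-\bX B)$ into $n^{-1}\bX^\T\bY-\sxy$ and $(\wh\Sigma-\Sigma)B$ and treating the Bernoulli piece with the bounded Bernstein and the Gaussian piece conditionally (\cref{lem_XE}, \cref{lem_Sigma_hat}, \cref{lem_pi_sum}). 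For the RE step, "standard concentration (e.g., Rudelson--Zhou)" is not directly applicable: the rows of $\bX$ are a Gaussian mixture, and the relative sub-Gaussian constant is not bounded — for an $s$-sparse unit vector $v$, $\max_k(v^\T\mu_k)^2$ can exceed $v^\T\bM D_\pi\bM^\T v$ by a factor of order $1/\pi_{\min}\asymp L$ — so off-the-shelf sub-Gaussian RE results either fail to apply or cost extra powers of $L$ in the sample-size requirement. The paper instead proves a bespoke uniform bound of $v^\T(\wh\Sigma-\Sigma)v$ over $\lfloor Cs\rfloor$-sparse vectors by conditioning on the labels and using Bernstein for $\wh\pi_k-\pi_k$ (\cref{lem_Sigma_hat_sup}), and only then invokes the reduction argument of Rudelson--Zhou to extend to the cone. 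Supplying these two lemmas (or equivalents) is the actual content of the theorem; with them in hand, the remainder of your argument coincides with the paper's.
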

    \begin{proof}
        See \cref{app_sec_proof_thm_B_lasso}.
    \end{proof}

    Our proof is in fact based on a weaker condition than  \cref{ass_sw} by replacing $\lambda_p(\sw)$ with the Restricted Eigenvalue condition (RE) on $\sw$ (see \cref{ass_RE} in   \cref {app_sec_proof_thm_B_lasso} of the supplement). 
    The  two main difficulties in our proofs are
    (a) to control $\max_{j\in[p]} \|(\bY - \bX B)^\T \bX \be_j\|_\i $, and
    (b) to bound from below the restricted eigenvalue of   $\wh \Sigma=n^{-1} \bX^\T \bX$. 
     For (a), as pointed out in \citet{Irina}, the difficulty in this model is elevated relative to the existing analysis in sparse linear regression models due to the fact that $\bY$ is not a linear model in $\bX$. We establish a sharp control of $\max_{j\in[p]} \|(\bY - \bX B)^\T\bX\be_j\|_\i$ in \cref{lem_XE}. For (b), the existing result in \citet{RudelsonZhou} is not readily applicable as rows of $\bX$ are generated from a mixture of Gaussians. In Lemma~\ref{lem_Sigma_hat_sup} we establish a uniform bound of $v^\T (\wh \Sigma - \Sigma) v$ over $s$-sparse vectors $v$ which, in conjunction with the reduction arguments in \citet{RudelsonZhou}, proves (b). Our analysis allows the number of classes $L$ to grow with the sample size $n$
    
         \citet{Irina}  considers  a group-lasso regularized regression similar to \eqref{new_procedure} and derives the bound of $\max_{j\in[p]} \| (\bY' - \bX B')^\T\bX\be_j\|_2$ for some different response matrix $\bY'$ and different target matrix $B'$. Their proof is different and in particular requires that $L$ is fixed, independent of $n$.  

    In the rest of the paper, we assume that $\lambda$ is chosen as the order specified in \eqref{rate_lbd}.  Although both $\|\sw\|_\i$ and $\|M\|_\i$ in \eqref{rate_lbd} could, in principal, be consistently estimated, we recommend,
    on the basis of our regression formulation, to select $\lambda$ in practice simply  via cross-validation, for instance, via the built-in function \textsf{cv.glmnet} of the R-package \textsf{glmnet}.\\

    As an immediate corollary of \cref{cor_g}, \cref{thm_Q} and \cref{thm_B_lasso}, we have the following upper bound of the misclassification rate of $\wh g$ based on the entry-wise lasso estimator with $\lambda$ chosen as the rate in \eqref{rate_lbd}.
    
    \begin{cor}\label{cor_lasso}
        Under the conditions in \cref{thm_B_lasso}, assume that 
		$\Delta  \asymp 1$.
        Then
    	\[
            \EE_{\bD}\left[\cR(\wh g)\right] ~ \lesssim ~  \min\left\{ {s L^2 \over n}\log^2(n\vee p), ~ 1\right\}.
    	\]
    \end{cor}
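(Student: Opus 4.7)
The plan is to chain \cref{cor_g}, \cref{thm_Q} and \cref{thm_B_lasso}. First I would invoke \cref{cor_g}: since $\Delta_\i \asymp 1$, bounding $\EE_\bD[\cR(\wh g)]$ reduces to bounding $\EE_\bD[\min\{L^2\log(n)/n + \cQ(\wh B^*),\,1\}]$. The deterministic term $L^2\log(n)/n$ is already dominated by the target rate $sL^2\log^2(n\vee p)/n$, so the entire task collapses to controlling $\cQ(\wh B^*)$ on a good event and letting the $\min$ with $1$ absorb the complement.

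Next I would apply \cref{thm_B_lasso}: with the choice $\lambda \asymp \sqrt{\log(n\vee p)/(nL)}$, on an event $E$ of probability at least $1-3/n$, \cref{ass_B} holds with $\omega_2 \asymp \lambda\sqrt{sL}$ and $\omega_1 \asymp \lambda sL$. Feeding these into \cref{thm_Q} and simplifying gives, on $E$,
\[
\cQ(\wh B^*) \ \lesssim\ \frac{L^2\log^2(n)}{n} \;+\; \frac{sL^2\log(n)\log(n\vee p)}{n} \;+\; \frac{s^2L^4\log^2(n\vee p)}{n^2}.
\]
The first two summands are clearly at most a constant times $sL^2\log^2(n\vee p)/n$.

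The only delicate point, and the main obstacle, is the third summand together with the side condition $\omega_2\sqrt L \le c$ required by \cref{thm_Q}; both become critical when $sL\log(n\vee p)$ approaches $n$. I would handle them by a single case analysis. Write $s^2L^4\log^2(n\vee p)/n^2 = (sL^2/n)\cdot(sL^2\log^2(n\vee p)/n)$: if $sL^2\log^2(n\vee p)/n \ge 1$, the $\min$ with $1$ in \cref{cor_g} already matches the claim; otherwise $sL^2/n < 1$ (using $\log(n\vee p) \ge 1$) and the third summand is subsumed by the target. The same reasoning covers the case when the condition $\omega_2\sqrt L \asymp \sqrt{sL\log(n\vee p)/n} \le c$ fails: this forces $sL\log(n\vee p) \gtrsim n$, hence $sL^2\log^2(n\vee p)/n \gtrsim 1$, and the trivial bound $\cR(\wh g)\le 1$ is of the desired form.

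Finally, taking expectations, the failure event $E^c$ contributes at most $O(1/n)$ to $\EE_\bD[\min\{\cdot,1\}]$, which is dominated by $sL^2\log^2(n\vee p)/n$ since $s,L,\log(n\vee p) \ge 1$. Putting the good-event bound and the bad-event contribution together yields the advertised rate. Beyond the truncation argument above, the rest of the proof is routine bookkeeping on high-probability events.
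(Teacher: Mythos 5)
Your proposal is correct and follows exactly the route the paper intends: the paper treats \cref{cor_lasso} as an immediate consequence of chaining \cref{cor_g}, \cref{thm_Q} and \cref{thm_B_lasso} with $\lambda$ at the rate \eqref{rate_lbd}, which is precisely your argument. Your case analysis absorbing the cross term $s^2L^4\log^2(n\vee p)/n^2$ and the condition $\omega_2\sqrt{L}\le c$ into the $\min\{\cdot,1\}$ truncation is careful bookkeeping that the paper leaves implicit, and it is valid.
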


    Our results allow both the sparsity level $s$ and the number of classes $L$ to grow with the sample size $n$. The existing literature \citep{caizhang2019,Irina} that analyze the excess risk under model \eqref{model}
    only consider the case $L \asymp 1$, while \citet{AP19} does allow $s$, $L$ and $n$ to grow, but studies   the risk $\PP\{Y\ne g(X)\}$ instead of the excess risk $\cR(g)$. Guarantees on the excess risk are stronger since they, together with the Bayes error, which is oftentimes easy to calculate, imply guarantees of the risk. 

    \cref{cor_lasso}  focuses on the statistically most interesting case 
    $\Dt\asymp 1$ where the pairwise separation between different classes is of constant order.
    Otherwise, if $\Dt\to \i$ as $p\to\i$, 
    the classification problem becomes much easier and we have super-fast rates of convergence (see, for instance, \citet{caizhang2019,BW2023}). Indeed, 
     using the arguments in the proof of \citet[Theorem 12]{BW2023}, our analysis can be modified easily  to show that 
    \[
        \EE_{\bD}\left[\cR(\wh g)\right]    ~ \lesssim ~  {sL^2\over n}\log^2(n\vee p) e^{-c\Delta} + n^{-C_0}
    \]
    for some arbitrary large constant $C_0>0$ and some constant $c>0$. The first term on the right in the above rate is exponentially fast in $\Delta$.

  We conclude this section with a comparison 
  between our penalized linear discriminant method
  and the penalized multinomial logistic regression
  approach in \cite{levy2023generalization,abramovich2021multiclass,lei2019data}.
  The main difference between both approaches is that logistic regression models the conditional distribution of the label $Y$ given the feature $X=x$ via
  \[ \log { \PP\{ Y=1\mid X=x\} \over \PP\{ Y=k\mid X=x\}} = \langle\beta_k^*, x\rangle + \beta_{0,k}^*
  \]
  while discriminant analysis postulates the distribution of the feature $X$  given the label $Y=k$.
This distinction complicates any comparison between both methods.  In particular, comparison of minimax lower bounds between the two methods is irrelevant. The sparse linear classifier $\wh g$ in
 \cite{abramovich2021multiclass} satisfies
   \begin{equation}\label{rate_sparse_logis_reg}
        \EE_{\bD}\left[\cR(\wh g)\right] ~  \lesssim ~ \sqrt{sL + s\log(e p/s) \over n},
   \end{equation} 
    when the  true coefficients $\beta_k^*$  are $s$-sparse ($s\le p$). Furthermore, \cite{abramovich2021multiclass} proves that this rate is minimax optimal in their sparse setting. 

 We emphasize that \cite{levy2023generalization,abramovich2021multiclass,lei2019data} assume that the directions $\beta_k^*$ are sparse.
 In contrast, we assume structure on the regression matrix $B$ (not the matrix $B^*$ of directions).
 In particular, if we assume that our regression matrix has $s$ entire rows equal to zero, then $B^*$ has $s$-sparse columns.
 
 In this setting, we observe that
 the “slow-rate” bound in \eqref{slow_rate} implies a risk bound of order $\sqrt{sL/n}$ for our procedure (up to logarithmic factors) and that this rate matches the rate in  \eqref{rate_sparse_logis_reg}.  Hence, the excess risk of our classifier is no worse than the bound in \eqref{rate_sparse_logis_reg} and is in fact faster,
 using the rate in \cref{cor_lasso},
 when either $sL^3 \ll n$ or $\Dt \gg \log(L)$. 
 [A reasonable assumption, given that we primarily focus on small to moderate values of $L$ in this paper.  Determination of  the optimal dependence on $L$ in the excess risk under our model remains an interesting open question, which we leave for future research.] 
 Finally, although \cite{abramovich2021multiclass} shows that the rate in \eqref{rate_sparse_logis_reg} can be improved under an additional margin condition that extends the one introduced by \cite{tsybakov2004optimal} in binary classification, a direct comparison with our risk bound is challenging, as verifying this margin condition under our model poses a difficult problem in its own right.

	\subsection{Estimation of $B$ via reduced-rank regression}\label{sec_theory_rr}

    We analyze the classifier that estimates the regression coefficient matrix $B$ by reduced-rank regression. 
    Specifically, let $\wh B$ be obtained from  \eqref{new_procedure} with $\pen(B) = \lambda~  \rank(B)$ for some tuning parameter $\lambda > 0$. The following theorem establishes the rates of $\omega_2$ and $\omega_1$ for which  \cref{ass_B} holds. Write $r = \rank(B)$.

   \begin{theorem}\label{thm_B_rr}
   		Under model \eqref{model} with Assumptions \ref{ass_pi},\ref{ass_cond_separation} and \ref{ass_sw}, assume 
   		$
   			 (p + L) \log(n) \le  c ~ n
   		$
   		for some sufficiently small constant $c>0$. 
   		For the estimator $\wh B$ in \eqref{new_procedure} with any
   		\begin{equation}\label{rate_lbd_rr}
   				\lambda \ge  C \sqrt{\|\sw\|_\op(1+\Delta)}\sqrt{(p+L)\log(n) \over  n L}
   		\end{equation}
        and some large constant $C>0$, 
   		 with probability at least $1-2n^{-1}$, \cref{ass_B} holds for 
   		\[
   			\omega_2 \asymp  \sqrt{\lambda r},\qquad \omega_1  \asymp  \sqrt{\lambda  r p}.
   		\]
   \end{theorem}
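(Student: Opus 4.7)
}

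The plan is to adapt the classical analysis of rank-penalized least squares (in the spirit of Bunea--She--Wegkamp) to the current setting, where the noise $E := \bY - \bX B$ has a non-standard mixture structure. The starting point is the basic inequality from the optimality of $\wh B$:
\[
    \tfrac{1}{n}\|\bY-\bX\wh B\|_F^2 + \lambda \rank(\wh B) \le \tfrac{1}{n}\|\bY - \bX B\|_F^2 + \lambda r.
\]
Setting $\Delta := \wh B - B$, rearranging gives
\[
    \tfrac{1}{n}\|\bX\Delta\|_F^2 \le \tfrac{2}{n}\langle E,\bX\Delta\rangle + \lambda\bigl(r-\rank(\wh B)\bigr).
\]
Because $\rank(\Delta) \le r+\rank(\wh B)$, trace duality and the nuclear-vs-Frobenius inequality give
\[
    \langle E,\bX\Delta\rangle \le \|\bX^\T E\|_{\op} \cdot \|\Delta\|_* \le \|\bX^\T E\|_{\op}\sqrt{r+\rank(\wh B)}\,\|\Delta\|_F.
\]

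The heart of the proof is a high-probability bound of the form
\[
    \tfrac{1}{n^2}\|\bX^\T E\|_{\op}^2 \,\lesssim\, \|\sw\|_{\op}(1+\Delta_\i)\,\tfrac{(p+L)\log(n)}{nL},
\]
which, combined with AM--GM and absorption of $\tfrac{1}{2n}\|\bX\Delta\|_F^2$ into the left side, yields $\tfrac{1}{n}\|\bX\Delta\|_F^2\lesssim \lambda r$ for $\lambda$ chosen as in \eqref{rate_lbd_rr}, while simultaneously forcing $\rank(\wh B)\lesssim r$. I would derive the equivalent bound for $\|\sw^{1/2}\Delta\|_F$ by restricting the concentration of $n^{-1}\bX^\T\bX - \Sigma$ to the low-dimensional column space of $\Delta$ (of dimension $\lesssim r$). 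Under \cref{ass_sw} and $(p+L)\log(n)\le cn$, $n^{-1}\bX^\T\bX$ is comparable to $\Sigma$ on this subspace, and since $\sw \preceq \Sigma$, this delivers $\|\sw^{1/2}\Delta\|_F^2 \lesssim \lambda r$, establishing the claimed $\omega_2\asymp \sqrt{\lambda r}$. The bound on $\omega_1$ is then essentially free from the low-rank structure: by Cauchy--Schwarz over rows,
\[
    \|\Delta\|_{1,2} = \sum_{j=1}^p \|\Delta_{j\cdot}\|_2 \le \sqrt{p}\,\|\Delta\|_F \lesssim \sqrt{p\lambda r},
\]
where I use \cref{ass_sw} to convert $\|\sw^{1/2}\Delta\|_F$ back to $\|\Delta\|_F$.

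The main obstacle I expect is the concentration of $\|\bX^\T E\|_{\op}$. Unlike standard reduced-rank regression, $E$ is not mean-zero Gaussian and is not independent of $\bX$: the rows of $\bY$ are one-hot indicators, and $X_i \mid Y_i = \be_\ell \sim \cN_p(\mu_\ell,\sw)$, so each row of $E$ takes the form $\be_{\ell_i} - B^\T X_i$ with mean $\be_{\ell_i} - B^\T \mu_{\ell_i}$ and covariance $B^\T\sw B$. I would handle this by conditioning on $\bY$: given the vector of class labels, the rows of $\bX$ become independent Gaussians with known means $\mu_{\ell_i}$, so $\bX^\T E$ can be decomposed into a centered Gaussian quadratic-type piece plus a deterministic mean piece. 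The Gaussian part is controlled by a standard $\epsilon$-net argument over the unit spheres in $\RR^p$ and $\RR^L$, giving the $\sqrt{n(p+L)}$ factor through a Hanson--Wright / Gaussian concentration bound, while the factor $\|\sw\|_{\op}(1+\Delta_\i)/L$ arises from the per-coordinate variance structure of $E$ combined with \cref{ass_pi} (which gives $\pi_\ell \asymp 1/L$); a second $\log(n)$ factor comes from the union over labels to control the empirical class proportions via \cref{lem_pi}. A parallel concentration step for $n^{-1}\bX^\T\bX$ restricted to low-dimensional subspaces (obtainable by a net argument on the unit sphere, using $(p+L)\log n\le cn$) will complete the argument and convert prediction norm to the $\sw^{1/2}$-weighted Frobenius norm as required in \cref{ass_B}.
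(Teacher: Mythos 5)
Your proposal is correct and follows essentially the same route as the paper: the same basic inequality, trace duality with $\|\Delta\|_*\le\sqrt{r+\rank(\wh B)}\,\|\Delta\|_F$, AM--GM absorption on the event that $\lambda_p(\wh\Sigma)$ is bounded below (guaranteed by $(p+L)\log(n)\le cn$), and the same key concentration bound $n^{-1}\|\bX^\T(\bY-\bX B)\|_\op\lesssim\sqrt{\|\sw\|_\op(1+\Delta_\i)(p+L)\log(n)/(nL)}$, which the paper also proves by conditioning on $\bY$, splitting into a Gaussian piece and a class-proportion piece, and using $\epsilon$-nets (its Lemma on $\|\bX^\T E\|_\op$ splits as $(\wh\Sigma-\Sigma)B$ plus $n^{-1}\bX^\T\bY-\bM D_\pi$, a cosmetic variant of your decomposition). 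The final conversions to $\|\sw^{1/2}(\wh B-B)\|_F$ and $\|\wh B-B\|_{1,2}\le\sqrt{p}\,\|\wh B-B\|_F$ are likewise identical to the paper's.
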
 
	\begin{proof}
		The proof is deferred to \cref{app_sec_proof_thm_B_rr}.
	\end{proof}

	For the reduced-rank estimator, establishing its in-sample prediction risk can be done without any condition on the design matrix (see, for instance, \cite{bunea2011,giraud2011}).  However, since \cref{ass_B} is also related with the out-of-sample prediction risk, we need the smallest eigenvalue of the Gram matrix $\wh \Sigma$ to be bounded away from zero. The inequality $(p+L)\log(n) \le cn$ is assumed for this purpose. 
     In case  $p$ is large, it is recommended that an additional ridge penalty be added in $\text{pen}(B)$ to alleviate the singularity issue of $\wh \Sigma$ \citep{mukherjee2011reduced}.  Another main ingredient of our proof of \cref{thm_B_rr} is to control $\|\bX^\T(\bY - \bX B)\|_\op$ in \cref{lem_XE_op}. It requires non-standard analysis for the same reason mentioned before that $\bY$ is not linearly related with $\bX$.

   Combining \cref{cor_g}, \cref{thm_Q} and \cref{thm_B_rr} immediately yields the following guarantee on the excess risk of the classifier using the reduced-rank estimator of $B$. 
   
   \begin{cor}\label{cor_rr}
   Under model \eqref{model} with Assumptions \ref{ass_pi} and \ref{ass_cond_separation}, assume 
   		$
   			 (p + L) \log(n) \le  c ~ n
   		$
   		for some sufficiently small constant $c>0$. Further assume $\Delta \asymp 1$ and $\lambda_p(\sw) \asymp \lambda_1(\sw) \asymp 1$.   
        Then  
   		\[
       		\EE_{\bD}\left[\cR(\wh g)\right]  ~ \lesssim ~  
   		   \min\left\{{(p+L)  r L \over n}\log^2(n) , ~ 1\right\}.
   		\]
   \end{cor} 
    Note that for the $p\times L$ matrix $B$ with rank $r$, its effective number of parameters is exactly $(p+L)r$ \citep{IzenmanRR} which is much smaller than $pL$ when $r\ll (p\wedge L)$. \cref{cor_rr} suggests that the classifier using the reduced-rank estimator has benefit when $r$ is relatively small comparing to $p$ and $L$. 
    
     The same remark as the one at the end of \cref{sec_theory_lasso} applies, and our analysis can be extended to $\Delta \to \i$, where the rate in \cref{cor_rr} becomes exponentially fast in $\Delta$.  Comparing to  \cref{ass_sw},  a stronger condition of bounded eigenvalues of $\sw$ is assumed here to simplify the presentation.  The latter holds, for instance, when the features in $X$ exhibit weak dependence or are uncorrelated within each response class. 

     We  compare  our penalized linear discriminant method and the penalized multinomial logistic regression approach in \cite{levy2023generalization} in the low-rank setting
 when the rank of the true coefficient matrix is less than $r$.
  The low-rank classifier $\wh{g}$ in \cite{levy2023generalization} satisfies
    \begin{equation}\label{rate_lr_logis_reg}
        \EE_{\bD}\left[\cR(\wh g)\right] ~  \lesssim ~ \sqrt{r(L-1 + p) \over n},
   \end{equation} 
and
we 
observe that
 the “slow-rate” bound in \eqref{slow_rate} implies a risk bound of order $\sqrt{r(L+p)/n}$ for our procedure (up to logarithmic factors) and that this rate matches the rate in  \eqref{rate_lr_logis_reg}.
 Hence, the excess risk of our classifier is no worse than the bound in \eqref{rate_lr_logis_reg} and is in fact faster,
 using the rate in \cref{cor_rr},
 when $r(p+L)L^2 \ll n$. 

    \begin{remark}[Extension to sub-Gaussian conditional distributions]\label{rem_Gaussian}
        Most of our theoretical results, for instance, \cref{thm_G,thm_Q,thm_B_lasso,thm_B_rr},  do not require the Gaussianity of $X \mid Y$ in \eqref{model} and can be easily generalized to sub-Gaussian distributions. The Gaussianity plays a key role in deriving \cref{prop_g} which further leads to the fast rate of convergence of the excess risk bounds $\cO(sL^2/n)$ in \cref{cor_lasso} and $\cO((p+L)rL/n)$ in \cref{cor_rr}, modulo the logarithmic factors. When $X\mid Y$ is not Gaussian but sub-Gaussian, the excess risk $\cR(\wh g)$ is defined relative to $g^*$ in \eqref{def_G} rather than the Bayes rule. By adopting this notion, a straightforward modification of our proof leads to
        \[
             \cR(\wh g) ~ \le  ~  
            \max_{1\le \ell \le L} t_\ell  +  \sum_{\ell=1}^L \PP\left\{
                |\wh G_\ell(X) - G_\ell(X)| \ge t_\ell \mid \bD  
            \right\}.
        \] 
        Consequently the bounds of $\cR(\wh g)$ in \cref{cor_lasso} and \cref{cor_rr} converge in  slower rates as  $\cO( \sqrt{sL/n})$  and $\cO(\sqrt{(p+L)rL/n})$, respectively.  Finally, faster rates could be derived under certain margin conditions, such as  
        \[  
           \max_{\ell \in [L]} ~ \PP\left(0<\min_{k\in [L]\setminus \{\ell\}} G_k(X) - G_\ell(X) <2t \mid Y=\be_\ell\right) \le  C t^\alpha,\quad \text{for all }t\ge 0,
        \] 
        and for some $C\ge 0$ and $\alpha \ge 1$.
    \end{remark} 

\section{Simulation study}\label{sec_sims}
    In our simulation study, we evaluate our proposed procedure in \cref{sec_method_rule} that we term as Linear Discriminant Regularized Regression (LDRR). We also include the procedure in \cref{sec_method_low}  that is based on Fisher's discriminant rule after estimating $B$ (called LDRR-F).  For estimating $B$ from \eqref{new_procedure}, in the sparse scenarios of  \cref{sec_sims_sparse}, we consider both the lasso penalty \eqref{pen:lasso} (L1)  and the elastic net penalty (L1+L2), $\text{pen}(B) = \lambda \sum_{\ell=1}^L[\alpha \|B_{\cdot \ell}\|_1 + (1-\alpha)/2\|B_{\cdot \ell}\|_2^2]$, for some $\alpha \in [0,1]$ and $\lambda >0$ chosen by cross-validation from the R-package \textsf{glmnet}. In the low-rank scenarios of \cref{sec_sims_lowrank}, we consider both the reduced-rank penalty \eqref{pen:rr} (RR) and the reduced-rank penalty plus the ridge penalty (RR+L2), $\text{pen}(B) = \lambda [\alpha~ \rank(B) + (1-\alpha)/2 \|B\|_F^2]$, with some $\alpha \in [0,1]$ and $\lambda > 0$ chosen by cross-validation. 

    We compare our proposed methods with  the nearest shrunken centroids classifier (PAMR) of \citet{Tibshirani2002}, the shrunken centroids regularized discriminant analysis (RDA) method of \citet{guo2007regularized}, the $\ell_1$-penalized linear discriminant (PenalizedLDA) method of  \citet{Witten2011}, the multiclass sparse discriminant analysis (MSDA) method of \citet{mai2019multiclass} and the sparse multiclass discrimination with trace regularization (SLDTR) method of \citet{ahn2021trace}. These methods are available in  R-packages \textsf{pamr}, \textsf{rda}, \textsf{penalizedLDA}, \textsf{msda} and   Matlab codes for \textsf{SLDTR} and we use their default methods for selecting tuning parameters.  Finally,  the oracle procedure that uses the true values of $\bM$, $\Sigma_w$, and $\pi$ in computing the discriminant functions $G_k(x)$ in  \eqref{def_G} serves as our benchmark.  
 
    \subsection{A practical classification based on reduced dimensions}\label{sec_method_low}

    Since in practice it is often desirable to extract fewer discriminant directions than those in $B^*$, we discuss in this section an alternative procedure for this purpose, which estimates the LDA rule in \eqref{bayes_rule} by also using $\wh B$ from \eqref{new_procedure} but in a different way.

    We start with the fact (see  \cref{app_sec_proof_equiv} for a proof) that the rule in \eqref{bayes_rule} is equivalent to 
    \begin{equation}\label{Bayes_FDA_B_star}
         \argmin_{\ell \in [L]} ~(x- \mu_\ell)^\T B^* (B^{*\T}\sw B^*)^+ B^{*\T} (x-\mu_\ell)-2\log(\pi_\ell)  
    \end{equation}  which, by \cref{lem_key},  further equals to 
    \begin{equation}\label{Bayes_FDA_B}
        \argmin_{\ell \in [L]} ~(x- \mu_\ell)^\T B (B^\T\sw B)^+ B^\T (x-\mu_\ell) -2\log(\pi_\ell)  .
    \end{equation} 
    The formulation in \eqref{Bayes_FDA_B} has the following two-step interpretation:
    \begin{enumerate}
        \item [(1)] Transform the original feature $x\in \RR^p$ to $z:= B^\T x$, with the latter belonging to a subspace of $\RR^L$;
        \item[(2)] Perform linear discriminant analysis on the space of $z$.
    \end{enumerate}
    The second point follows by noticing that 
    $
        C_w := B^\T \sw B \in \RR^{L\times L}
    $ 
    is indeed the within-class covariance matrix of $Z=B^\T X$ under model \eqref{model}.
    
    Due to the fact that $C_w$ is rank deficient with $\rank(C_w) < L$ and for the purpose of extracting fewer discriminant directions, we adopt the equivalent formulation of \eqref{Bayes_FDA_B} from the Fisher's perspective in the sequel. Write the between-class covariance matrix of $Z = B^\T X$ as 
    $
        C_b := B^\T \bM D_\pi \bM^\T B.
    $
    For a given reduced dimension $1\le K < L$, we can find $K$ discriminant directions in the space of $Z$ by solving, for each $k\in[K]$, the optimization problem
     \begin{equation*} 
        \begin{split} 
         \alpha_k  = \argmax_{\alpha\in\RR^L} & ~ \alpha^\T  C_b  ~ \alpha \quad \text{ s.t.}  ~ \alpha^\T C_w  ~  \alpha = 1,  ~ \alpha^\T C_w  ~\alpha_i = 0, ~~ \forall ~ i< k.
        \end{split}
    \end{equation*}
    The resulting classification rule based on these $K$ discriminant directions is 
    \begin{equation}\label{Bayes_FDA_B_approx}
        \argmin_{\ell \in [L]} ~(x- \mu_\ell)^\T  B  \sum_{k=1}^K   \alpha_k \alpha_k^\T  B^\T (x-\mu_\ell)-2\log(\pi_\ell) .
    \end{equation} 
    In practice, after computing $\wh B$ from \eqref{new_procedure}, we propose to estimate $C_b$ and $C_w$ by 
    \begin{align*}
        &\wh C_b ~ = \wh B^\T \wh \bM   D_{\wh \pi} \wh \bM^\T \wh B  
                = {1\over n} \wh B^\T \bX^\T P_{\bY} \bX \wh B,\\
         & \wh C_w  =  {1\over n} \wh B^\T \bX^\T(\bI_n - P_{\bY})\bX \wh B,
    \end{align*}
    with $P_{\bY} = \bY(\bY^\T \bY)^{-1}\bY^\T$. Subsequently, the estimated discriminant directions 
     $\wh \alpha_k$, for $1\le k\le K$, are computed by solving 
    \begin{equation}\label{Bayes_FDA_alpha}
        \begin{split} 
         \wh\alpha_k  = \argmax_{\alpha\in\RR^L} & ~ \alpha^\T \wh C_b~ \alpha \quad \text{ s.t.}  ~ \alpha^\T \wh C_w ~ \alpha = 1,  ~ \alpha^\T \wh C_w ~ \wh \alpha_i = 0, ~~ \forall ~ i< k,
        \end{split}
    \end{equation} 
   By similar reasoning for solving the Fisher's discriminant analysis in \eqref{FDA_B}, we propose to solve the optimization problem  \eqref{Bayes_FDA_alpha} via eigen-decomposing  
   $
        [\wh C_w^+ ]^{1\over2}  \wh C_b  [\wh C_w^+]^{1\over2}.
   $
    The eigenvalues of the above matrix can also be used to determine the upper bound for $K$. For instance, $K$ should be chosen no greater than the rank of the above matrix. 
    \begin{remark}[Numerical stability of computing \eqref{Bayes_FDA_alpha}]
        The procedure in \eqref{Bayes_FDA_alpha} computes Fisher's discriminant directions $(\wh \alpha_1, \ldots, \wh \alpha_K)$ within the subspace $\text{span}(\bX \wh B) \subset \text{span}(\bX)$, whose dimension is strictly less than $L$. Consequently, as long as 
        \begin{equation}\label{rank}
            \rank((\bI_n - P_{\bY}) \bX \wh B) = \rank(\bX\wh B),
        \end{equation}
        we do not encounter numerical difficulties when solving \eqref{Bayes_FDA_alpha}, in stark contrast with the algorithmic issue of solving   \eqref{FDA_B} (or a penalized version thereof) in the original $p$-dimensional space. As pointed out by \citet{wu15}, the latter leads to meaningless solutions with the objective function equal to infinity. Finally, we remark that the requirement in \eqref{rank} is easy to satisfy as $\rank(\bX\wh B) \le L$ and $\rank(\bI_n-P_{\bY}) = n-L$ when we observe at least one data point per class. 
    \end{remark}

\subsection{Sparse scenarios}\label{sec_sims_sparse}

We generate the data from the following mean-shift scenario. A similar setting is also used in \citet{Witten2011}. Specifically, for any $\ell \in [L]$ and its corresponding mean vector $\mu_\ell \in \RR^p$, we sample $[\mu_\ell]_j$, for  $5(\ell-1)\leq j \leq 5\ell$, independently from  $N(0, 2^2)$ and set the other entries of $\mu_\ell$ to 0. Regarding the within-class covariance matrix, we set $\Sigma_w=\sigma^2 W$ where  $W$ is generated by independently sampling its diagonal elements from Uniform(1,3), and setting its off-diagonal elements as 
$W_{ij}=\sqrt{W_{ii}W_{jj}} \rho^{|i-j|}$, for all $i \neq j$.
The quantity  $\sigma>0$ controls the overall noise level while the coefficient $\rho \in [0,1]$ controls the within-class correlation among the features. 
We generate the class probabilities $ \pi = (\pi_1,\ldots, \pi_L)^\T$ as follows. For a given $\alpha \ge 0$, we set
    $
        \pi_\ell =  \nu_\ell^\alpha/(\sum_{i=1}^L \nu_i^\alpha)
    $
    for all $\ell\in[L]$. Here  $\nu_1,\ldots, \nu_L$ are independent draws from the Uniform$(0,1)$. The quantity $\alpha$ controls the imbalance of each class size: a larger $\alpha$ corresponds to more imbalanced class sizes. At the other extreme,
     $\alpha = 0$ corresponds to the balanced case $\pi = (1/L)1_L$.
In the following, we examine how the performance of all algorithms depends on different parameters by 
varying one at a time:
   the sample size   $n \in \{100, 300, 500, 700, 900\}$;  the feature dimension $p \in \{100, 300, 500, 700, 900\}$; the number of classes $L \in \{2, 5, 8, 11, 14\}$; the correlation coefficient $\rho \in \{0, 0.2, 0.4 , 0.6, 0.8\}$; the noise parameter $\sigma \in \{0.6, 0.9, 1.2, 1.5, 1.8\}$; the imbalance of the class probabilities  $\alpha \in \{0, 0.4, 0.8, 1.2, 1.6\}$.

For all settings, we fix $n = 300$, $p = 500$, $L=5$, $\rho = 0.6$, $\sigma = 1$ and $\alpha = 0$ when they are not varied. We use $500$ test data points to compute the misclassification errors of each procedure, and for each setting, we consider $50$ repetitions and report  their averaged misclassification errors in Figure \ref{fig:simulation}.

    From Figure \ref{fig:simulation}, it is clear that our procedures consistently outperform other methods across all settings. The performance of our algorithms improves as $n$ increases, $L$ decreases and $\sigma$ decreases. Moreover, neither the imbalance of the class probabilities nor the ambient dimension $p$ seems to affect the misclassification error of our algorithms. These findings are in line with our theory in \cref{sec_theory_lasso}. It is a bit surprising to see that
    the classification error of our benchmark decreases as the correlation coefficient $\rho$ keeps increasing after $0.4$. Nevertheless, 
    both our algorithms and RDA adapt to this situation. 
    Finally, we find that both LDRR and LDRR-F as well as their lasso and elastic net variants have very comparable performance in all scenarios. This is as expected as the data is simulated according to model \eqref{model}, although in practice the Fisher's discriminant analysis (LDRR-F) with the elastic net penalty could have more robust performance, as revealed in our real data analysis in the next section.

\begin{figure}[htbp]
    \centering
    \includegraphics[width=\linewidth]{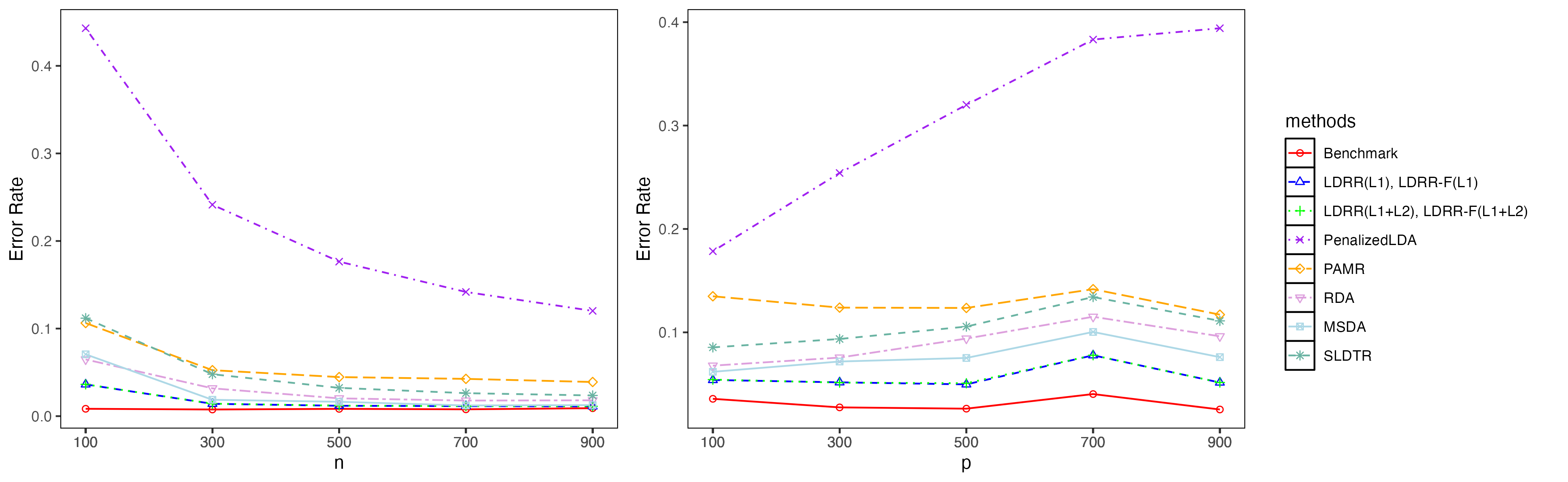}
    \includegraphics[width=\linewidth]{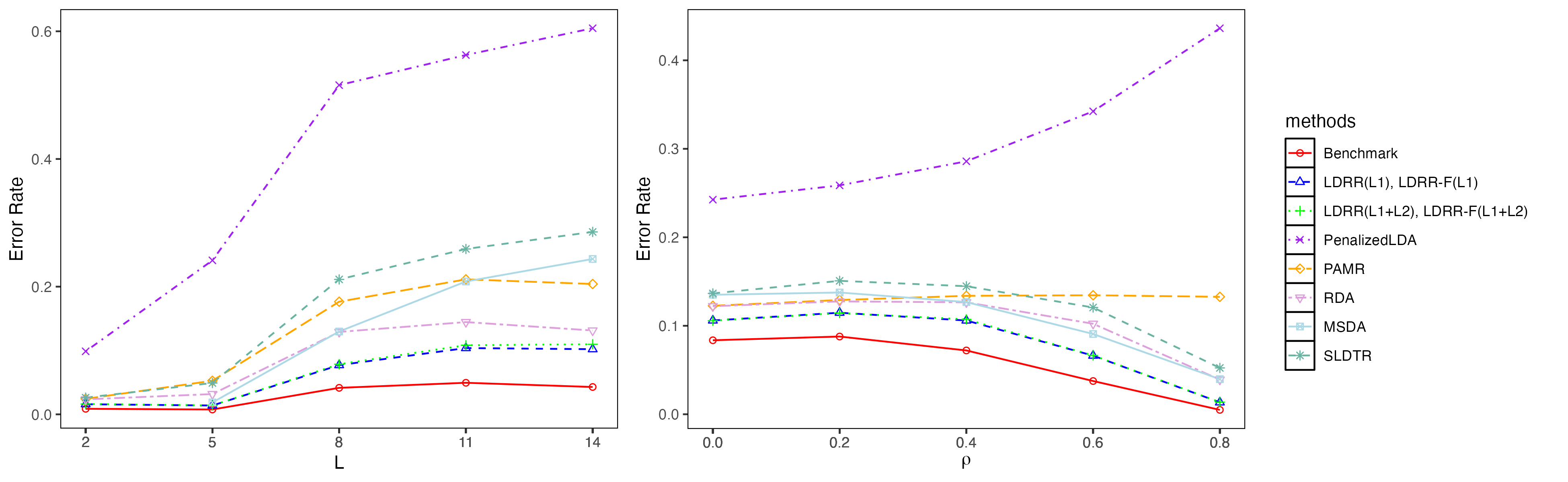}
    \includegraphics[width=\linewidth]{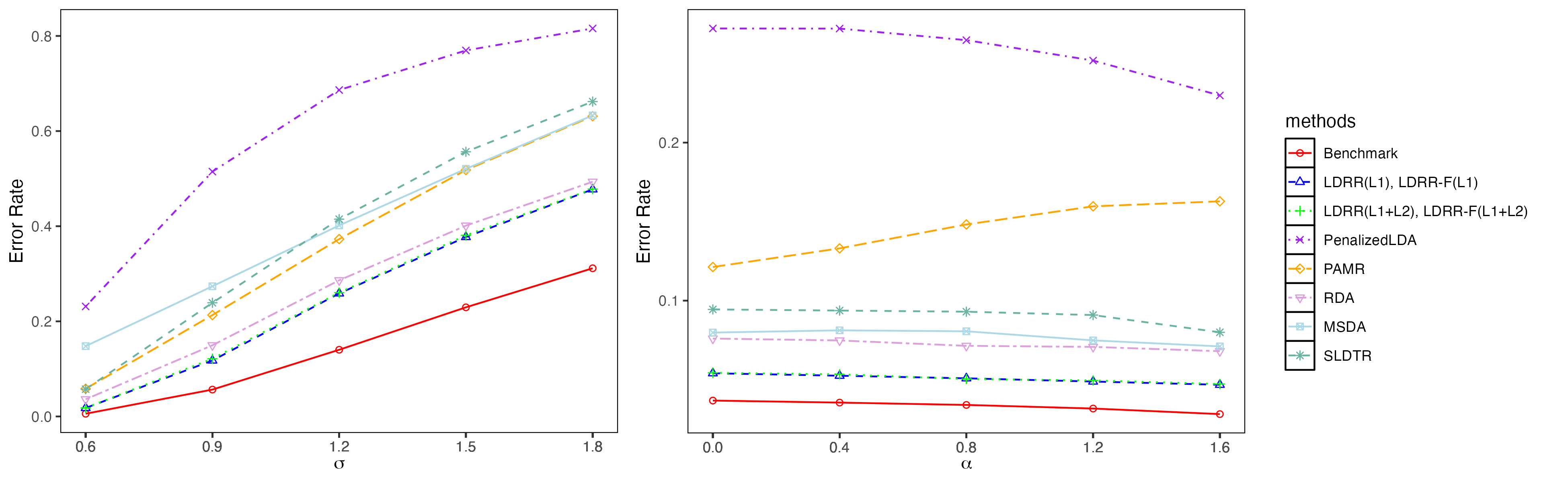}
    \caption{The averaged misclassification errors in sparse scenarios.}
    \label{fig:simulation}
\end{figure}

\subsection{Low-rank scenarios}\label{sec_sims_lowrank}

In this section we evaluate our proposed procedure when the rank of the conditional mean matrix $M$ is low and hence $B$ is also low-rank. 

\subsubsection{Low-rank model (1)}
We first consider a general within-class covariance matrix $\sw$ here, and defer to \cref{app_sec_sim} for the factor model mentioned in \cref{rem_BW} where $\sw$ has approximately low-rank.  
Specifically, let the within-class covariance matrix $[\sw]_{ij}=\rho^{|i-j|}$ with $\rho = 0.6$. To generate $M$, for some scalar $\eta>0$ and some randomly generated orthogonal matrix $A \in \RR^{p\times r}$ with $A^\T A =\bI_r$, we set $M = \eta A \alpha$ with entries of $\alpha \in \RR^{r\times L}$ generated i.i.d. from $N(0,32/r)$. The quantity $\eta$ controls the magnitude of separation between the conditional means. 
 
We examine how the performance of all algorithms depends on various parameters by varying one at a time: the number of classes $L \in \{5, 15, 25, 35, 45\}$;  the sample size $n \in \{300, 500, 700, 900, 1100\}$; the feature dimension $p \in \{100,200, 300,400, 500\}$; the separation scalar $\eta \in \{0.3, 0.6, 0.9, 1.2, 1.5\}$. 
For each setting, we fix $n = 1000, L = 10, r = 3, p = 100$ and $\eta=1$ when they are not varied. We set $n=500$ when we vary $\eta$. 
Figure \ref{fig:simulation2} depicts the misclassification errors of each algorithm averaged over 50 repetitions.

From Figure \ref{fig:simulation2} we see that the proposed approaches of using the reduced-rank penalty generally outperform the other methods. As $L$ gets larger with $r$ fixed, the benefit of using reduced-rank penalty becomes more visible, in line with \cref{cor_rr}. When the dimension ($p$) increases and gets closer to the sample size, the performance of LDRR(RR) deteriorates quickly while LDRR(RR+L2) still performs the best. For relative large $p$, incorporating the ridge penalty is beneficial as it improves the estimation of $B$ when the sample covariance matrix $\wh \Sigma=n^{-1} \bX^\T \bX$ becomes ill-conditioned.

\begin{figure}[htbp]
    \centering
    \includegraphics[width=\linewidth]{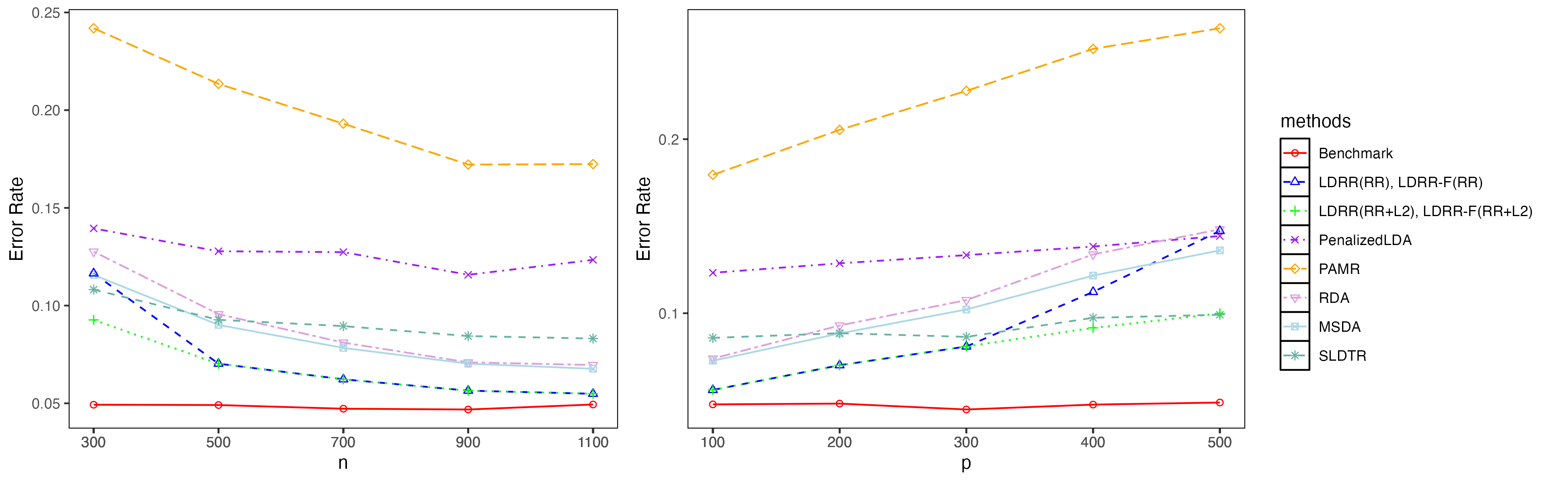}
    \includegraphics[width=\linewidth]{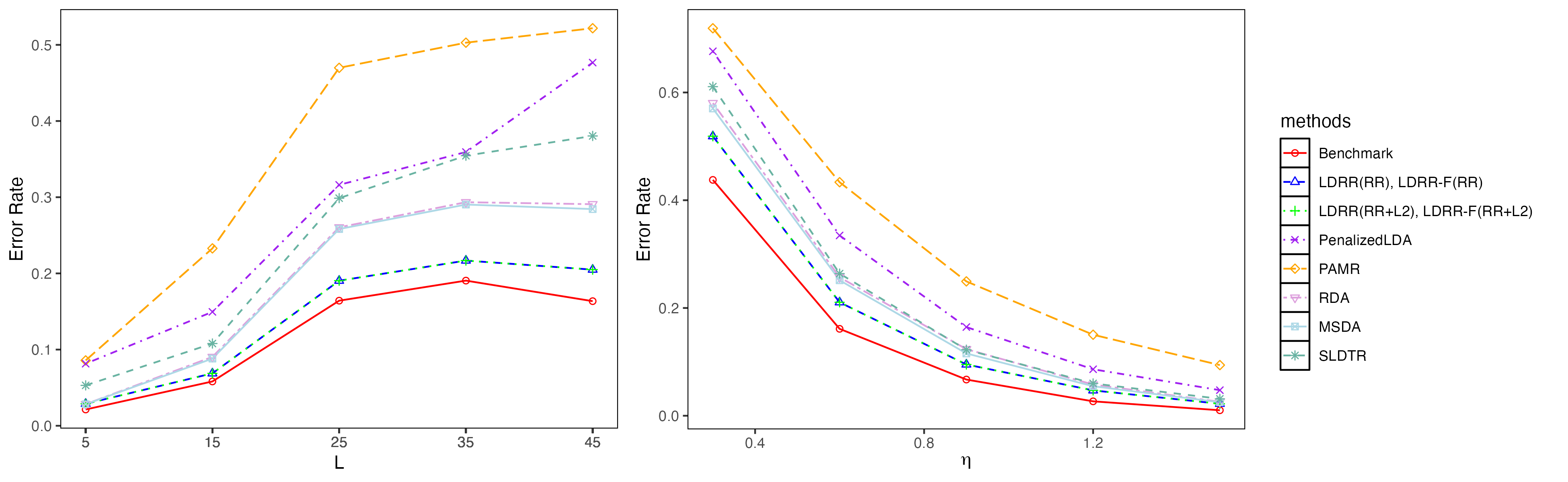}
    \caption{The averaged misclassification errors in low-rank model (1).}
    \label{fig:simulation2}
\end{figure}

 \subsubsection{Low-rank model (2).}\label{app_sec_sim}

        In this setting, we consider the setting in \cref{rem_BW} and also compare with the method (PCLDA)  in \cite{BW2023}. As mentioned in \cref{rem_BW}, PCLDA is a two-step procedure similar to LDRR-F but with the first step done in a unsupervised  way by using PCA. 
        We generate the data according to model \eqref{model} with 
        $M = \eta A \alpha$  as in  \cref{sec_sims_lowrank} and $\sw =  \eta^2 A \Sigma_z A^\T + \Cov(W)$ 
        where $[\Sigma_z]_{ij}=(0.6)^{|i-j|}$ for $i,j\in [r]$ and $\Cov(W)$ is generated the same as $\sw$ in \cref{sec_sims_sparse} with $\rho = 0.2$ and  its diagonal elements from Uniform(0,1). The quantity $\eta$ represents the signal-to-noise ratio (SNR) between the low-dimensional signal $\eta A Z$ and the noise $W$. As pointed out in \cite{BW2023}, classification becomes easier as $\eta$ increases. 
        
        We examine how the performance of all algorithms depends on different parameters by varying the parameters one at a time: the sample size $n \in \{300, 500, 700, 900, 1100\}$; the feature dimension $p \in \{100,200, 300, 400,500\}$; the number of classes $L \in \{5, 15, 25, 35, 45\}$; the scalar $\eta \in \{0.5, 1.0, 2.0, 4.0, 6.0\}$. 
        For all settings, we fix $n = 1000, L = 10, r = 3, p = 100$ and $\eta=2$ when they are not varied. We set $n=500$ when we vary $\eta$. 
        Figure \ref{fig:simulation3} depicts the misclassification errors of each algorithm averaged over 50 repetitions.
        
        From Figure \ref{fig:simulation3} we have similar findings as in the low-rank setting in \cref{sec_sims_lowrank} that the proposed approaches of using the reduced rank penalty outperform other methods when $n$ is much larger than $p$. In the last setting when we vary the SNR $\eta$, it is interesting to notice that PCLDA, the procedure that reduces the dimension by PCA in its first step, has comparable performance to the LDRR-F(RR+L2) when $\eta$ is sufficiently large ($\eta \ge 2$ in these settings). This aligns with the minimax optimality of PCLDA derived in \cite{BW2023} in the regime of moderate / large $\eta$. But when $\eta$ is small such as $\eta < 2$, LDRR-F has benefit by reducing the dimension in a supervised way. 

        \begin{figure}[htbp]
            \centering
            \includegraphics[width=\linewidth]{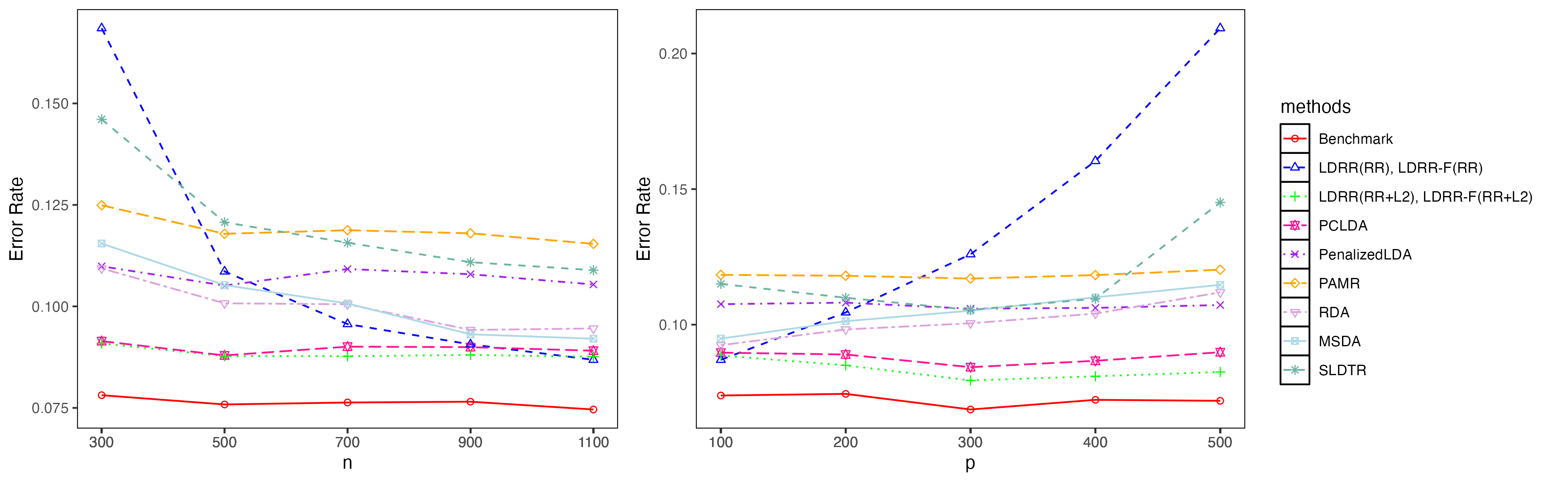}
            \includegraphics[width=\linewidth]{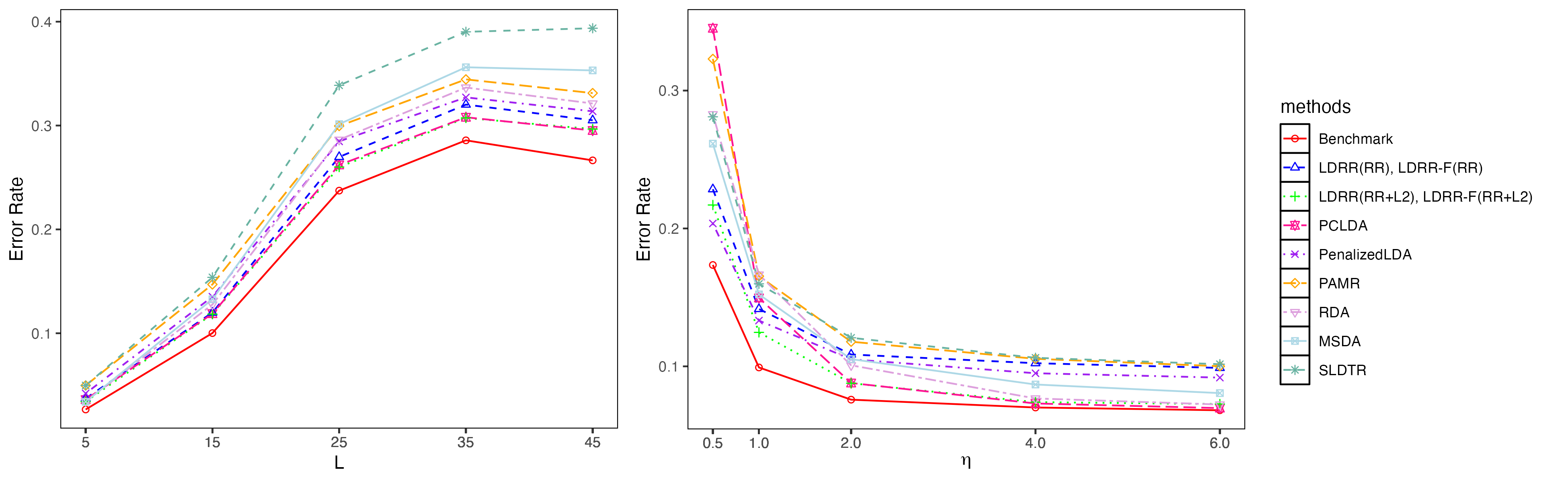}
            \caption{The averaged misclassification errors in low-rank model (2).}
            \label{fig:simulation3}
        \end{figure}

\section{Real data analysis}\label{sec_real_data}


In this section we evaluate the performance of our approach and compare it with other algorithms on three high-dimensional biological data sets and one low-dimensional football dataset.

The Ramaswamy dataset has $n=198$ samples and each sample has features of $p=16,063$ gene expressions and belongs to one of the $L=14$ distinct cancer subtypes \citep{ramaswamy2001multiclass}.  
The second dataset has $n=62$ samples and consists of $L=3$ distinct lymphoma subtypes. These lymphomas are categorized into diffuse large B-cell lymphoma, follicular lymphoma  and chronic lymphocytic leukemia. Each sample has $p=4026$ gene expressions as its features. We used the preprocessed Lymphoma data in \citet{dettling2004bagboosting}.  The Football dataset from kaggle\footnote{\url{https://www.kaggle.com/datasets/vivovinco/20222023-football-player-stats/data}} contains 2022 -- 2023 football player stats per 90 minutes. We excluded categorical data such as the player's nationality, name, team name, and league from our prediction. We used age, year of birth and performance data $(p=118)$ to predict the $(n=2689)$ football player's position $(L=10)$. 
Brain A is another dataset used in \citep{dettling2004bagboosting}. It has sample size $n=42$ with $p=5597$ features and  contains $L=5$ different tumor types. 
For all four data sets, the features are centered to have zero mean.
Each dataset is randomly divided with 75\% of the samples to serve as the training data and the remaining 25\% serves as the test data. This randomization was repeated 50 times and the averaged misclassification errors of each algorithm are reported in Table \ref{tab:realdata}.


\begin{table}[ht]
    \centering
    \caption{The averaged misclassification errors (in percentage) and standard errors}
    \label{tab:realdata}
    \resizebox{\textwidth}{!}{%
        \begin{tabular}{lcccccccc}
            & LDRR-F(L1+L2) & LDRR-F(L12+L2) & LDRR-F(RR+L2) & PenalizedLDA & PAMR & RDA & SLDTR & MSDA \\
            Ramaswamy & 22.3(0.051) & 15.9(0.042) & 22.9(0.044) & 37.4(0.052) & 28.6(0.043) & 17.1(0.038)&17.9(0.045) &30.4(0.067)\\
            Lymphoma & 1.9(0.029) & 3.1(0.051)& 1.9(0.029) & 3.1(0.044) & 2.0(0.029) & 1.9(0.029)& 1.8(0.037)& 5.9(0.050)\\
            Football & 31.5(0.013) & 32.2(0.011) & 30.5(0.008) & 52.6(0.008) & 43.8(0.016) & 31.9(0.014)& 34.11(0.011)& 31.6 (0.011)\\  
            Brain A & 24.3(0.106) & 16.3(0.105) & 13.8(0.082)  & 18.8 (0.107) & 18.8(0.090) & 19.3(0.080)& 12.8(0.062)& 39.0(0.134) \\
        \end{tabular}%
    }
\end{table}

Given the similar performance in \cref{sec_sims} between LDRR and LDRR-F as well as their variants of using the lasso and the elastic-net penalty, we only compare LDRR-F using the elastic net penalty (L1+L2) and the reduced rank plus ridge penalty (RR+L2) with PenalizedLDA, PAMR and RDA in our real data analysis due to its robustness against model misspecification. We also include LDRR-F that uses the group-lasso penalty plus the ridge penalty (L12+L2) for comparison. From Table \ref{tab:realdata}, we observe that our procedure LDRR-F exhibits relatively better performance across all datasets. Choice of the best penalty depends on the data structure. For instance, when the data has the group-sparsity structure such as in Ramaswamy data set, the classifier using the group-lasso penalty has better performance; when the sparsity structure varies across the response levels such as in Lymphoma, the classifier using the elastic-net penalty performs better; when the data meets the low rank structure, the benefit of the classifier using the reduced-rank estimator is apparent (the selected rank is found to be $\wh r = 3$ in the Football dataset with $p=118$ and $L=10$). 
{We remark that in practice one can also use cross-validation to find the most suitable choice of penalty.} Furthermore, in Figure \ref{fig:realdata}, we show that LDRR-F with $K = 2$ in \eqref{Bayes_FDA_B_approx} can be used for visualization in the 2D space of the first two discriminant directions. The algorithm RDA has comparable performance in all settings, but it cannot be used for low-dimensional representation / visualization.

\begin{figure}[!ht]
    \centering
    \includegraphics[width=0.4\linewidth]{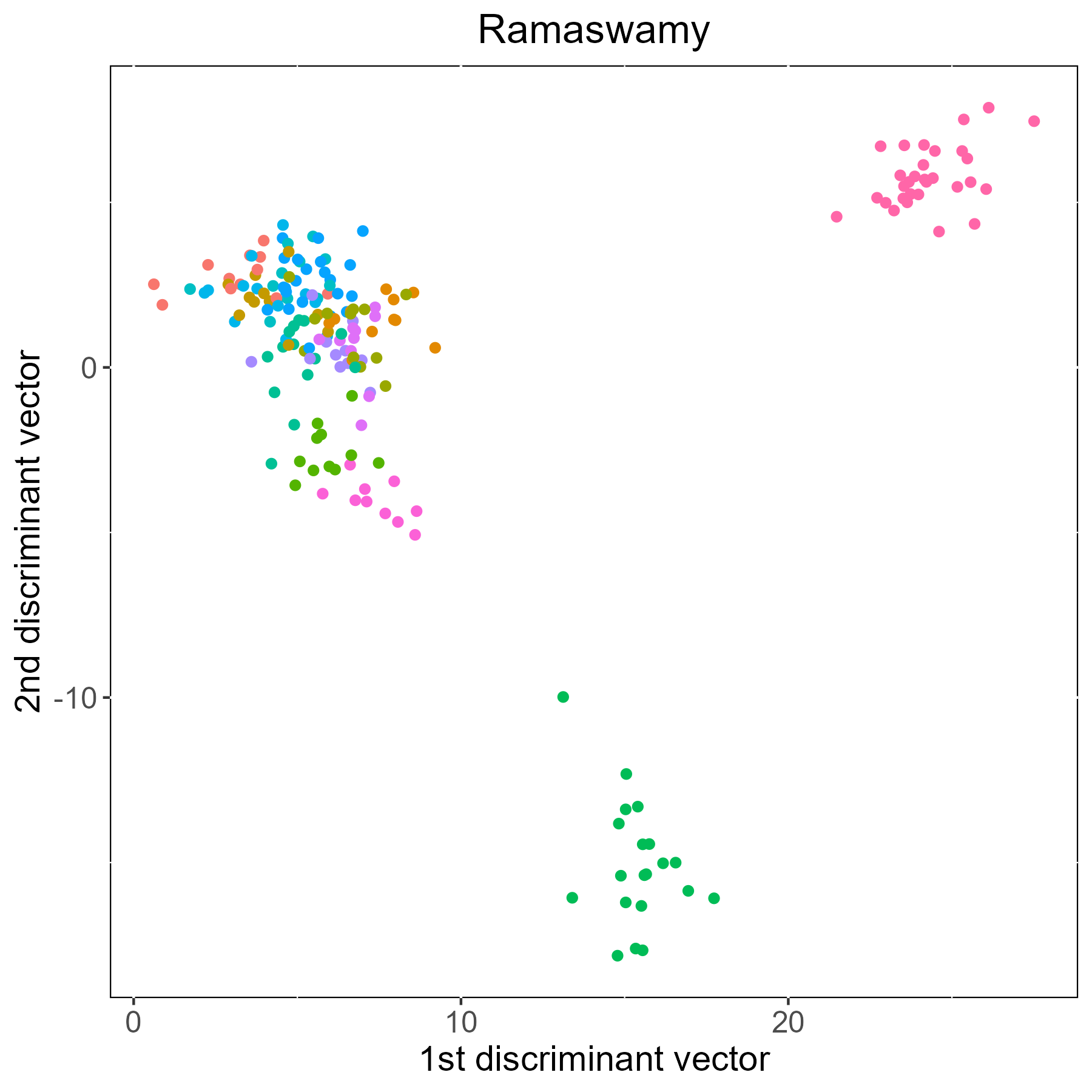} 
    \includegraphics[width = 0.4\linewidth]{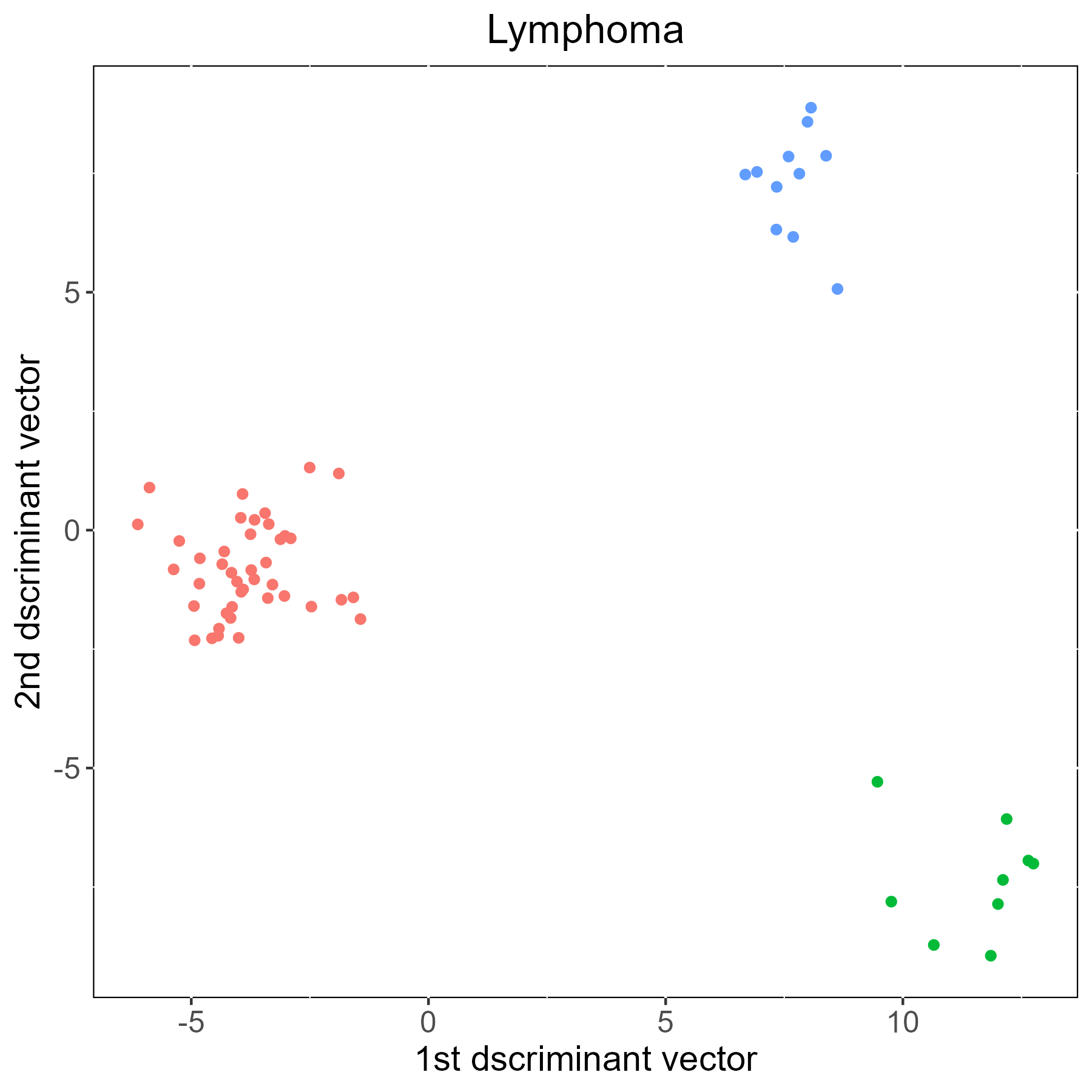}
    \includegraphics[width = 0.4\linewidth]{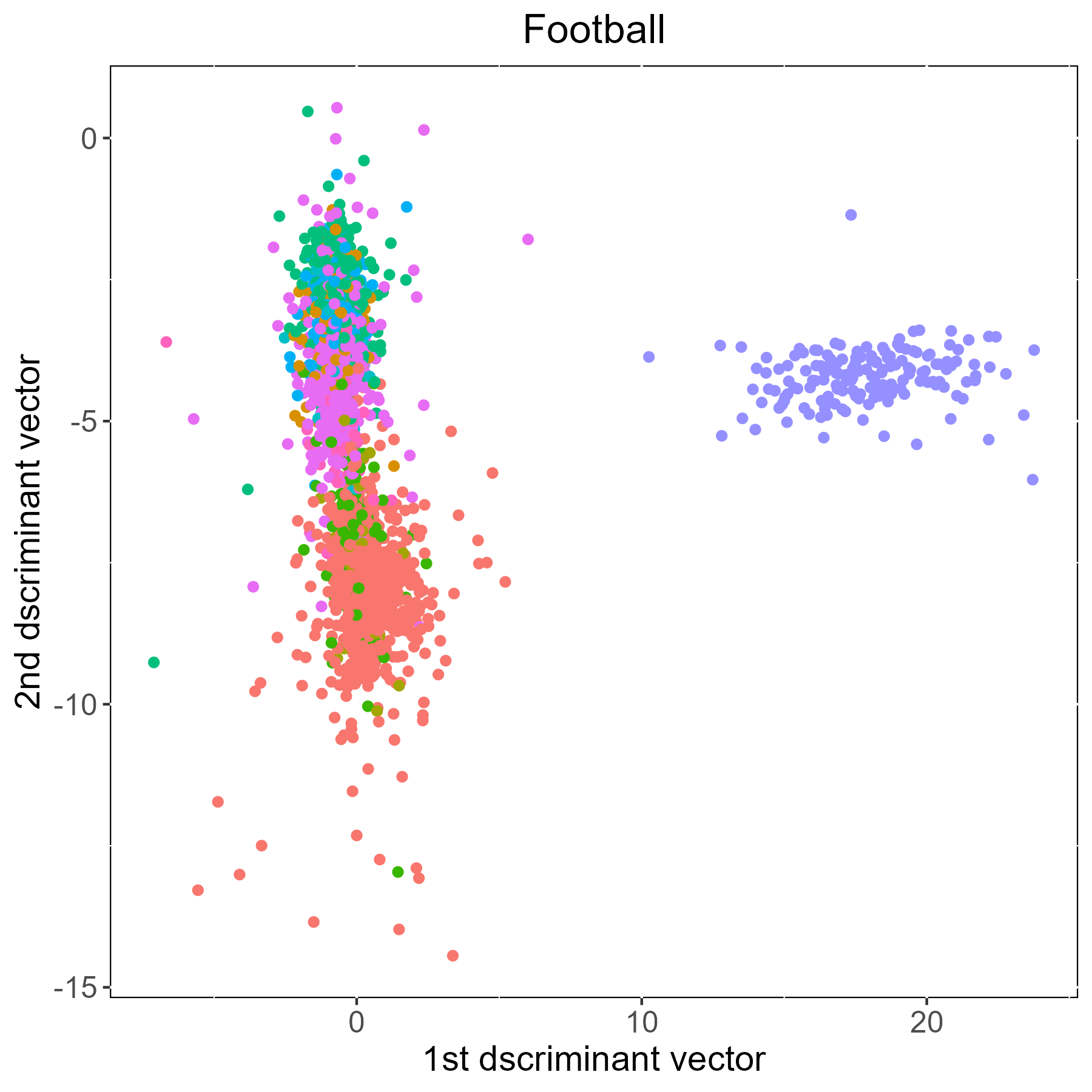}
    \includegraphics[width = 0.4\linewidth]{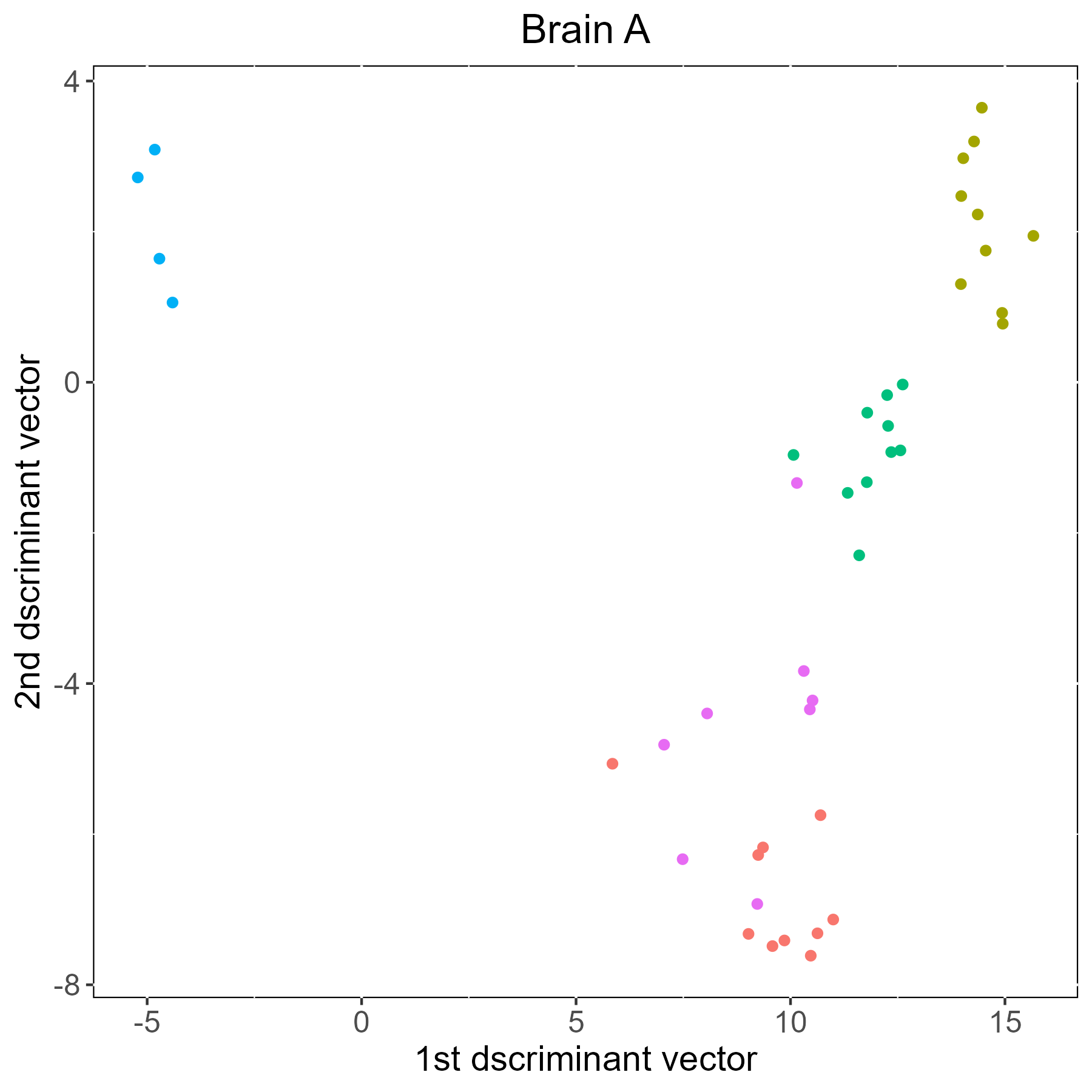}
    \caption{Visualization 
    in the space of the first two discriminant vectors.}
    \label{fig:realdata}
\end{figure}

\section*{Acknowledgements} Wegkamp is supported in part by the National Science Foundation  grant NSF DMS-2210557.

    {\small 
        \setlength{\bibsep}{0.85pt}{
        \bibliographystyle{ims}
        \bibliography{ref}
        }
    }

 \newpage

\appendix

    \cref{app_sec_proof} contains the main proofs. Technical lemmas  are stated in \cref{app_sec_tech_lem}.

 \section{Main proofs}\label{app_sec_proof}

	Throughout the proofs, we will use the following notation. Write
	\[
		\Delta_\i = \max_{\ell \in [L]} \be_\ell^\T\bM^\T \sw^{-1}\bM\be_\ell,\qquad \Delta_\op = \|\bM^\T \sw^{-1}\bM\|_\op.
	\]  
    For future reference, we note that 
    (under $\EE(X) = 0_p$)
    \[
        {1\over 4}\max_{k,\ell\in [L]} (\mu_k-\mu_\ell)^\T \sw^{-1}(\mu_k-\mu_\ell) \le \Delta_\i \le \max_{k,\ell\in [L]} (\mu_k-\mu_\ell)^\T \sw^{-1}(\mu_k-\mu_\ell)
    \] 
     so that $\Dt_\i \asymp \Delta$ under \cref{ass_cond_separation}.
	Further, we write 
	\[
		\pi_{\min} = \min_{\ell \in [L]} \pi_\ell,\qquad \pi_{\max} = \max_{\ell \in [L]} \pi_\ell.
	\]
    The following event will be frequently used in our proofs. 
   Define the event 
    \begin{equation}\label{def_event_pi}
        \cE_\pi = \left\{
        \max_{k\in [L]} \left|\wh \pi_k - \pi_k\right| \le C\sqrt{\log(n) \over nL}
        \right\}
    \end{equation}
    for some absolute constant $C>0$.
    \cref{lem_pi} in \cref{app_sec_aux_lem} ensures that   $\PP(\cE_\pi) \ge 1-2n^{-2}$ under \cref{ass_pi}.
    Also note that, on  $\cE_\pi$ and under the condition $L\log(n) \le  n$, 
    \begin{equation}\label{bd_log_pi}
        \max_{\ell \in [L]}\left|\log\left({\pi_\ell \over \wh \pi_\ell}\right)\right| \le \max_{\ell \in [L]}{|\wh \pi_\ell - \pi_\ell| \over \pi_\ell} \lesssim \sqrt{ L \log(n)\over n}
    \end{equation}
    as well as $\wh \pi_\ell \asymp \pi_\ell $ for all $\ell \in [L]$.


    	\subsection{Proof of \cref{prop_g}}\label{app_sec_proof_prop_g}
	Recall that, for any $x\in \RR^p$,
	\[
	\wh g(x) = \argmin_{\ell \in [L]} \wh G_\ell (x),\qquad g^*(x) = \argmin_{\ell \in [L]} G_\ell (x).
	\]
	For any $t_1,\ldots, t_L\ge 0$,  define the event 
	\begin{equation}\label{def_event_multi}
		\cE_t = \bigcap_{\ell \in [L]}\left\{
		\left|
		\wh G_\ell(X) - G_\ell(X) 
		\right| \le t_\ell
		\right\}.
	\end{equation}
    In the proof of \cref{prop_g}, we use the notation 
    $\{Y = k\} := \{Y = \be_k\}$.
	
	\begin{proof}
        We condition on $\bD$ throughout the proof. The law of  $\EE$ and $\PP$ is with respect to the randomness of $(X,Y)$, which is independent of $\bD$.
		By definition, the excess risk $\cR(\wh g)$ equals to
		\begin{align*}
        &\sum_{k \in [L]} \pi_k \Bigl\{
			\EE\left[
			\1\{\wh g(X) \ne k\} \mid Y = k
			\right] - 
			\EE\left[
			\1\{g^*(X) \ne k\} \mid Y = k
			\right]
			\Bigr\}\\
			&=  \sum_{k \in [L]} \pi_k 
			\EE\left[
			\1\{\wh g(X) \ne k, g^*(X) = k\} \mid Y = k
			\right]-\sum_{k \in [L]} \pi_k 
			\EE\left[
			\1\{\wh g(X) = k, g^*(X) \ne k\} \mid Y = k
			\right]\\
			&= \sum_{\substack{k,\ell \in [L]\\ k\ne \ell}} \pi_k 
			\EE\left[
			\1\{\wh g(X) = \ell, g^*(X) = k\} \mid Y = k
			\right]- \sum_{\substack{k,\ell \in [L]\\ k\ne \ell}}  \pi_k 
			\EE\left[
			\1\{\wh g(X) = k, g^*(X) = \ell\} \mid Y = k
			\right]\\
			&=  \sum_{\substack{k,\ell \in [L]\\ k\ne \ell}} 
			\Bigl\{\pi_k \EE\left[
			\1\{\wh g(X) = \ell, g^*(X) = k\} \mid Y = k
			\right]-
			\pi_\ell \EE\left[
			\1\{\wh g(X) = \ell, g^*(X) = k\} \mid Y = \ell\right]
			\Bigr\}.
		\end{align*}
		Write  $f_k(x)$ as the p.d.f. of $X=x \mid Y = k$ for each $k\in [L]$.  We find that
		\begin{align*}
			\cR(\wh g)
			&= \sum_{\substack{k,\ell \in [L]\\ k\ne \ell}} 
			\int_{\wh g = \ell,  g^* = k}\left(
			\pi_k f_k(x) - \pi_\ell f_\ell(x)\right) d x\\
			&=\sum_{\substack{k,\ell \in [L]\\ k\ne \ell}} 
			\int_{\wh g = \ell,  g^* = k} \pi_k f_k(x) \left(1 - \exp\left\{ G^{(\ell | k)}(x)\right\}\right) dx
		\end{align*}
		where, for all $k, \ell \in [L]$,
		\[
		G^{(\ell | k)}(x) :=  \left(x - {\mu_\ell + \mu_k \over 2}\right)^\T \sw^{-1}(\mu_\ell - \mu_k) + \log{\pi_\ell \over \pi_k} = {1\over 2}\left(G_k(x) - G_\ell (x)\right).
		\]
        Fix any $t_1,\ldots, t_L \ge 0$.
		Observe that the event $\{\wh g(X) = \ell,  g^*(X) = k\} \cap \cE_t$ implies
		\[
    		0 > 2G^{(\ell | k)}(X)  \overset{\cE_t}{\ge} \wh G_k(X) - \wh G_\ell(X) - (t_k + t_\ell) \ge  -(t_k + t_\ell).
		\] 
		By the basic inequality $1 + z \le  \exp(z)$ for all $z\in \RR$, we obtain that,
        \begin{align*}
        \cR(\wh g) &\le  \sum_{\substack{k,\ell \in [L]\\ k\ne \ell}} \pi_k  \PP(\cE_t^c   \cap \{\wh g(X) = \ell\}  \mid Y = k)\\
        & +  {1\over 2} \sum_{\substack{k,\ell \in [L]\\ k\ne \ell}}    \pi_k  (t_k + t_\ell) \EE\left[ \1\left \{-  t_k -  t_\ell \le 2G^{(\ell | k)}(X) \le 0, \wh g(X) = \ell \mid Y = k\right \} \right].
		\end{align*}
		For the first term, it is bounded from above by
		\begin{align}\label{bd_term_two}
			   \sum_{k=1}^L \pi_k \PP(\cE_t^c \mid Y = k) =  \PP(\cE_t^c ). 
		\end{align}
		Regarding the second term, it is bounded from above by
		\begin{align}\label{bd_term_one}
			 & {1\over 2}\sum_{k=1}^L \pi_k \sum_{\ell \in [L]\setminus \{k\}}   (t_k + t_\ell) \PP\left\{- t_k - t_\ell  \le 2 G^{(\ell | k)}(X) \le 0 \mid Y = k\right\}.
		\end{align}
		Note that $G^{(\ell | k)}(X) \mid Y = k$ is normally distributed
		$$
		  N \left(
		-{1\over 2}(\mu_k-\mu_\ell)^\T \sw^{-1} (\mu_k-\mu_\ell) + \log\left(\pi_\ell / \pi_k\right), ~ (\mu_k-\mu_\ell)^\T \sw^{-1} (\mu_k-\mu_\ell)
		\right).
		$$ 
		By the mean-value theorem and \cref{ass_cond_separation}, the quantity in \eqref{bd_term_one} is no greater than
		\begin{align*}
		  {1\over 4}\sum_{k=1}^L \pi_k \sum_{\ell \in [L]\setminus \{k\}}   {(t_k + t_\ell)^2 \over  \sqrt{c\Dt}} &\le {1\over 2 \sqrt{c\Dt}} \sum_{k=1}^L \pi_k \left[ (L-1)t_k^2 +  \sum_{\ell \in [L]\setminus \{k\}} t_\ell^2\right]  = {1\over 2  \sqrt{c\Dt}}\sum_{k=1}^L (L\pi_k + 1)  t_k^2 .
		\end{align*} 
        Together with \eqref{bd_term_two} and $\pi_k \le C/L$, the proof of \eqref{fast_rate} is complete.

       The bound in \eqref{slow_rate} follows by noting that 
        \begin{align*}
           &{1\over 2} \sum_{\substack{k,\ell \in [L]\\ k\ne \ell}}    \pi_k  (t_k + t_\ell) \EE\left[ \1\left \{-  t_k -  t_\ell \le 2G^{(\ell | k)}(X) \le 0, \wh g(X) = \ell \mid Y = k\right \} \right]\\
           &\le {1\over 2} \sum_{\substack{k,\ell \in [L]\\ k\ne \ell}}    \pi_k  (t_k + t_\ell) \PP\left\{\wh g(X) = \ell \mid Y = k\right \} \\
           &\le \max_{\ell \in [L]}t_\ell.
        \end{align*} 
        The proof is complete.
	\end{proof}

    \subsection{Proof of \cref{thm_G}}\label{app_sec_proof_thm_G}
	\begin{proof}
		Pick any $\ell \in [L]$. Recall that
		\begin{equation}\label{decomp_G_diff}
		|\wh G_\ell(X)  -  G_\ell(X) |  \le 
		|(\wh \mu_\ell - \mu_\ell)^\T B^*_\ell| +| ( \wh\mu_\ell- 2X)^\T (\wh B^*_\ell - B^*_\ell)| + 2|\log\left({\pi_\ell / \wh \pi_\ell}\right)|.
		\end{equation}
		Throughout the  proof, we work on $\cE_\pi$ in \eqref{def_event_pi}  which implies \eqref{bd_log_pi}, an upper bound for the last term on the right of \eqref{decomp_G_diff}. Recall that $\PP\{\cE_\pi\}\ge1-2n^{-2}$ and note 
  that we consider both the randomness of $(X,Y)$ and that of $\bD$ in this proof. 
		
		For the first term in \eqref{decomp_G_diff},  we invoke \cref{lem_mu} with $k = \ell$ and $v =B^*_\ell$, use the fact that  $B_\ell^{*\T}\sw  B_\ell^*  = \mu_\ell^\T \sw^{-1} \mu_\ell \le \Dt_\i$ 
        and take a union bound over $\ell\in [L]$ to conclude that the event 
		\begin{equation}\label{bd_mu_diff_B_star} 
        \cE_\mu:= \left\{ \left|(\wh \mu_\ell - \mu_\ell)^\T B^*_\ell\right|  \le  \sqrt{3\Delta_{\ell\ell}} \sqrt{L \log(n)\over n},~ \text{for all }\ell \in [L]\right\}
		\end{equation} 
        has probability at least $1- 2L n^{-3} \ge 1-n^{-2}$.
		Regarding the second term in \eqref{decomp_G_diff}
		\begin{align*}
			|\wh \mu_\ell^\T  (\wh B^*_\ell - B^*_\ell)| &\le |\mu_\ell^\T  (\wh B^*_\ell - B^*_\ell)| + 
			|(\wh \mu_\ell - \mu_\ell)^\T  (\wh B^*_\ell - B^*_\ell)|,
		\end{align*}
		the Cauchy-Schwarz inequality 
		gives 
		\begin{equation}\label{bd_mu_diff_B}
			|\mu_\ell^\T  (\wh B^*_\ell - B^*_\ell)| \le \| \sw^{-1/2}\mu_\ell\|_2 \|\sw^{1/2}(\wh B^*_\ell - B^*_\ell)\|_2 =   \|\sw^{1/2}(\wh B^*_\ell - B^*_\ell)\|_2 \sqrt{\mu_\ell^\T \sw^{-1} \mu_\ell}
		\end{equation}
        Finally, conditioning on $Y = \be_k$, the Gaussian tail of $X \mid Y, \bD$ gives that, for all $t\ge 0$,
		\[
		\PP\left\{	|X^\T(\wh B^*_\ell - B^*_\ell)| \ge |\mu_k^\T (\wh B^*_\ell - B^*_\ell)| + t \|\sw^{1/2}(\wh B^*_\ell - B^*_\ell)\|_2 
		~ \Big| ~  Y = \be_k, \bD \right\} \le 2e^{-t^2/2}.
		\]
		Choosing $t = 2\sqrt{  \log (n)}$,  taking the union bound over $k \in [L]$ and unconditioning yield that for all $\ell \in [L]$,
		\begin{align}\label{bd_X_B_star}\nonumber
    		|X^\T(\wh B^*_\ell - B^*_\ell)| &\le \sum_{k=1}^L \pi_k |\mu_k^\T (\wh B^*_\ell - B^*_\ell)|  +  2\|\sw^{1/2}(\wh B^*_\ell - B^*_\ell)\|_2 \sqrt{ \log(n)}\\
		&\le   \|\sw^{1/2}(\wh B^*_\ell - B^*_\ell)\|_2 \sqrt{ \Delta_\i+4\log(n)} 
		\end{align}
		holds with probability at least $1-2L n^{-2}$. The last step uses  \eqref{bd_mu_diff_B}.
		Collecting \cref{ass_cond_separation}, \eqref{bd_log_pi}, \eqref{bd_mu_diff_B_star}, \eqref{bd_mu_diff_B} and \eqref{bd_X_B_star} and $ \Dt_\i\lesssim\Delta$ completes the proof.
	\end{proof}

    \subsection{Proof of \cref{cor_g}}\label{app_sec_proof_cor_g}
 	\begin{proof}
        For any $\ell \in [L]$, let 
        \[
            {t_\ell \over C} =  \sqrt{\Dt}\sqrt{L\log(n) \over n} + \|\sw^{1/2}(\wh B^*_\ell -B^*_\ell)\|_2 \sqrt{\Dt + \log(n)} +    |(\wh \mu_\ell - \mu_\ell)^\T (\wh B^*_\ell- B^*_\ell)|.
        \]
        Recall from the proof of \cref{thm_G} that $1-\PP(\cE_\mu \cap \cE_\pi ) \le 2n^{-2}+n^{-2} =3n^{-2}$ and 
        \[
            \sum_{\ell=1}^L \PP\left\{ 
             \left\{\left|\wh G_\ell (X) - G_\ell (X)\right|  \ge  t_\ell\right\}   \cap \cE_\mu \cap \cE_\pi \right\}   \le {1\over n}.
        \]
        By the fact $\cR(\wh g) \le 1$ almost surely and invoking \cref{prop_g}, we find that 
        \begin{align*}
        \EE_{\bD}[\cR(\wh g)] & \le \EE_{\bD}\left[
            \min\left\{ C\sum_{\ell=1}^L t_\ell^2 + L \sum_{\ell=1}^L \PP\left\{
            |\wh G_\ell(X) - G_\ell(X)| \ge  t_\ell \mid \bD\right\}, ~ 1  \right\} 1\{\cE_\mu\cap \cE_\pi\}\right]\\
        &\quad + 1- \PP_{\bD}\{\cE_\mu\cap \cE_\pi\}\\
            &\lesssim  \EE_{\bD}\left[
            \min\left\{\Dt {L^2\log(n)\over n} + \cQ(\wh B^*) + {L\over n}, ~ 1  \right\} 1\{\cE_\mu\cap \cE_\pi\} \right] +  {1\over n}\\
            &\lesssim \EE_{\bD}\left[
            \min\left\{\Dt {L^2\log(n)\over n} + \cQ(\wh B^*), ~ 1  \right\}  \right],
        \end{align*}
        completing the proof.
 	\end{proof}

    \subsection{Proof of \cref{thm_Q}}\label{app_proof_thm_Q}
        The proof of \cref{thm_Q} uses the following lemma on $\|\wh H - H\|_\op$. The proof of \cref{lem_H_hat} appears at the end of this section.
	\begin{lemma}\label{lem_H_hat}
		Under the conditions of \cref{thm_Q}, with probability at least $1-n^{-1}$,
		\[
		\|\wh H - H\|_\op  \lesssim   \sqrt{\log(n) \over nL} +  {\omega_2 \over \sqrt L}.
		\]
	\end{lemma}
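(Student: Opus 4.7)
The plan is to decompose $\wh H - H$ into four manageable terms and bound each one separately, with the key structural observation being a tight bound on $\|B^\T \Sigma B\|_\op$ that follows from Lemma \ref{lem_key}. Writing $\wh \Sigma = n^{-1}\bX^\T \bX$ and $R = \wh B - B$, I would start from
\begin{align*}
\wh H - H &= (D_{\wh\pi} - D_\pi) - R^\T \wh \Sigma R - \bigl(B^\T \wh \Sigma R + R^\T \wh \Sigma B\bigr) - B^\T (\wh \Sigma - \Sigma) B,
\end{align*}
and bound these four terms in order. On $\cE_\pi$, the first gives $\|D_{\wh\pi} - D_\pi\|_\op \lesssim \sqrt{\log(n)/(nL)}$. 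Assumption \ref{ass_B} handles the second directly: $\|R^\T \wh \Sigma R\|_\op \le \|n^{-1/2}\bX R\|_F^2 \le \omega_2^2$, and since $\omega_2\sqrt L \le c$ we can absorb this into $\omega_2/\sqrt L$.

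The crucial structural step is to show $\|B^\T \Sigma B\|_\op \lesssim 1/L$. From Lemma \ref{lem_key} and $B = \Sigma^{-1}\bM D_\pi$, one has $B^\T\Sigma B = D_\pi \bM^\T \Sigma^{-1} \bM D_\pi$, and a short Woodbury-style calculation (with $\tilde N = \sw^{-1/2}\bM D_\pi^{1/2}$ and the push-through identity $\tilde N^\T(\bI + \tilde N\tilde N^\T)^{-1}\tilde N = \bI - (\bI + \tilde N^\T\tilde N)^{-1}$) yields
\[
H \;=\; D_\pi^{1/2}(\bI_L + \tilde N^\T \tilde N)^{-1} D_\pi^{1/2} \;\succeq\; 0,
\]
so $0 \preceq B^\T\Sigma B \preceq D_\pi$, which under \cref{ass_pi} delivers $\|B^\T\Sigma B\|_\op \le \pi_{\max} \lesssim 1/L$.

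With this in hand, the fourth term is controlled by a sample-covariance concentration bound applied to the $L$-dimensional sub-Gaussian vectors $B^\T X_i$: since these have covariance $B^\T\Sigma B$ of operator norm $\lesssim 1/L$ and (using $\|B^\T\bM\|_F \lesssim 1/\sqrt L$) mixture means of the same scale, standard Bernstein-type bounds for sub-Gaussian sample covariances yield
\[
\|B^\T(\wh \Sigma - \Sigma) B\|_\op \;\lesssim\; \frac{1}{L}\sqrt{\frac{L\log n}{n}} \;=\; \sqrt{\frac{\log n}{nL}}
\]
with probability at least $1-n^{-1}$, provided $L\log n \le c n$. This bound also implies $\|\wh \Sigma^{1/2} B\|_\op^2 = \|B^\T \wh\Sigma B\|_\op \le \|B^\T\Sigma B\|_\op + \|B^\T(\wh\Sigma-\Sigma)B\|_\op \lesssim 1/L$. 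The cross term then follows from the matrix Cauchy--Schwarz inequality,
\[
\|B^\T \wh \Sigma R\|_\op \;\le\; \|\wh \Sigma^{1/2}B\|_\op\,\|\wh\Sigma^{1/2}R\|_\op \;\le\; \|\wh\Sigma^{1/2}B\|_\op\,\omega_2 \;\lesssim\; \omega_2/\sqrt L,
\]
where $\|\wh\Sigma^{1/2}R\|_\op \le \|\wh\Sigma^{1/2}R\|_F \le \omega_2$ by Assumption \ref{ass_B}. Summing the four pieces and absorbing $\omega_2^2 \le c\,\omega_2/\sqrt L$ gives the claim.

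The main obstacle I anticipate is the fourth step: turning the qualitative idea ``$B^\T X_i$ is a low-variance sub-Gaussian vector'' into a quantitative matrix concentration bound with the right $1/L$ scaling. One must track both the conditional covariance $B^\T \sw B$ (which equals $H - $ smaller terms, hence has operator norm $\lesssim 1/L^2$) and the between-class contribution from $\|B^\T\bM\|_F \lesssim 1/\sqrt L$, and then appeal to a sub-Gaussian covariance concentration result (e.g.\ a Hanson--Wright or matrix Bernstein inequality) of effective dimension $L$. Provided this is done carefully, the remaining algebra is routine and the stated rate drops out.
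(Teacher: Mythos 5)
Your decomposition of $\wh H - H$ into the four terms, the bound $\|D_{\wh\pi}-D_\pi\|_\op \lesssim \sqrt{\log(n)/(nL)}$ on $\cE_\pi$, the bound $\|R^\T\wh\Sigma R\|_\op \le \omega_2^2$ from \cref{ass_B}, and the Cauchy--Schwarz treatment of the cross term are exactly the paper's proof. Where you genuinely diverge is the term $B^\T(\wh\Sigma-\Sigma)B$ and the structural input: the paper proves the identities $H=\Omega^{-1}$, $\bM^\T B=\Delta\Omega^{-1}$, $B^\T\Sigma B = D_\pi\bM^\T\Sigma^{-1}\bM D_\pi$ in \cref{lem_B_fact} and then bounds the bilinear form $v_1^\T(\wh\Sigma-\Sigma)v_2$ by decomposing $\wh\Sigma-\Sigma$ into a within-class covariance error, a $\wh\pi-\pi$ term and $\wh\mu-\mu$ cross terms (\cref{lem_Sigma_hat}), followed by an $\eps$-net over $\RR^L$ (\cref{lem_B_Sigma_diff_B}); you instead use the clean PSD observation $H = D_\pi^{1/2}(\bI_L+\tilde N^\T\tilde N)^{-1}D_\pi^{1/2}\succeq 0$, hence $0\preceq B^\T\Sigma B\preceq D_\pi$ and $\|B^\T\Sigma B\|_\op\le\pi_{\max}\lesssim 1/L$ (equivalent to the paper's $H=\Omega^{-1}$, and sufficient under $\Delta_\i\asymp 1$), and then appeal to a generic sub-Gaussian sample-covariance concentration bound for the $L$-dimensional vectors $Z_i=B^\T X_i$. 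That route is viable and arguably slicker, since it replaces the bespoke bilinear concentration lemma by an off-the-shelf result in dimension $L$; what it costs is that you must actually verify the directional sub-Gaussian scale of the Gaussian mixture $Z_i$, which is precisely the work the paper's \cref{lem_Sigma_hat} does by hand.

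Two of your supporting claims are wrong as stated, though both are repairable. First, $\|B^\T\bM\|_F\lesssim 1/\sqrt L$ is false in general under the theorem's hypotheses: with $\Delta_\i\asymp 1$ but nearly collinear means (so $\Delta_\op\asymp L$), e.g.\ $\sw=\bI$, $\pi_\ell=1/L$, $\mu_\ell=\pm c\,v$, one gets $\|B^\T\bM\|_F\asymp 1$. What you actually need -- and what is true -- is the columnwise bound $\max_k\|B^\T\mu_k\|_2\lesssim\sqrt{\Delta_\i/L}$, which follows from $B^\T\mu_k=D_\pi\bM^\T\Sigma^{-1}\mu_k$, $\|\bM^\T\Sigma^{-1}\bM\|_\op\le\pi_{\min}^{-1}\lesssim L$ and $\mu_k^\T\Sigma^{-1}\mu_k\le\Delta_\i$; this gives the directional sub-Gaussian norm $\|\langle v,Z\rangle\|_{\psi_2}\lesssim \max_k|v^\T B^\T\mu_k|+\|B^\T\sw B\|_\op^{1/2}\lesssim 1/\sqrt L$, which is what the covariance concentration needs. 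Second, $\|B^\T\sw B\|_\op\lesssim 1/L^2$ is likewise only true when $\Delta_\op\lesssim 1$; in general $B^\T\sw B=\Omega^{-1}\Delta\Omega^{-1}$ has operator norm of order $1/L$ (same counterexample), but $1/L$ is all your argument requires, so this is harmless. With those two corrections, and with the concentration step spelled out (a mixture of Gaussians with per-direction $\psi_2$-norm $\lesssim 1/\sqrt L$, effective dimension $L$, deviation level $\log n$, under $L\log n\le cn$), your proof closes and yields the stated rate.
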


		\begin{proof}[Proof of \cref{thm_Q}]
			Define 
			\begin{equation}\label{rate_delta}
				\delta  = C  \sqrt{\log(n) \over n L} +  C  {\omega_2 \over \sqrt{L}}
			\end{equation}
            for some constant $C>0$. We work on the event  
			\[
    			\cE_H:= \left\{ 
    				\|\wh H - H\|_\op \le \delta 
    			\right\}.
			\] 
			\cref{lem_H_hat} ensures that $\cE_H$ holds with probability at least  $1-n^{-1}$. Moreover, on the event $\cE_H$, we find that 
		\begin{align*}
			\lambda_K(\wh H) &\ge \lambda_K(H) - \|\wh H - H\|_\op &&\text{by Weyl's inequality}\\ 
			&= \left[\lambda_1(\Omega)\right]^{-1} - \|\wh H - H\|_\op &&\text{by }\Omega = H^{-1}\\
			& \ge  {1\over \max_k (1/\pi_k)+  \Delta_\op} - \delta &&\text{by \eqref{def_Omega} and }\cE_H\\
			&\ge {c\over 4 L\Delta_\i} &&\text{by $\omega_2\sqrt{L} \le c$, \cref{ass_pi} and }\Delta_\op \le L \Delta_\i.
		\end{align*}
		Hence, on the event $\cE_H$, $\wh H$ is invertible. 
		We proceed to prove 
		\begin{align}\label{bd_sw_B_diff}
			&   \|\sw^{1/2}(\wh B^* -B^*)\|_F^2   ~ \lesssim   ~   \left(\omega_2^2 +   \delta^2 L\Delta_\i \right) L^2 \Delta_\i^2,\\ 	\label{bd_mu_diff_B_diff}
			&  \sum_{\ell=1}^L  \left[(\wh \mu_\ell - \mu_\ell)^\T (\wh B^*_\ell- B^*_\ell)\right]^2   \lesssim    \left(
			\omega_1^2 {\log(n\vee p) \over n} + 		\delta^2 L\Delta_\i {\log(n)\over n} 
			\right)L^3 \Delta_\i^2
		\end{align}
		which, in conjunction with \eqref{rate_delta}, $\Delta\asymp 1$, $L\log(n) \le cn$ and $\omega_2\sqrt{L} \le c$, yield the claim.  
 
        To prove \eqref{bd_sw_B_diff}, 
		we write $\wh\Omega = \wh H^{-1}$ for simplicity. 
		By definition, 
		\begin{align}\label{eq_B_star_diff}\nonumber
			\wh B^*  - B^*  &= (\wh B - B)\wh \Omega  + B(\wh \Omega-\Omega)    \\\nonumber
			&= (\wh B - B)\wh \Omega +B^* H(\wh \Omega-\Omega) &&\text{by \cref{lem_key}}\\\nonumber
			&=  (\wh B - B)\wh \Omega +B^*(H\wh \Omega-\bI_L) &&\text{by } H\Omega = \bI_L\\
			&= (\wh B - B)\wh \Omega +B^*(H-\wh H)\wh \Omega &&\text{by } \wh H\wh \Omega = \bI_L.
		\end{align}
        It then follows that 
		\begin{align*}
			&\|\sw^{1/2}(\wh B^*   - B^*)\|_F\\ 
			& \le  
			\|\sw^{1/2}  (\wh B - B)\wh \Omega  \|_F+ 
			\|\sw^{1/2}B^*(H-\wh H)\wh \Omega \|_F
			\\
			& \le \|\sw^{1/2}(\wh B - B)\|_F \| \wh \Omega\|_\op +  \|\sw^{1/2}B^*\|_F ~ \|(H-\wh H)\wh \Omega\|_\op\\
			&\le \|\sw^{1/2}(\wh B - B)\|_F \| \wh \Omega\|_\op + \sqrt{L\Delta_\i} \|\wh H- H\|_{\op}\|\wh \Omega\|_\op.
		\end{align*}
        In the last step, we use $\|\sw^{1/2}B^*\|_F^2 = \tr(B^{*\T} \sw B^*) = \tr(\bM^\T \sw^{-1} \bM) \le L\Delta_\i$.
		We bound  $\|\wh \Omega\|_\op$ as follows: 
		\begin{align*}
			\|\wh \Omega\|_\op   &\le \|\Omega\|_\op  +  \|\wh \Omega - \Omega\|_\op \\
			& = \|\Omega\|_\op   +    \| \Omega(H -  \wh  H)\wh \Omega\|_\op \\
			& \le \|\Omega\|_\op   + \|\Omega\|_\op \|H - \wh H\|_\op \|\wh \Omega\|_\op. 
		\end{align*}
		Since the event $\cE_H$ and $\omega_2\sqrt{L} \le c$ ensure that 
		\[
		\|\Omega\|_\op \|H - \wh H\|_\op \le \left(
		\max_{k\in [K]} {1\over \pi_k} + \Delta_\op 
		\right) \delta  \lesssim \delta  L \Delta_\i \le {1\over 2},
		\]
		we conclude 
		\begin{equation}\label{bd_Omega_op}
			\|\wh \Omega\|_\op \le 2\|\Omega\|_\op \lesssim L\Delta_\i.
		\end{equation} 
		It now follows from \cref{ass_B} that
		\begin{align}\label{bd_sw_diff_B_star}\nonumber
			\|\sw^{1/2}(\wh B^*  - B^*)\|_F &\lesssim \left(\|\sw^{1/2}(\wh B - B)\|_F  +   \delta  \sqrt{L \Delta_\i}\right)  L\Delta_\i\\
			&\le  \left(\omega_2  +   \delta  \sqrt{L\Delta_\i}\right) L\Delta_\i,
		\end{align}
		proving \eqref{bd_sw_B_diff}.\\
			 
        We now  prove \eqref{bd_mu_diff_B_diff}. Using the identity \eqref{eq_B_star_diff}, we have,  on the event $\cE_H$,
			\begin{align}\label{decomp_mu_diff_B_star_diff}\nonumber
				 &\sum_{\ell = 1}^L \left[(\wh \mu_\ell - \mu_\ell)^\T  (\wh B^*_\ell - B^*_\ell)\right]^2\\
                &\le  2\sum_{\ell = 1}^L \left[(\wh \mu_\ell - \mu_\ell)^\T  (\wh B  - B)\wh \Omega_\ell \right]^2 + 2\sum_{\ell = 1}^L \left[(\wh \mu_\ell - \mu_\ell)^\T  B^*(H - \wh H) \wh \Omega_\ell \right]^2.
			\end{align}
        For the first term on the right,  we find 
        \begin{align*}
            \sum_{\ell = 1}^L \left[(\wh \mu_\ell - \mu_\ell)^\T  (\wh B  - B)\wh \Omega_\ell \right]^2 &= \sum_{\ell = 1}^L  (\wh \mu_\ell - \mu_\ell)^\T  (\wh B  - B)\wh \Omega_\ell \wh \Omega_\ell^\T (\wh B  - B)^\T   (\wh \mu_\ell - \mu_\ell) \\
            &\le \max_{\ell \in [L]} \|\wh \mu_\ell - \mu_\ell\|_\i^2 \sum_{\ell = 1}^L  \|(\wh B  - B)\wh \Omega_\ell\wh \Omega_\ell^\T (\wh B  - B)^\T \|_1.
        \end{align*}
        On the one hand, on the event   $\cE_\pi$, \cref{lem_mu}  with $k=\ell$ and $v = \be_j$, in conjunction with union bounds over $j\in[p]$ and $\ell \in [L]$,  yields that
        \begin{equation}\label{bd_mu_hat_diff}
            \max_{\ell \in [L]} \|\wh \mu_\ell - \mu_\ell\|_\i^2 \lesssim {\|\sw\|_\i}{L\log(n\vee p) \over n}
        \end{equation}
        with probability at least $1-n^{-2}$. On the other hand, we notice that 
        \begin{align*}
             \sum_{\ell = 1}^L  \|(\wh B  - B)\wh \Omega_\ell\wh \Omega_\ell^\T (\wh B  - B)^\T \|_1  &= \sum_{\ell = 1}^L  \sum_{i,j\in [p]} \left|\be_i^\T (\wh B  - B)\wh \Omega_\ell\wh \Omega_\ell^\T (\wh B  - B)^\T \be_j \right|\\
            &\le \sum_{i,j\in [p]}  \left(\sum_{\ell = 1}^L 
                    (\be_i^\T (\wh B  - B)\wh \Omega_\ell)^2\right)^{1/2}  \left(\sum_{\ell = 1}^L 
                (\be_j^\T (\wh B  - B)\wh \Omega_\ell)^2\right)^{1/2}  \\
            &= \sum_{i,j\in [p]}  
                     \|\wh \Omega(\wh B-B)^\T \be_i \|_2 \|\wh \Omega(\wh B-B)^\T \be_j\|_2 \\
            &\le \|\wh \Omega\|_\op^2  \sum_{i,j\in [p]}  \|(\wh B-B)^\T \be_i\|_2  \|(\wh B-B)^\T \be_j\|_2\\
            &\le \|\wh \Omega\|_\op^2 \|\wh B-B\|_{1,2}^2.
        \end{align*}
        By invoking  \cref{ass_B} and applying inequalities \eqref{bd_Omega_op} and \eqref{bd_mu_hat_diff}, we obtain
        \begin{align}\label{bd_mu_diff_B_diff_1}
            \sum_{\ell = 1}^L \left[(\wh \mu_\ell - \mu_\ell)^\T  (\wh B^*_\ell - B^*_\ell)\right]^2 \lesssim \omega_1^2{\|\sw\|_\i}{\log(n\vee p) \over n} L^3 \Delta_\i^2
        \end{align}
        with probability at least $1-n^{-2}$. Regarding the second term on the right in \eqref{decomp_mu_diff_B_star_diff}, we have 
        \begin{align*}
            \sum_{\ell = 1}^L \left[(\wh \mu_\ell - \mu_\ell)^\T  B^*(H - \wh H) \wh \Omega_\ell \right]^2 & \le   \sum_{\ell = 1}^L \|(\wh \mu_\ell - \mu_\ell)^\T  B^*\|_2^2 \|(H - \wh H)\wh \Omega_\ell\|_2^2 \\
            &\le \max_{\ell \in [L]} \|(\wh \mu_\ell - \mu_\ell)^\T  B^*\|_2^2 \|H - \wh H\|_\op^2 \sum_{\ell = 1}^L\|\wh \Omega_\ell\|_2^2\\
				&\le 	 \delta^2 \max_{\ell \in [L]} \|(\wh \mu_\ell - \mu_\ell)^\T  B^*\|_2^2   \|\wh \Omega\|_F^2 \\
				&\lesssim \delta^2 \max_{\ell \in [L]} \|(\wh \mu_\ell - \mu_\ell)^\T  B^*\|_2^2  ~ L \|\wh \Omega\|_\op^2 
        \end{align*} 
        on the event $\cE_H$. 
        Further invoking  \eqref{bd_mu_diff_B_star} and \eqref{bd_Omega_op} yields that 
			\begin{align}\label{bd_mu_diff_B_diff_2}
				\sum_{\ell = 1}^L \left[(\wh \mu_\ell - \mu_\ell)^\T  B^*(H - \wh H) \wh \Omega_\ell \right]^2
				&~ \lesssim     		\delta^2  { \log(n)\over n} L^4 \Delta_\i^3
			\end{align}
			with probability at least $1- n^{-2}$. Recall that $\Dt_\i \le C\Dt$ from \cref{ass_cond_separation}.
         Combining \eqref{bd_mu_diff_B_diff_1} with \eqref{bd_mu_diff_B_diff_2} proves \eqref{bd_mu_diff_B_diff}, and hence completes the proof.  
	\end{proof}


      \begin{proof}[Proof of \cref{lem_H_hat}] 
		Recall that 
		\begin{equation}\label{eq_H_diff}
			\begin{split}
				\wh H - H &= D_{\wh\pi} - D_\pi  - \left(
				\wh B^\T \wh \Sigma \wh B - B^\T \Sigma B
				\right)\\
				&= D_{\wh\pi} - D_\pi - B^\T (\wh \Sigma - \Sigma) B - (\wh B-B)^\T \wh \Sigma (\wh B-B)\\
				&\qquad  - (\wh B-B)^\T \wh \Sigma B - B^\T \wh \Sigma (\wh B-B).
			\end{split}
		\end{equation}
		By triangle inequality, we have
        \begin{align*}
			\|\wh H - H\|_\op  &\le  \|D_{\wh\pi} - D_\pi\|_\op   +  \|(\wh B-B)^\T \wh \Sigma (\wh B-B)\|_\op + 2\|(\wh B-B)^\T \wh \Sigma B\|_\op \\ 
			&\quad  + 
			\|B^\T (\wh \Sigma - \Sigma )B\|_\op.
		\end{align*}
        We proceed to bound each term on the right-hand-side separately. On the event $\cE_\pi$, we first find that 
		\begin{align}\label{bd_D_pi}
			\|D_{\wh\pi} - D_\pi\|_\op   \le \max_{\ell \in [L]} |\wh \pi_\ell - \pi_\ell| \lesssim \sqrt{\log(n) \over nL}.
		\end{align}
        Regarding the second term, \cref{ass_B} ensures that
        \begin{equation}\label{bd_quad_B_diff}
            \|(\wh B-B)^\T \wh \Sigma (\wh B-B)\|_\op \le \|\wh \Sigma^{1/2}(\wh B-B)\|_F^2 \le \omega_2^2.
        \end{equation}
        For the last two terms, invoking \cref{lem_B_Sigma_diff_B} ensures with probability at least $1-n^{-2}$,
		\begin{align}\label{bd_quad_B_Sigma_diff} 
            \|B^\T (\wh \Sigma - \Sigma )B\|_\op \lesssim {\Delta_\op   \over L+\Delta_\op}\sqrt{\Delta_\i\log(n) \over nL} +  {\Delta_\op \over L +\Delta_\op}{\Delta_\i\log(n) \over n}.
		\end{align}
		It also  follows that
		\begin{align*}
			 \|(\wh B-B)^\T \wh \Sigma B\|_\op &\le  \|\wh \Sigma^{1/2}(\wh B - B)\|_\op \|\wh\Sigma^{1/2}B\|_\op \\
			&\le  
			\|\wh \Sigma^{1/2}(\wh B - B)\|_\op \left(
			\|B^\T \Sigma B\|_\op + \|B^\T (\wh \Sigma - \Sigma )B\|_\op 
			\right)^{1/2}\\
			&\lesssim  \omega_2 \sqrt{\Delta_\op \over L(L+\Delta_\op)}
		\end{align*}
		where the last step uses $$\|B^\T \Sigma B\|_\op \overset{\eqref{eq_B}}{=} \|D_\pi \bM^\T B\|_\op \lesssim {\Delta_\op \over L(L+\Delta_\op)}.
		$$
		deduced from \cref{lem_B_fact}. Combining the above bounds  with \eqref{bd_D_pi}, \eqref{bd_quad_B_diff} and \eqref{bd_quad_B_Sigma_diff} gives 
		\begin{align*}
    		\|\wh H - H\|_\op  &\lesssim    \sqrt{\log(n) \over nL} + {\Delta_\op   \over L+\Delta_\op}\sqrt{\Delta_\i\log(n) \over nL}\\
      &\qquad +  {\Delta_\op \over 
    			L +\Delta_\op}{\Delta_\i\log(n) \over n}+  \omega_2 \sqrt{\Delta_\op \over L(L+\Delta_\op)}+\omega_2^2 
		\end{align*} 
		with probability at least $1-n^{-1}$. The result follows by using $\Delta_\i \asymp 1$ and $\omega_2\sqrt{L} \le c$ to collect terms.
	\end{proof}



     \subsection{Proof of \cref{thm_B_lasso}: the lasso estimator of $B$}\label{app_sec_proof_thm_B_lasso}

    The proof of \cref{thm_B_lasso} uses the following  Restricted Eigenvalue condition (RE) on the within-class covariance matrix $\sw$.
    \begin{definition}\label{ass_RE}
        For any integer $1\le s\le p$, define
        \begin{equation*}
            \kappa_s := \min_{S\subseteq [p], |S| \le s} \min_{v\in \cC(S, 3)}{v^\T \sw v \over v^\T v}
        \end{equation*}
        where $\cC(S, 3) := \{u\in \RR^p\setminus \{0\}: \|u_{S^c}\|_1 \le 3\|u_S\|_1\}$.
    \end{definition}
    We prove \cref{thm_B_lasso} under the following weaker condition than \cref{ass_sw}.
    \begin{ass}\label{ass_sw_weaker}
        For some $1\le s\le p$, there exists some absolute constants $0<c \le C<\i$ such that 
        $c<\kappa_s \le  \|\sw\|_\i \le C$.
    \end{ass}

    \begin{proof}[Proof of \cref{thm_B_lasso}]
        We first prove that 
        \begin{equation}\label{results_lasso}
            \begin{split}
                 &\max_{\ell \in [L]} {1\over \sqrt n}\|\bX (\wh B_\ell - B_\ell)\|_2 ~ \lesssim ~ \kappa_s^{-1/2}
            \lambda\sqrt{s}\\
            &\max_{\ell \in [L]} \|\Sigma^{1/2} (\wh B_\ell - B_\ell)\|_2 ~ \lesssim ~ \kappa_s^{-1/2}
            \lambda\sqrt{s}\\
            &\max_{\ell \in [L]}  \|\wh B_\ell - B_\ell\|_2 ~ \lesssim ~  \kappa_s^{-1}\lambda\sqrt{s}
            \end{split} 
        \end{equation}
        hold with  probability at least $1-2n^{-1}$. We work on the intersection of   the event 
         \begin{equation}\label{E_lambda}
            \cE_\lambda := 
            \left\{
                {1\over n}\|\bX^\T (\bY - \bX B)\|_\i \le {1\over 2}\lambda
            \right\}
        \end{equation}
        with the event 
        \begin{equation}\label{cond_RE}
            \cE_{RE} :=  \left\{\min_{S\subseteq [p],|S| \le s}\min_{v\in \cC(S, 3)}{\|\bX v\|_2^2 \over n\|v\|_2^2} \ge  {\kappa_s\over 2}\right\}.
        \end{equation}
        Recall $\cC(S, 3) := \{u\in \RR^p\setminus \{0\}: \|u_{S^c}\|_1 \le 3\|u_S\|_1\}$. 
        The main difficulty in proving \cref{thm_B_lasso} is to establish the order of $\lambda$ as well as to show that $\cE_\lambda\cap \cE_{RE}$ holds with overwhelming probability.  
        
        \cref{lem_XE} ensures that $\cE_\lambda$ holds with probability at least $1-n^{-1}$ for any   $\lambda$ satisfying \eqref{rate_lbd}. 
        
        We proceed to verify that  
        $\PP(\cE_{RE})\ge 1-(n\vee p)^{-2}$.
        To this end, for any set $S \subseteq [p]$, we write  $E_S := \text{span}\{\be_j: j\in S\}$. Under \cref{ass_sw_weaker}, we
        define the space
        \begin{equation}\label{def_Gamma_s}
        	\Gamma_s := \bigcup_{S\subseteq [p]:\ |S| = \lfloor Cs \rfloor } E_S
        \end{equation}
        for some constant   $C=C(\kappa_s, \|\sw\|_\i)> 1$.  
        Recall that  $\wh \Sigma = n^{-1}\bX^\T \bX$.   \cref{lem_Sigma_hat_sup}  and the condition $(s\vee L)\log(n\vee p) \le c ~  n$ imply that 
        \begin{equation}\label{isometry}
        	{9\over 10}\|\Sigma^{1/2}v\|_2 \le \|\wh\Sigma^{1/2}v\|_2 \le {11\over 10}\|\Sigma^{1/2} v\|_2\qquad \forall~ v\in \Gamma_s
        \end{equation} 
         holds with probability at least $1-(n\vee p)^{-2}$.
        According to
        \citet[Theorem 3]{RudelsonZhou}, \eqref{isometry} implies that  there exists $S\subseteq[p]$ with $|S|\le s$ such that
    	\begin{equation}\label{lb_RE}
    		 u^\T \wh \Sigma u \ge {1\over 2}u^\T \Sigma u
    	\end{equation} 
    	holds for any 
     $u \in \cC(S, 3)$. \cref{lb_RE} in conjunction with \cref{ass_sw_weaker} ensures that $\cE_{RE}$ holds.

        It is now relatively straightforward 
        to finish the proof of \eqref{results_lasso}.  Standard arguments \citep{BG2011,G2021} yield that on the event $\cE_\lambda$, for any $\ell \in [L]$, 
        \begin{align*}
        	{1\over 2n}\|\bX \wh B_\ell - \bX B_\ell \|_2^2  
        	&\le {\lambda \over 2}\|\wh B_\ell - B_\ell\|_1 + \lambda \|B_\ell\|_1 - \lambda\|\wh B_\ell\|_1
        \end{align*}
        from which one can deduce that 
       \begin{equation}\label{cone_B_diff}
       	\wh B_\ell - B_\ell \in \cC(S_\ell, 3)
       \end{equation}
  		 with  $S_\ell := \supp(B_\ell)$. Invoking \eqref{lb_RE}, we find on the event $\cE_{RE}$ 
  		\begin{align*}
  				{\kappa_s\over 4} \|\wh B_\ell - B_\ell\|_2^2 \le  	{1\over 4} \|\Sigma^{1/2}(\wh B_\ell - B_\ell)\|_2^2  &\le 	{1\over 2n}\|\bX \wh B_\ell - \bX B_\ell \|_2^2\\
  				 & \le  {3\lambda \over 2}\|\wh B_\ell - B_\ell\|_1\\&\le 6\lambda \|(\wh B_\ell - B_\ell)_{S_\ell}\|_1 &&\text{by \eqref{cone_B_diff}}\\
  				& \le 6\lambda \sqrt{s} \|\wh B_\ell - B_\ell\|_2.
  		 \end{align*}
  		We have proved \eqref{results_lasso}.

        Finally, the claim follows by noting that 
        \begin{align*}
            &\|\wh \Sigma^{1/2}(\wh B-B)\|_F \le \sqrt{L}\max_{\ell \in [L]}\|\wh \Sigma^{1/2}(\wh B_\ell-B_\ell)\|_2,\\
            &\|\sw^{1/2}(\wh B-B)\|_F \le \sqrt{L}\max_{\ell \in [L]}\|\sw^{1/2}(\wh B_\ell-B_\ell)\|_2 \le \sqrt{L}\max_{\ell \in [L]}\|\Sigma^{1/2}(\wh B_\ell-B_\ell)\|_2,\\
            &\|\wh B-B\|_{1,2} = \sum_{j=1}^p \|\wh B_{j\cdot} - B_{j\cdot}\|_2 \le 
                \sum_{j=1}^p \|\wh B_{j\cdot} - B_{j\cdot}\|_1 = \sum_{\ell=1}^L \|\wh B_\ell - B_\ell\|_1.
        \end{align*}  
        The proof is complete.
    \end{proof}

    \subsubsection{Two key lemmas used in the proof of \cref{thm_B_lasso}}

    \begin{lemma}\label{lem_XE}
        Under model \eqref{model} with \cref{ass_pi}, assume $L\log(n\vee p) \le n$. With probability at least $1-n^{-1}$, one has
        \[
            \max_{j\in [p],\ \ell \in [L]} \left |{1\over n}\bX_j^\T (\bY_\ell  - \bX B_\ell)\right|     \lesssim  \left(\sqrt{\|\sw\|_\i} + \|\bM\|_\i\right)\sqrt{\log(n\vee p) \over  nL}.
        \]
    \end{lemma}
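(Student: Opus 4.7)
The plan is to fix $(j,\ell) \in [p] \times [L]$ and write
$n^{-1}\bX_j^\T(\bY_\ell - \bX B_\ell) = n^{-1}\sum_{i=1}^n Z_i$
with i.i.d.\ summands $Z_i := X_{ij}(Y_{i\ell} - X_i^\T B_\ell)$. By the identities \eqref{eq_sxy} and \eqref{eq_B}, $\Sigma B = \bM D_\pi = \EE[XY^\T]$, hence each summand is centered: $\EE[Z_i] = \be_j^\T(\bM D_\pi - \Sigma B)\be_\ell = 0$. Writing $\epsilon_{i\ell} := Y_{i\ell} - X_i^\T B_\ell$, conditionally on $Y_i = \be_k$ both $X_{ij}$ and $\epsilon_{i\ell}$ are jointly Gaussian with means $(\mu_k)_j$ and $\1\{k=\ell\} - \mu_k^\T B_\ell$ and variances $[\sw]_{jj}$ and $B_\ell^\T \sw B_\ell$, respectively. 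In particular each $Z_i$ is sub-exponential, so I will apply Bernstein's inequality to $n^{-1}\sum_i Z_i$ and close the argument with a union bound over the $pL$ choices of $(j,\ell)$.

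The heart of the argument is extracting a $1/L$ gain from $\EE[Z_i^2]$. Cauchy--Schwarz applied conditionally gives
\begin{align*}
\EE[Z_i^2 \mid Y_i = \be_k] \lesssim \left([\sw]_{jj} + (\mu_k)_j^2\right)\left[B_\ell^\T \sw B_\ell + (\1\{k=\ell\} - \mu_k^\T B_\ell)^2\right].
\end{align*}
The first factor is at most $\|\sw\|_\i + \|\bM\|_\i^2$ uniformly in $k$, so it can be pulled out of the $k$-sum. Next, and this is the crux, I will verify the algebraic identity
\begin{align*}
\sum_{k=1}^L \pi_k\left[B_\ell^\T \sw B_\ell + (\1\{k=\ell\} - \mu_k^\T B_\ell)^2\right] = \EE[\epsilon_{i\ell}^2] = \pi_\ell(1 - \mu_\ell^\T B_\ell) \le \pi_\ell \le C/L,
\end{align*}
which follows from $B_\ell^\T \Sigma B_\ell = \pi_\ell \mu_\ell^\T B_\ell$ (a diagonal entry of the relation $B^\T \Sigma B = D_\pi \bM^\T B$ in the proof of \cref{lem_key}), combined with $\EE[X]=0$ and \cref{ass_pi}. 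This delivers $\EE[Z_i^2] \lesssim (\|\sw\|_\i + \|\bM\|_\i^2)/L$. For the sub-exponential norm I use $\|Z_i\|_{\psi_1} \le \|X_{ij}\|_{\psi_2}\|\epsilon_{i\ell}\|_{\psi_2} \lesssim \sqrt{\|\sw\|_\i} + \|\bM\|_\i$, where the $\epsilon_{i\ell}$ factor is $O(1)$ because $|\mu_k^\T B_\ell| = \pi_\ell |(\bM^\T\Sigma^{-1}\bM)_{k\ell}| \le \sqrt{\pi_\ell/\pi_k} \lesssim 1$, via Cauchy--Schwarz and the diagonal bound $\mu_\ell^\T \Sigma^{-1}\mu_\ell \le 1/\pi_\ell$ (itself a consequence of the positive-definiteness of $H$ implicit in \cref{lem_key}).

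Bernstein's inequality with $t \asymp (\sqrt{\|\sw\|_\i} + \|\bM\|_\i)\sqrt{\log(n\vee p)/(nL)}$ then yields $nt^2/\sigma^2 \gtrsim \log(n\vee p)$ and $nt/K \gtrsim \sqrt{n\log(n\vee p)/L} \gtrsim \log(n\vee p)$, the latter inequality using the hypothesis $L\log(n\vee p) \le n$. A union bound over the $pL \le (n\vee p)^2$ pairs $(j,\ell)$, with a sufficiently large absolute constant in $t$, yields a total failure probability of at most $n^{-1}$. The main technical obstacle is not Bernstein itself but the variance bound: a naive argument would only give $\EE[Z_i^2] \lesssim \|\sw\|_\i + \|\bM\|_\i^2$, losing the crucial $1/L$ factor; recognizing the telescoping of $\sum_k \pi_k \EE[\epsilon_{i\ell}^2 \mid Y_i = \be_k]$ into $\pi_\ell(1-\mu_\ell^\T B_\ell)$ via the identity in \cref{lem_key} is precisely what produces the $1/\sqrt{L}$ factor in the final rate that is essential for the downstream excess-risk bound.
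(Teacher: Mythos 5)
Your proposal is correct, but it takes a genuinely different route from the paper's. The paper never looks at the individual summands: it decomposes $n^{-1}\bX^\T(\bY-\bX B)=\left(n^{-1}\bX^\T\bY-\sxy\right)-(\wh\Sigma-\Sigma)B$ and controls the two pieces with separate deviation lemmas — the entrywise bound on $\be_j^\T(\wh\Sigma-\Sigma)B_\ell$ comes from \cref{lem_Sigma_hat} combined with the algebraic facts \eqref{bd_BSigmaB}--\eqref{bd_Bmu_sup} of \cref{lem_B_fact}, while the $\bX^\T\bY$ piece is rewritten as $\tfrac{n_\ell}{n}(\wh\bM_{j\ell}-\bM_{j\ell})+(\wh\pi_\ell-\pi_\ell)\bM_{j\ell}$ and handled by \cref{lem_mu} and \cref{lem_pi}. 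You instead observe that the whole quantity is an i.i.d.\ average of centered products $Z_i=X_{ij}(Y_{i\ell}-X_i^\T B_\ell)$ (centering via $\Sigma B=\bM D_\pi$) and make a single Bernstein-plus-union-bound step; your $1/L$ gain comes from the identity $\EE[\epsilon_{i\ell}^2]=\pi_\ell(1-\mu_\ell^\T B_\ell)\le\pi_\ell\lesssim 1/L$, which rests on exactly the same algebra ($B^\T\Sigma B=D_\pi\bM^\T B$ and $\mu_\ell^\T\Sigma^{-1}\mu_\ell\le 1/\pi_\ell$) that the paper packages into \cref{lem_B_fact}, and your bounds $|\mu_k^\T B_\ell|\le\sqrt{\pi_\ell/\pi_k}$ and $B_\ell^\T\sw B_\ell\le\pi_\ell$ are correct. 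Your route is more direct and self-contained for this one lemma; the paper's heavier decomposition pays off because its ingredient \cref{lem_Sigma_hat} is reused elsewhere (e.g.\ in \cref{lem_Sigma_hat_sup}, \cref{lem_B_Sigma_diff_B} and \cref{lem_XE_op}). The final constants match: $\sqrt{\|\sw\|_\i+\|\bM\|_\i^2}$ and $\sqrt{\|\sw\|_\i}+\|\bM\|_\i$ are equivalent up to absolute factors, and your choice of $t$ together with $L\log(n\vee p)\le n$ and the union bound over $pL\le(n\vee p)^2$ pairs indeed gives total failure probability at most $n^{-1}$.

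One step to tighten: invoking ``sub-exponential, hence Bernstein'' with the pair $\bigl(\Var(Z_i),\|Z_i\|_{\psi_1}\bigr)$ is not automatic. The generic sub-exponential Bernstein inequality puts $\|Z_i\|_{\psi_1}^2$ in the quadratic branch, which would destroy the $1/L$ gain, and for a general sub-exponential variable the variance-refined form requires Bernstein's moment condition $\EE|Z_i|^q\le\tfrac12 q!\,\sigma^2 b^{q-2}$, which can fail by logarithmic factors if one only knows the variance and the $\psi_1$-norm. Here it does hold with $\sigma^2\asymp(\|\sw\|_\i+\|\bM\|_\i^2)/L$ and $b\asymp\sqrt{\|\sw\|_\i}+\|\bM\|_\i$: conditioning on $Y_i=\be_k$ and using Gaussian moment bounds gives $\EE|Z_i|^q\lesssim q!\,C^q\bigl(\sqrt{\|\sw\|_\i}+\|\bM\|_\i\bigr)^q\sum_{k}\pi_k c_k^{\,q}$ with $c_k:=\sqrt{B_\ell^\T\sw B_\ell}+|\1\{k=\ell\}-\mu_k^\T B_\ell|\lesssim 1$ and $\sum_k\pi_k c_k^2=\EE[\epsilon_{i\ell}^2]\lesssim 1/L$, so $\sum_k\pi_k c_k^{\,q}\le(\max_k c_k)^{q-2}\sum_k\pi_k c_k^2$ does the job. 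This is the same conditioning computation you already carried out for the variance, just pushed to $q$-th moments; with that sentence added your argument is complete.
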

    \begin{proof}
        Note that
        \begin{align}\label{decomp_XE}
            {1\over n}\bX^\T (\bY - \bX B) &= {1\over n}\bX^\T \bY - \sxy -\wh \Sigma B + \Sigma B
        \end{align}
        We first analyze the term $(\wh \Sigma - \Sigma)B$. 
        Pick any $j\in [p]$, $\ell \in [L]$ and fix $t \ge 0$. 
        We observe  from \cref{lem_pi} and the inequality $L \le n$ that $\cE_\pi$ holds with probability at least $1-2n^{-1}$.
        Next, we invoke \cref{lem_Sigma_hat} with 
        $v_1 = \be_j$, $v_2 = B_\ell$, $t = C\log(n\vee p)$ and  we use \eqref{bd_BSigmaB}, \eqref{bd_Bmu_2} and \eqref{bd_Bmu_sup} in \cref{lem_B_fact}  to derive that the inequalities 
        \begin{align*} 
            \left| \be_j^\T(\wh\Sigma - \Sigma) B_\ell
            \right|
            &\lesssim \sqrt{[\sw]_{jj} B_\ell^\T \sw B_\ell }\sqrt{\log(n\vee p) \over n}\\\nonumber
            &\quad  + 
            \|\bM_{j\cdot}\|_\i \sqrt{B_\ell^\T \bM \bM^\T B_\ell}  \sqrt{\log(n\vee p) \over nL} +  \|\bM_{j\cdot}\|_\i \|\bM^\T B_\ell\|_\i {\log(n\vee p) \over n}\\\nonumber
            &\quad +  \sqrt{[\sw]_{jj}  B_\ell^\T \bM \bM^\T B_\ell} \sqrt{ \log(n\vee p)\over nL}   +  \|\bM_{j\cdot}\|_2\sqrt{B_\ell^\T \sw B_\ell} \sqrt{\log (n\vee p)\over nL} \\\nonumber
            &\lesssim  \sqrt{[\sw]_{jj}+ \|\bM_{j\cdot}\|_\i^2}\sqrt{\log(n\vee p) \over nL} + \|\bM_{j\cdot}\|_\i {\log(n\vee p)\over n}\\
            &\lesssim \sqrt{\|\sw\|_\i + \|\bM\|_\i^2}\sqrt{\log(n\vee p) \over nL}
        \end{align*}
        hold with probability at least $1-(n\vee p)^{-3}$.  After we take the union bound over  $j\in[p]$ and $ \ell \in [L]$,  we have that
        \begin{align}\label{bd_XXB_diff} 
           \max_{j\in [p],\ \ell\in [L]} \left| \be_j^\T(\wh\Sigma - \Sigma) B_\ell
            \right|
            &\lesssim \sqrt{\|\sw\|_\i + \|\bM\|_\i^2}\sqrt{\log(n\vee p) \over nL}
        \end{align}
        holds
        with probability $1-n^{-1}$.  

        To bound the first term in \eqref{decomp_XE}, by recalling that $\sxy = \bM D_\pi$, we find
        \begin{align*}
            {1\over n}\bX_j^\T \bY_\ell - \be_j^\T \sxy \be_\ell &= {1\over n}\sum_{k=1}^L\sum_{i=1}^L \left( X_{ij}Y_{i\ell} - \bM_{j\ell} \pi_\ell \right) \1\{ Y_i=\be_\ell\} \\
            &= {1\over n} \sum_{i=1}^L \left(X_{ij}  - \bM_{j\ell} \right)\1\{ Y_i = \ell\} + \left({n_\ell \over n} - \pi_\ell \right)\bM_{j\ell}
            \\
            & = {n_\ell \over n} (\wh \bM_{j\ell} - \bM_{j\ell}) +  (\wh \pi_\ell - \pi_\ell)\bM_{j\ell}.
        \end{align*}
        Invoking \cref{ass_pi}, \cref{lem_pi} and \cref{lem_mu} with $t = \sqrt{\log(p\vee n)}$  and taking a union bound over $j\in[p]$ and $ \ell \in [L]$, we obtain 
        \begin{align}\label{bd_XY_diff}
            \max_{j\in[p],\ \ell\in [L]}\left|
                 {1\over n}\bX_j^\T \bY_\ell - \be_j^\T \sxy \be_\ell
            \right|
            \lesssim 
            \sqrt{\|\sw\|_\i \log(p\vee n) \over  nL} + \|\bM\|_\i \sqrt{\log(n) \over nL}
        \end{align}
        with probability $1-n^{-1}$. Combining \eqref{bd_XXB_diff} and \eqref{bd_XY_diff} completes the proof. 
    \end{proof}

        \medskip

     Recall the set 
    	$$
    	\Gamma_s = \bigcup_{S\subseteq[p]: |S| = \lfloor Cs\rfloor} E_S
    	$$
    	with $E_S = \text{span}\{\be_j: j\in S\}$. 
    The following lemma gives an upper bound for $v^\T (\wh\Sigma - \Sigma) v$, uniformly over $v\in \Gamma_s$.
    
    \begin{lemma}\label{lem_Sigma_hat_sup}
    	Under model  \eqref{model} with Assumptions \ref{ass_pi} \& \ref{ass_sw_weaker}, assume $(s\vee L)\log(n\vee p) \le n$ for some integer $1\le s < p$. Then, on the event $\cE_\pi$, with probability at least $1-(n\vee p)^{-2}$, the following holds uniformly over $ v\in \Gamma_s$,  
    	\[
    	v^\T (\wh \Sigma - \Sigma) v ~ \lesssim  ~ v^\T \Sigma v \left(\sqrt{s\log(n\vee p) \over n} + \sqrt{L\log(n) \over n}\right).
    	\]
    \end{lemma}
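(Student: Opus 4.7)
My plan is to use the representation $X_i = \mu_{Y_i} + W_i$, where under \eqref{model} the noise $W_i \sim \cN(0,\sw)$ is independent of $Y_i$, to decompose
\[
v^\T(\wh\Sigma - \Sigma)v \;=\; T_1(v) + T_2(v) + 2\,T_3(v),
\]
with $T_1(v) := \sum_{\ell\in[L]}(\wh\pi_\ell-\pi_\ell)(v^\T\mu_\ell)^2$, $T_2(v) := v^\T\wh\sw v - v^\T\sw v$ (for $\wh\sw := n^{-1}\sum_i W_iW_i^\T$), and $T_3(v) := n^{-1}\sum_i(v^\T\mu_{Y_i})(v^\T W_i)$. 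The target rate will be obtained by bounding each piece by $v^\T\Sigma v$ times the appropriate factor, using $v^\T\Sigma v = v^\T\sw v + v^\T\bM D_\pi\bM^\T v$ and, where needed, the AM--GM inequality $\sqrt{v^\T \sw v \cdot v^\T \bM D_\pi \bM^\T v} \le v^\T \Sigma v/2$.

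The piece $T_1$ needs no extra concentration beyond $\cE_\pi$: by \cref{ass_pi}, $\sum_\ell(v^\T\mu_\ell)^2 = v^\T\bM\bM^\T v \le (L/c)\,v^\T\bM D_\pi\bM^\T v$, and $\|\wh\pi-\pi\|_\infty \lesssim \sqrt{\log(n)/(nL)}$ on $\cE_\pi$, so $|T_1(v)|\lesssim v^\T\Sigma v\sqrt{L\log(n)/n}$ uniformly over all $v \in \RR^p$. For $T_2$ I exploit that $W_1,\ldots,W_n$ are iid $\cN(0,\sw)$, so for any $v$ supported on $S$, $T_2(v) = v_S^\T(\wh\sw_{SS}-\sw_{SS})v_S$ reduces to an ordinary Gaussian sample-covariance deviation in $|S|$ dimensions. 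Standard Wishart concentration gives $\|\sw_{SS}^{-1/2}(\wh\sw_{SS}-\sw_{SS})\sw_{SS}^{-1/2}\|_\op \lesssim \sqrt{(|S|+t)/n}$ (the subdominant $(|S|+t)/n$ term is absorbed by $(s\vee L)\log(n\vee p)\le n$) with probability $1-2e^{-t}$; taking $t \asymp s\log(n\vee p)$ and union-bounding over the $\binom{p}{\lfloor Cs\rfloor}$ subsets yields $\sup_{v\in\Gamma_s}|T_2(v)|/(v^\T\sw v) \lesssim \sqrt{s\log(n\vee p)/n}$ with probability $1-(n\vee p)^{-2}$, which is the desired contribution since $v^\T\sw v \le v^\T\Sigma v$.

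The cross term $T_3(v) = v^\T \Xi v$ with $\Xi := n^{-1}\sum_i \mu_{Y_i}W_i^\T$ is the most delicate. Conditioning on the labels and grouping by class,
\[
T_3(v) = \sum_{\ell\in[L]}\wh\pi_\ell(v^\T\mu_\ell)(v^\T\bar W_\ell),\qquad \bar W_\ell := n_\ell^{-1}\!\!\sum_{i:Y_i=\be_\ell}\!\!W_i \sim \cN(0,\sw/n_\ell),
\]
with $\bar W_1,\ldots,\bar W_L$ conditionally independent. Hence $T_3(v)\mid\{Y_i\}$ is centered Gaussian with variance $n^{-1}v^\T\sw v \cdot \sum_\ell\wh\pi_\ell(v^\T\mu_\ell)^2$, and on $\cE_\pi$ this is at most $2n^{-1}v^\T\sw v\cdot v^\T\bM D_\pi\bM^\T v \le (v^\T\Sigma v)^2/(2n)$ by AM--GM. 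A Gaussian tail bound thus gives $|T_3(v)| \lesssim v^\T\Sigma v\sqrt{t/n}$ for each fixed $v$. To obtain uniformity on $\Gamma_s$, for each $S$ with $|S|=\lfloor Cs\rfloor$ I would reparametrize $w = \Sigma_{SS}^{1/2}v_S$, cover the Euclidean unit sphere of $\RR^{|S|}$ by a $1/4$-net of cardinality at most $9^{|S|}$, apply the standard inequality $\|A\|_\op \le 2\max_{u\in\mathrm{net}}|u^\T Au|$ to the symmetrized version of $\Sigma_{SS}^{-1/2}\Xi_{SS}\Sigma_{SS}^{-1/2}$, and finally union-bound over $S$ with $t\asymp s\log(n\vee p)$.

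The principal obstacle is precisely this uniform control of $T_3$: a direct sub-Gaussian bound on $v^\T X_i$ would scale with $\max_\ell(v^\T\mu_\ell)^2 \vee v^\T\sw v$, which under \cref{ass_pi} is only bounded by $(L/c)\,v^\T\Sigma v$ and would cost a spurious $\sqrt{L}$ factor. The label-conditional decomposition of $T_3$ into $L$ independent Gaussians, together with the AM--GM variance bound, is what absorbs this factor and allows the covering to proceed in the $\Sigma$-geometry. Collecting the three pieces via a union bound over $\cE_\pi$ and the concentration events for $T_2$ and $T_3$ then yields the stated bound with probability at least $1-(n\vee p)^{-2}$.
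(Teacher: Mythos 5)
Your proposal is correct, and its starting point is in fact the same algebraic decomposition the paper uses inside the proof of \cref{lem_Sigma_hat}: writing $X_i=\mu_{Y_i}+W_i$, the three pieces you call $T_1$, $T_2$, $2T_3$ are exactly the paper's prior-error term $\sum_k(\wh\pi_k-\pi_k)\mu_k\mu_k^\T$, the within-class term $\wh\Sigma_w-\sw$ (handled there by \cref{lem_sw}), and the cross term $\sum_k\wh\pi_k[(\wh\mu_k-\mu_k)\mu_k^\T+\mu_k(\wh\mu_k-\mu_k)^\T]$ (handled there by \cref{lem_mu_sum}, i.e.\ the same label-conditional Gaussian argument you use), and the treatment of $T_1$ on $\cE_\pi$ via $\bM\bM^\T\preceq (L/c)\,\bM D_\pi\bM^\T$ is identical. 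Where you genuinely diverge is the uniformity step: the paper keeps the three terms together, builds a single Euclidean $(1/3)$-net of $\Gamma_s\cap\cS^p$ (union over supports of size $\lfloor Cs\rfloor$, cardinality $\lesssim p^{Cs}$), and re-applies the fixed-direction bound of \cref{lem_Sigma_hat} at each net point with $t\asymp s\log(n\vee p)$; you instead make each stochastic term uniform separately, via a per-support Wishart operator-norm bound relative to $\sw_{SS}$ for $T_2$ and a $\Sigma_{SS}^{1/2}$-whitened sphere net plus the symmetric-matrix inequality $\|A\|_\op\le 2\max_{u\in\mathrm{net}}|u^\T Au|$ for $T_3$, with the AM--GM step $v^\T\sw v\cdot v^\T\bM D_\pi\bM^\T v\le (v^\T\Sigma v)^2/4$ absorbing the conditional variance. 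Both routes give the stated rate with the stated probability; the paper's is more economical because it recycles its fixed-vector lemma wholesale, while yours is arguably cleaner in one respect: by covering in the $\sw_{SS}$- and $\Sigma_{SS}$-geometries you keep every bound normalized by the relevant quadratic form of the \emph{same} $v$, so the final statement $v^\T(\wh\Sigma-\Sigma)v\lesssim v^\T\Sigma v(\cdots)$ follows without having to compare $u^\T\Sigma u$ at net points to $v^\T\Sigma v$ of the approximated vector (a step the paper's Euclidean-net version glosses over and which implicitly involves the conditioning of $\Sigma_{SS}$). Your identification of the $\sqrt{L}$ loss that a naive sub-Gaussian treatment of $v^\T X_i$ would incur, and its repair through the label-conditional variance $n^{-1}v^\T\sw v\sum_\ell\wh\pi_\ell(v^\T\mu_\ell)^2$, is precisely the mechanism the paper exploits as well.
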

    \begin{proof}
    	Without loss of generality, we prove the result for 
    	\[
    	v \in  \Gamma_s' := \Gamma_s \cap \cS^p
    	\]
    	via a discretization argument.
    	For any subset $S\subseteq [p]$ with $|S| = \lfloor Cs\rfloor$, let $\cN_S$ be the $(1/3)$-net of $E_S\cap \cS^p$ satisfying 
    	\[
    	\cN_S \subset E_S\cap \cS^p,\qquad |\cN_S| \le  7^{|S|}
    	\]
    	The existence of such net is ensured by \citet[Lemma 23]{RudelsonZhou}. 
    	Hence, for any fixed $v \in E_S\cap \cS^p$, there exists $u\in \cN_S$ such that $\|u-v\|_2\le 1/3$, and we find
    	\begin{align*}
    		v^\T (\wh \Sigma - \Sigma)v  &\le (v-u)^\T (\wh \Sigma - \Sigma)v + u^\T  (\wh \Sigma - \Sigma)(v-u) + u^\T  (\wh \Sigma - \Sigma) u\\
    		&\le 2\|u-v\|_2 \sup_{v\in E_S\cap \cS^p} v^\T(\wh \Sigma - \Sigma)v  + u^\T  (\wh \Sigma - \Sigma) u.
    	\end{align*}
    	The second inequality uses the fact that $(u-v)/\|u-v\|_2 \in E_S\cap \cS^p$ whenever $u\ne v$. Taking the supremum over $v\in E_S\cap \cS^p$ and the maximum over $u\in \cN_S$ gives 
    	\[
    	\sup_{v\in E_S\cap \cS^p} v^\T (\wh \Sigma - \Sigma)v \le 3 \max_{u\in \cN_S} u^\T  (\wh \Sigma - \Sigma) u
    	\]
     and 
     \begin{align}\label{todo}
         \sup_{v\in \Gamma_s'} v^\T (\wh \Sigma - \Sigma)v \le 3 \max_{u\in \cN} u^\T  (\wh \Sigma - \Sigma) u.
     \end{align}
    	Here 
    	\[
    	\cN := \bigcup_{S\subseteq[p]:\ |S| = \lfloor Cs\rfloor} \cN_S
    	\]   has cardinality
    	\[
    	   |\cN| \le \sum_{|S| = \lfloor Cs\rfloor} |\cN_S| \lesssim  {p}^{Cs}.
    	\]
    	  We will bound the right hand side \eqref{todo} via the same arguments used to prove \cref{lem_Sigma_hat}, except for the control of the term $\sum_k (\wh\pi_k - \pi_k)v^\T \mu_k\mu_k^\T v$. Specifically, repeating the proof of \cref{lem_Sigma_hat} with $v_1=v_2=v$, for any $v\in \cN$, gives that for any $t\ge 0$
    	\begin{align*}
    		v^\T(\wh\Sigma - \Sigma) v &\lesssim  v^\T \sw v \left(\sqrt{t\over n} + {t\over n}\right) + \sum_{k=1}^L (\wh \pi_k - \pi_k) v^\T \mu_k\mu_k^\T v +2 {\|\bM^\T v\|_2\over \sqrt L}\sqrt{v^\T\sw v} \sqrt{t\over n}.
    	\end{align*}
     holds, with probability at least $1-6e^{-t}$.
    	Note that 
    	\[
    	{\|\bM^\T v\|_2^2\over L} \lesssim v^\T \bM D_\pi \bM^\T v
    	\]
    	under \cref{ass_pi}. By using the decomposition of $\Sigma$ in \eqref{eq_Sigma},  choosing $t = Cs\log(n\vee p)$ for some large constant $C>0$,
     we obtain that
    	\[
    	v^\T(\wh\Sigma - \Sigma) v  \lesssim  v^\T \Sigma v  \left(\sqrt{s\log(n\vee p)\over n} + {s\log(n\vee p)\over n}\right) + \sum_{k=1}^L (\wh \pi_k - \pi_k) v^\T \mu_k\mu_k^\T v  
    	\]
    	holds uniformly over $v\in \cN$, 
    	with probability at least $1-(n\vee p)^{-C''s}$. 
    	 On the event $\cE_\pi$, invoking  \cref{ass_pi}, we conclude that
    	\[
    	\sum_{k=1}^L (\wh \pi_k - \pi_k) v^\T \mu_k\mu_k^\T v \le \max_{k\in [L]}{|\wh\pi_k - \pi_k| \over \pi_k} v^\T \bM D_\pi \bM^\T v\lesssim v^\T \bM D_\pi \bM^\T v\sqrt{L\log(n) \over n}
    	\]
    	holds uniformly over  $\cN$. The proof  is complete in view of \eqref{todo}.
    \end{proof}

    \subsection{Proof of \cref{thm_B_rr}: the reduced-rank estimator of $B$}\label{app_sec_proof_thm_B_rr}
    
    \begin{proof}
    	Let $C>0$ be some constant to be specified later. 
    	We work on the event 
    	\[
    		\cE_\lambda' = \left\{
    			{1\over n}\|\bX^\T (\bY - \bX B)\|_\op \le   \sqrt{\lambda/ C}
    		\right\}
    	\]
    	intersecting with 
    	\[
    		\cE_X  = \left\{
    				\lambda_p(\wh \Sigma) \ge {1\over 2}\lambda_p(\Sigma) \ge {c\over 2}
    		\right\}.
    	\]
        Here $c$ is defined in \cref{ass_sw}.
    	\cref{lem_Sigma_diff_op}, \cref{lem_XE_op} and \cref{ass_sw} together with \eqref{rate_lbd_rr} ensure that  $\cE_X \cap \cE_\lambda'$ hold with probability at least $1-n^{-1}$. 
    	Write $\wh r = \rank(\wh B)$ and $r = \rank(B)$.
    	By the optimality of $\wh B$, we have 
    	\begin{align*}
    		{1\over n}\|\bY - \bX \wh B\|_F^2 + \lambda \wh r &  ~ \le ~  	{1\over n}\|\bY - \bX B\|_F^2 + \lambda  r.
    	\end{align*}
    	Working out the squares gives 
    	\begin{align*}
    		{1\over n}\|\bX \wh B - \bX B\|_F^2 & \le  {2\over n}| \langle \bY - \bX B, \bX(\wh B - B)\rangle| + \lambda (r - \wh r)\\
    		&\le 2\|\wh B -B\|_*  {1\over n}\|\bX^\T (\bY - \bX B)\|_\op + \lambda (r - \wh r)\\
    		&\le 2\sqrt{r+\wh r}~ \|\wh B - B\|_F\sqrt{\lambda/C} + \lambda (r - \wh r) &&\text{by $\cE_\lambda'$}.
    	\end{align*}
Using the properties on $\cE_X$, we find
\begin{align*}
     \| \wh B-B\|_F^2 &\le {2\over c} 2\sqrt{\wh r+r} \|\wh B-B\|_F \sqrt{\lambda/C} + {2\over c} \lambda(r-\wh r)\\
    &\le {8\lambda\over c^2C} (\wh r+r) + {1\over2}  \| \wh B-B\|_F^2 +{2\over c} \lambda(r-\wh r) &&\text{by $2xy\le x^2/2+2y^2$}\\
    &= {1\over2}  \| \wh B-B\|_F^2 +  {4\over c} \lambda r &&\text{for $C=4/c$}.
\end{align*}
In the same way, on the event $\cE_X$,
\begin{align*}
   {1\over n}  \|\bX \wh B - \bX B\|_F^2   
    		&\le \sqrt{{8\over cC} \lambda  (r+\wh r) }  ~   {1\over \sqrt n} \|\bX\wh B - \bX B\|_F   + \lambda (r - \wh r) \\
      &\le \frac{1}{2n} \|\bX\wh B - \bX B\|_F^2 + 2\lambda r&&\text{for $C=4/c$}.
\end{align*}
%
This implies the rates for both $\| \bX\wh B-\bX B\|_F^2$ and $\| \wh B-B\|_F^2$.
They imply in turn the rates for $\| \sw^{1/2} (\wh B - B)\|_F$ and $\|\wh B-B\|_{2,1}$ invoking
the inequalities 
\begin{align*}
\| \sw^{1/2} (\wh B - B)\|_F^2 &\le  \| \Sigma^{1/2} (\wh B - B)\|_F^2 
    \le {2\over c} \|\wh B-B\|_F^2 &&\text{by (\ref{eq_Sigma})}
   \end{align*}
    and
    \begin{align*}
        \|\wh B-B\|_{1,2}^2 &\le {p}\|\wh B- B\|_F^2.
    \end{align*}
    This   completes the proof. 
    \end{proof}

	\bigskip

	\begin{lemma}\label{lem_XE_op}
		Under model \eqref{model} and \cref{ass_pi}, assume $(p+L)\log(n\vee p) \le n$. With probability at least $1-2n^{-2}$, we have
		\[
		  {1\over n} \| \bX^\T (\bY  - \bX B)\|_\op      \lesssim  \sqrt{\|\sw\|_\op(1+\Delta_\i)}\sqrt{(p+L)\log(n) \over  nL}.
		\]
	\end{lemma}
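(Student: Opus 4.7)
\textbf{Proof plan for Lemma \ref{lem_XE_op}.} The plan is to mirror the proof of Lemma \ref{lem_XE}, but replace the sup-norm/entrywise control by an operator-norm control achieved through an $\varepsilon$-net argument. Since $\Sigma B = \sxy$ by definition of $B$, the starting decomposition is
\[
\frac{1}{n}\bX^\T(\bY - \bX B) \;=\; \underbrace{\Bigl(\frac{1}{n}\bX^\T \bY - \sxy\Bigr)}_{T_1} \;-\; \underbrace{(\wh\Sigma-\Sigma)\,B}_{T_2},
\]
and I would bound $\|T_1\|_\op$ and $\|T_2\|_\op$ separately. Using $(1/4)$-nets $\mathcal N_u\subset S^{p-1}$ and $\mathcal N_v\subset S^{L-1}$ of cardinalities at most $9^p$ and $9^L$, the standard discretization gives $\|M\|_\op\le 2\max_{(u,v)\in\mathcal N_u\times\mathcal N_v} u^\T M v$ for $M\in\{T_1,T_2\}$. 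After the union bound, I would set the concentration level at $t\asymp (p+L)\log n$, which is what produces the $\sqrt{(p+L)\log n}$ factor in the stated rate.

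For $T_2$, I would apply Lemma \ref{lem_Sigma_hat} with $v_1=u$ and $v_2=Bv$ for each fixed pair $(u,v)\in \mathcal N_u\times\mathcal N_v$, and then control the $v_2$-side quantities $v^\T B^\T\sw Bv$, $v^\T B^\T\bM\bM^\T Bv$, and $\|\bM^\T B v\|_\infty$ uniformly in unit $v$ by invoking Lemma \ref{lem_B_fact} (applied with operator-norm bounds instead of the row-by-row bounds used in Lemma \ref{lem_XE}). On the $u$-side, $u^\T\sw u\le\|\sw\|_\op$ and $u^\T\bM\bM^\T u\lesssim \|\sw\|_\op \Delta_\op$, so the dominant contribution becomes $\sqrt{\|\sw\|_\op(1+\Delta_\infty)/L}\cdot\sqrt{t/n}$ which, after $t\asymp (p+L)\log n$, matches the claimed rate.

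For $T_1$, for fixed unit $u,v$ I would write
\[
u^\T T_1 v \;=\; \frac{1}{n}\sum_{i=1}^n v_{Y_i}\bigl(u^\T X_i - u^\T\mu_{Y_i}\bigr) \;+\; \sum_{\ell=1}^L (\wh\pi_\ell-\pi_\ell)\, v_\ell\, u^\T\mu_\ell,
\]
and condition on $Y_1,\ldots,Y_n$. The first piece is then Gaussian with conditional variance $n^{-1}(u^\T\sw u)\sum_\ell\wh\pi_\ell v_\ell^2\le n^{-1}\|\sw\|_\op\pi_{\max}\|v\|_2^2\asymp \|\sw\|_\op/(nL)$ under Assumption \ref{ass_pi} and on $\mathcal E_\pi$; this contributes $\sqrt{\|\sw\|_\op\, t/(nL)}$ with the same $t$. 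The second piece is controlled on $\mathcal E_\pi$ by Cauchy–Schwarz as $\|\wh\pi-\pi\|_2\cdot\|v\|_\infty\cdot\sqrt{u^\T\bM\bM^\T u}\lesssim \sqrt{\|\sw\|_\op\Delta_\infty}\sqrt{\log n / n}$, which is absorbed into the stated bound.

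The principal obstacle is step 2 for $T_2$: Lemma \ref{lem_Sigma_hat} produces several terms beyond the pure sub-Gaussian one, including a linear-in-$(\wh\pi-\pi)$ term of the form $\sum_k(\wh\pi_k-\pi_k)(u^\T\mu_k)(v^\T B^\T\mu_k)$ and a cross term $\|\bM^\T u\|_2/\sqrt L\cdot\sqrt{v^\T B^\T\sw Bv}$; verifying that each of these remains of order $\sqrt{\|\sw\|_\op(1+\Delta_\infty)(p+L)\log n/(nL)}$ uniformly over the product net, while keeping the failure probability summable against $9^{p+L}$, is the delicate bookkeeping step. The rest is standard and parallels the proof of Lemma \ref{lem_XE}.
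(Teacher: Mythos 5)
Your decomposition into $T_1=\tfrac1n\bX^\T\bY-\sxy$ and $T_2=(\wh\Sigma-\Sigma)B$ is exactly the paper's starting point, and your treatment of $T_1$ (two-sided net, conditioning on $\bY$ so the centered piece is Gaussian with variance $\lesssim \|\sw\|_\op/(nL)$, separate handling of the $\wh\pi-\pi$ piece) coincides with the paper's, which packages the two pieces as \cref{lem_mu_sum} and \cref{lem_pi_sum}. Where you genuinely diverge is $T_2$: the paper avoids the product-net bookkeeping you flag as delicate by simply writing $\|(\wh\Sigma-\Sigma)B\|_\op\le\|\Sigma^{1/2}\|_\op\,\|\Sigma^{-1/2}(\wh\Sigma-\Sigma)\Sigma^{-1/2}\|_\op\,\|\Sigma^{1/2}B\|_\op$, invoking \cref{lem_Sigma_diff_op} for the whitened middle factor (rate $\sqrt{(p+L)\log n/n}$) together with $\|\Sigma\|_\op\le\|\sw\|_\op(1+\Delta_\i)$ and $\|\Sigma^{1/2}B\|_\op^2=\|D_\pi\bM^\T B\|_\op\lesssim 1/L$ from \cref{lem_B_fact}; this is a three-line argument, and \cref{lem_Sigma_diff_op} is needed anyway for the event $\cE_X$ in \cref{thm_B_rr}, so nothing extra is proved. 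Your route — applying \cref{lem_Sigma_hat} to $(u,Bv)$ over a $9^{p+L}$-size product net and controlling $v^\T B^\T\sw Bv\le 1/L$, $\|\bM^\T Bv\|_2\le 1$, $\|\bM^\T u\|_\i\le\sqrt{\|\sw\|_\op\Delta_\i}$ via \cref{lem_B_fact} — does deliver the same rate (its Bernstein-type $t/n$ remainder is no worse than the analogous term the paper itself carries in its bound of $\rII$), so the approach is sound, just heavier; the submultiplicativity shortcut is what the extra generality of \cref{lem_Sigma_diff_op} buys.

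One small correction to your $T_1$ step: the split $\|\wh\pi-\pi\|_2\,\|v\|_\infty\,\|\bM^\T u\|_2$ only gives $\sqrt{\|\sw\|_\op\Delta_\op\,\log n/n}$, and since $\Delta_\op$ can be as large as $L\Delta_\i$ this exceeds your claimed $\sqrt{\|\sw\|_\op\Delta_\i\,\log n/n}$ (and the target rate) once $L^2\gtrsim p$. You should instead split as $\|\wh\pi-\pi\|_2\,\|v\|_2\,\max_\ell|u^\T\mu_\ell|$ and use $\|\bM^\T u\|_\i\le\|\sw\|_\op^{1/2}\sqrt{\Delta_\i}$ (the bound the paper uses), or simply invoke \cref{lem_pi_sum} as the paper does; with that fix the stated bound follows.
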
 
	\begin{proof}
		By \eqref{decomp_XE}, it suffices to bound from above 
		\[
			\rI = \left\|(\wh \Sigma - \Sigma) B\right\|_\op,\qquad \rII =\left\| {1\over n} \bX^\T \bY - \sxy \right\|_\op = \left\| {1\over n} \bX^\T \bY - \bM D_\pi\right\|_\op.
		\]
        For $\rI$,  by invoking \cref{lem_Sigma_diff_op}, we find with probability at least $1-n^{-2}$,
		\begin{align*}
			\rI & \le  \|\Sigma^{1/2}\|_\op \|\Sigma^{-1/2}(
			\wh \Sigma -\Sigma 
			)\Sigma^{-1/2}\|_\op \|
			\Sigma^{1/2} B\|_\op \\
			&\lesssim
			\sqrt{\|\sw\|_\op (1+\Delta_\i)} \sqrt{(p+L)\log(n)\over nL} 
		\end{align*}  
		where in the last step we also use 
		\[
			\|\Sigma\|_\op  = \|\sw^{1/2} (\bI_p + \sw^{-1/2}\bM D_\pi \bM^\T \sw^{-1/2})\sw^{1/2}\|_\op \le \|\sw\|_\op 
			\left(
				1 + {\Delta_\op \over L}
			\right) \le \|\sw\|_\op (1 + \Delta_\i)
		\]
		as well as 
		\[
			\|\Sigma^{1/2} B\|_\op^2 = \|B^\T \Sigma B\|_\op  = \|D_\pi \bM^\T B\|_\op  \lesssim {\Delta_\op \over L(L+\Delta_\op)} \le  {1\over L}
		\]
		from \cref{lem_B_fact}. 
		
		Regarding $\rII$, by standard discretization argument, it suffices to bound from above 
		\[
			\max_{v\in \cN_p(1/3), u \in \cN_L(1/3)}   v^\T \left({1\over n}\bX^\T \bY - \bM D_\pi \right) u
		\]
		with $\cN_p(1/3)$ and $\cN_L(1/3)$ being the $(1/3)$-epsilon net of $\cS_p$ and $\cS_m$, respectively.  Fix any $v\in \cN_p(1/3)$ and $u\in \cN_L(1/3)$.  Observe that 
		\begin{align*}
			 v^\T \left({1\over n}\bX^\T \bY - \bM D_\pi \right) u & = v^\T (\wh \bM D_{\wh \pi} - \bM D_\pi )u \\
			& = v^\T (\wh \bM - \bM) D_{\wh \pi} u + v^\T \bM \left(D_{\wh \pi} -  D_\pi \right)u \\
			& = \sum_{k = 1}^L u_k {n_k \over n} v^\T (\wh \mu_k - \mu_k) + \sum_{k = 1}^L  (\wh \pi_k - \pi_k) u_k  v^\T \mu_k.
		\end{align*}
		Invoking \cref{lem_mu_sum} and \cref{lem_pi_sum} yields that, for any $t\ge 0$, with probability $1-4e^{-t/2}$,
		\begin{align*}
			\rII  &\lesssim \|u\|_2 \sqrt{t v^\T \sw v \over n } \sqrt{\max_k n_k \over n} + \sqrt{t \sum_{k=1}^m \pi_k u_k^2 (v^\T \mu_k)^2 \over n} + {t \|u\|_\i \|v^\T \bM\|_\i \over n}\\
			&\lesssim   \sqrt{t\|\sw\|_\op \over nL} + \sqrt{t  \|\sw\|_\op \Delta_\i \over nL} +  \sqrt{\|\sw\|_\op \Delta_\i} {t \over n}
		\end{align*}
		where in the last step we use  
		\[
					\|v^\T \bM\|_\i = \|v^\T \sw^{1/2}\|_2 \|\sw^{-1/2}\bM\|_{2,\i} \le \|\sw\|_\op^{1/2} \sqrt{\Delta_\i}.
		\] 
        Recall that $|\cN_p(1/3)| \le 7^p$ and $|\cN_L(1/3)|\le 7^L$.
		Choosing $t =C(p+L)\log(n)$, taking the union bounds over $\cN_p(1/3)$ and $\cN_L(1/3)$ and combining with the bound of $\rI$ complete the proof. 
	\end{proof}

     \subsection{Proof of the equivalence between \eqref{bayes_rule} and \eqref{Bayes_FDA_B_star}}\label{app_sec_proof_equiv}
     \begin{proof}
            Recall that $B^* = \sw^{-1}\bM$. 
            We find that 
            \begin{align*}
                  B^* (B^{*\T}\sw B^*)^+ B^{*\T} 
                &=   \sw^{-1}\bM ( \bM^\T \sw^{-1} \bM)^+\bM^\T \sw^{-1} 
            \end{align*}
            so that 
            \begin{align*}
                &\argmin_{\ell \in [L]}~ (x- \mu_\ell)^\T B^* (B^{*\T}\sw B^*)^+ B^{*\T} (x-\mu_\ell)\\
                &= \argmin_{\ell \in [L]}~  2x^\T \sw^{-1}\bM ( \bM^\T \sw^{-1} \bM)^+\bM^\T \sw^{-1}  \mu_\ell  + \mu_\ell^\T \sw^{-1}\bM ( \bM^\T \sw^{-1} \bM)^+\bM^\T \sw^{-1}\mu_\ell.
            \end{align*}
            Write the singular value decomposition of $\sw^{-1/2} \bM$ as $U\Lambda V^\T$ with $U \in \RR^{p\times r}$ containing the first $r$ singular vectors and $r = \rank(\sw^{-1/2}\bM)$. It follows that the above equals to 
            \begin{align*}
                & \argmin_{\ell \in [L]}~  2x^\T \sw^{-1/2} UU^\T \sw^{-1/2}\bM \be_\ell   + \be_\ell^\T \bM^\T   \sw^{-1/2} UU^\T \sw^{-1/2}\bM \be_\ell  \\
                &= \argmin_{\ell \in [L]}~  2x^\T \sw^{-1}\bM \be_\ell   + \be_\ell^\T \bM^\T   \sw^{-1}\bM \be_\ell  
            \end{align*}
            which, in view of  \eqref{bayes_rule} and \eqref{def_G}, completes the proof.
        \end{proof}

\section{Technical lemmas}\label{app_sec_tech_lem}
    
    The following lemma provides a few useful facts on quantities related with $B$ in \eqref{def_B}.
    Write 
    \begin{equation}\label{def_Omega}
        \Omega := D_\pi^{-1} + \dist.
    \end{equation}
    \begin{lemma}\label{lem_B_fact}
        With $B$ defined in \eqref{def_B}, we have
        \begin{align}\label{mu_B}
            & \bM^\T B =  \dist ~ \Omega^{-1},\\\label{eq_Omega} 
            &H = D_\pi -  D_\pi\bM^\T B = \Omega^{-1},\\\label{quad_B}
            & B^\T \sw B  = \Omega^{-1} \dist ~ \Omega^{-1}.
        \end{align}
    	Consequently, for any $k,\ell \in [L]$, we have 
    	\begin{align}\label{bd_BSigmaB}
    		 & B_\ell^\T \sw B_\ell  \le B_\ell^\T \Sigma B_\ell \le \pi_\ell \left(1 \wedge \pi_\ell \Delta_{\ell \ell}\right),\\\label{bd_Bmu_2}
    		& B_\ell^\T \bM \bM^\T B_\ell \le \left(
    		 1 \wedge   {\pi_\ell^2 \over \pi_{\min}} \Delta_{\ell \ell}
    		\right),\\\label{bd_Bmu_sup}
    		&  \left|\mu_k^\T B_\ell\right| \le \left(
    		1 \wedge  \pi_\ell \sqrt{\Delta_{kk}\Delta_{\ell \ell}}
    		\right).
    	\end{align}
    \end{lemma}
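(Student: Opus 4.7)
The plan is to establish the three identities (\ref{mu_B})--(\ref{quad_B}) via Woodbury's matrix identity applied to $\Sigma = \sw + \bM D_\pi \bM^\T$, and then deduce the inequalities (\ref{bd_BSigmaB})--(\ref{bd_Bmu_sup}) from these identities combined with standard PSD comparisons and Cauchy--Schwarz. Woodbury yields
\[
    \Sigma^{-1} = \sw^{-1} - \sw^{-1}\bM\bigl(D_\pi^{-1} + \Delta\bigr)^{-1}\bM^\T\sw^{-1} = \sw^{-1} - \sw^{-1}\bM\Omega^{-1}\bM^\T\sw^{-1},
\]
and the recurring algebraic tool will be the push-through identity $\Delta\Omega^{-1} = \bI_L - D_\pi^{-1}\Omega^{-1}$, which follows immediately from $\Omega = D_\pi^{-1} + \Delta$.

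For (\ref{mu_B}) I would compute $\bM^\T B = \bM^\T\Sigma^{-1}\bM D_\pi$; Woodbury gives $\bM^\T\Sigma^{-1}\bM = \Delta - \Delta\Omega^{-1}\Delta$, and the push-through identity lets one factor this as $\Delta\Omega^{-1}D_\pi^{-1}$, whence $\bM^\T B = \Delta\Omega^{-1}$. Plugging into $H = D_\pi - D_\pi\bM^\T B$ from Lemma~\ref{lem_key} immediately gives $H = D_\pi(\bI_L - \Delta\Omega^{-1}) = D_\pi \cdot D_\pi^{-1}\Omega^{-1} = \Omega^{-1}$, which is (\ref{eq_Omega}). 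For (\ref{quad_B}), I would decompose $B^\T\sw B = B^\T\Sigma B - B^\T\bM D_\pi\bM^\T B$ using $\sw = \Sigma - \bM D_\pi\bM^\T$; the first term evaluates to $D_\pi\Delta\Omega^{-1} = D_\pi - \Omega^{-1}$ and the second to $\Omega^{-1}\Delta D_\pi\Delta\Omega^{-1}$ (using $B^\T\bM = (\bM^\T B)^\T = \Omega^{-1}\Delta$), after which a short algebraic verification (left-multiply by $\Omega$ and apply $\Omega - \Delta = D_\pi^{-1}$) confirms that the difference collapses to the claimed symmetric form $\Omega^{-1}\Delta\Omega^{-1}$.

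The three inequalities then follow quickly. For (\ref{bd_BSigmaB}), the PSD ordering $\sw \preceq \Sigma$ gives the leftmost inequality; the bound $B_\ell^\T\Sigma B_\ell \le \pi_\ell$ uses $B^\T\Sigma B = D_\pi - \Omega^{-1}$ together with $\Omega^{-1} \succeq 0$, while the bound $B_\ell^\T\Sigma B_\ell \le \pi_\ell^2\Delta_{\ell\ell}$ uses $B_\ell = \pi_\ell\Sigma^{-1}\mu_\ell$ with $\Sigma^{-1}\preceq\sw^{-1}$. For (\ref{bd_Bmu_2}), the PSD comparison $\bM\bM^\T \preceq \pi_{\min}^{-1}\bM D_\pi\bM^\T = \pi_{\min}^{-1}(\Sigma-\sw) \preceq \pi_{\min}^{-1}\Sigma$ reduces the $\pi_\ell^2\Delta_{\ell\ell}/\pi_{\min}$ piece to the bound just established; the crude absolute bound comes from $\|\bM^\T B_\ell\|_2^2 = [\Omega^{-1}\Delta^2\Omega^{-1}]_{\ell\ell}$ together with the fact that $\Omega^{-1/2}\Delta\Omega^{-1/2} \preceq \bI_L$ bounds the spectrum of the symmetric similarity of $\Delta\Omega^{-1}$. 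For (\ref{bd_Bmu_sup}), writing $\mu_k^\T B_\ell = \pi_\ell\mu_k^\T\Sigma^{-1}\mu_\ell$ and applying Cauchy--Schwarz in the $\Sigma^{-1}$-inner product gives $|\mu_k^\T B_\ell| \le \pi_\ell\sqrt{(\mu_k^\T\Sigma^{-1}\mu_k)(\mu_\ell^\T\Sigma^{-1}\mu_\ell)}$, and the two diagonal estimates $\mu_j^\T\Sigma^{-1}\mu_j \le \Delta_{jj}$ and $\mu_j^\T\Sigma^{-1}\mu_j \le 1/\pi_j$ (both consequences of (\ref{bd_BSigmaB})) produce the two alternatives in the minimum.

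The main obstacle is the bookkeeping for (\ref{quad_B}): the two non-commuting products $\Omega^{-1}\Delta$ and $\Delta\Omega^{-1}$ both appear, and one must verify carefully that $D_\pi\Delta\Omega^{-1} - \Omega^{-1}\Delta D_\pi\Delta\Omega^{-1}$ collapses to the symmetric form $\Omega^{-1}\Delta\Omega^{-1}$ despite $\Delta$ and $D_\pi$ not commuting. Everything else is routine PSD manipulation once the identities (\ref{mu_B})--(\ref{eq_Omega}) are in place.
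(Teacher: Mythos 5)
Your derivation of the identities \eqref{mu_B}--\eqref{quad_B} is correct and is essentially the paper's route: the paper also reaches \eqref{mu_B} through Woodbury (via the relation $B=B^*(D_\pi-D_\pi\bM^\T B)$ from the proof of \cref{lem_key}, then solving for $\bM^\T B$), and \eqref{eq_Omega} falls out identically. The one place you genuinely diverge is \eqref{quad_B}: the paper avoids your ``main obstacle'' altogether by noting that \eqref{eq_Omega} and \cref{lem_key} give $B=B^*\Omega^{-1}$, whence $B^\T\sw B=\Omega^{-1}B^{*\T}\sw B^*\Omega^{-1}=\Omega^{-1}\Delta\Omega^{-1}$ in one line; your decomposition $B^\T\sw B=B^\T\Sigma B-B^\T\bM D_\pi\bM^\T B$ does close (conjugate by $\Omega$ and use $\Omega D_\pi\Omega=D_\pi^{-1}+2\Delta+\Delta D_\pi\Delta$), but it is needlessly delicate. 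Your arguments for \eqref{bd_BSigmaB} and for the $\pi_\ell^2\Delta_{\ell\ell}/\pi_{\min}$ alternative in \eqref{bd_Bmu_2} match the paper's.

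The genuine gap is the ``$\wedge\,1$'' part of \eqref{bd_Bmu_2} (and, through it, of \eqref{bd_Bmu_sup}). The fact that $\Omega^{-1/2}\Delta\Omega^{-1/2}\preceq \bI_L$ bounds the \emph{eigenvalues} of $\Delta\Omega^{-1}$ by one, but $[\Omega^{-1}\Delta^2\Omega^{-1}]_{\ell\ell}=\|\Delta\Omega^{-1}\be_\ell\|_2^2$ is a column norm, governed by singular values; since $\Delta$ and $D_\pi$ need not commute, $\Delta\Omega^{-1}$ is not normal and a spectral bound does not control its columns. Indeed the inequality itself fails without a bound on the class imbalance: take $p=1$, $\sw=1$, $L=3$, $\pi=(0.98,0.01,0.01)$, $\mu_1=a$, $\mu_2=\mu_3=-49a$ (so $\EE X=0$); then $\Sigma=1+49a^2$, $B_1=0.98a/(1+49a^2)$, $\bM\bM^\T=4803a^2$, and $B_1^\T\bM\bM^\T B_1\to 0.9604\cdot 4803/49^2\approx 1.92>1$ as $a\to\infty$, while $\pi_1^2\Delta_{11}/\pi_{\min}$ is enormous. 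What is true in general is $B_\ell^\T\bM\bM^\T B_\ell\le \pi_{\min}^{-1}B_\ell^\T\Sigma B_\ell\le \pi_\ell/\pi_{\min}$, which is $O(1)$ under \cref{ass_pi} and is all that the downstream applications (\cref{lem_XE}, \cref{lem_B_Sigma_diff_B}) actually need, since they only use these bounds inside $\lesssim$ chains. To be fair, the paper's own one-line assertion $\be_\ell^\T\Omega^{-1}\Delta^2\Omega^{-1}\be_\ell\le 1$ has exactly the same problem, so you have reproduced rather than created the weakness; but your proposal does not fix it either. Relatedly, your Cauchy--Schwarz route for \eqref{bd_Bmu_sup} with $\mu_j^\T\Sigma^{-1}\mu_j\le 1/\pi_j$ yields $|\mu_k^\T B_\ell|\le\sqrt{\pi_\ell/\pi_k}$, which gives the claimed unit bound only when $\pi_\ell\le\pi_k$; as written, that alternative is likewise unproved (the paper deduces it from the problematic unit bound in \eqref{bd_Bmu_2}).
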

    \begin{proof}
        From \cref{lem_key} and its proof, we know that 
        $$
            B  =  B^*\left(D_\pi -  D_\pi\bM^\T B\right) = \sw^{-1}\bM \left(D_\pi -  D_\pi\bM^\T B\right) 
        $$
        so that 
        \[
            \bM^\T B  =  \dist \left(D_\pi -  D_\pi\bM^\T B\right) = \dist  D_\pi -  \dist   D_\pi\bM^\T B.
        \]
        Rearranging terms gives
        \begin{align}\nonumber
            \bM^\T B  &= \left(\bI_L + \dist D_\pi\right)^{-1} \dist D_\pi\\\nonumber
            & = \bI_L - \left(\bI_L + \dist D_\pi\right)^{-1}\\\label{mu_B_prime}
            & = \bI_L - D_\pi^{-1}\left(D_\pi^{-1} + \dist \right)^{-1}\\\nonumber
            &= \dist \left(D_\pi^{-1} + \dist \right)^{-1}
        \end{align} 
        proving \eqref{mu_B}.
        By using the above identity \eqref{mu_B_prime}, we have
        \[
            D_\pi -  D_\pi\bM^\T B =  \left(D_\pi^{-1} + \dist \right)^{-1},
        \]
        proving
        \eqref{eq_Omega}.  
        Since \cref{lem_key} and \eqref{eq_Omega} imply 
        \[  
            B = B^*   \Omega^{-1},
        \]
        \eqref{quad_B} follows from
        \[
            B^\T \sw B = \Omega^{-1} B^{*\T}\sw B^* \Omega^{-1} = \Omega^{-1}\dist \Omega^{-1}.
        \] 
        
        To prove \eqref{bd_BSigmaB} -- \eqref{bd_Bmu_sup}, pick any $k,\ell \in [L]$. 
        By using \eqref{eq_Sigma} twice and  \eqref{eq_B}, we find that 
        \begin{equation}\label{bd_quad_mu}
        	B_\ell^\T \sw B_\ell  \le B_\ell^\T \Sigma B_\ell  = \pi_\ell^2 \mu_\ell^\T \Sigma^{-1}\mu_\ell \le \pi_\ell^2  \mu_\ell^\T \sw^{-1}\mu_\ell
		\end{equation}
        yielding the second bound in   \eqref{bd_BSigmaB}. The other bound in \eqref{bd_BSigmaB}  follows by observing that 
        \begin{equation}\label{bd_quad_mu_prime}
        	\pi_\ell^2 \mu_\ell^\T \Sigma^{-1}\mu_\ell = \pi_\ell \be_\ell^\T D_\pi^{1/2} \bM^\T \Sigma^{-1}\bM D_\pi^{1/2}\be_\ell  \le \pi_\ell \|\Sigma^{-1/2}\bM D_\pi \bM^\T \Sigma^{-1/2}\|_\op \overset{\eqref{eq_Sigma}}{\le} \pi_\ell.
      		\end{equation}
        For \eqref{bd_Bmu_2}, on the one hand, similar arguments yield 
        \[
        	B_\ell^\T \bM \bM^\T B_\ell \le {\pi_\ell^2\over \pi_{\min}} \mu_\ell^\T \Sigma^{-1}  \bM D_\pi \bM^\T \Sigma^{-1} \mu_\ell \le {\pi_\ell^2\over \pi_{\min}}\mu_\ell^\T \Sigma^{-1}\mu_\ell.
        \]
        On the other hand, using \eqref{mu_B} and \eqref{eq_Omega} proves 
        \[
        	B_\ell^\T \bM \bM^\T B_\ell \le \be_\ell^\T \Omega^{-1} (\dist)^2 \Omega^{-1}\be_\ell \le  1.
        \]
        Finally, the last statement follows by noting that 
        \[
        	|\mu_k^\T B_\ell|^2 \le B_\ell^\T \bM \bM^\T B_\ell \le 1
        \]
        and 
        \[
        	|\mu_k^\T B_\ell| = \pi_\ell |\mu_k^\T \Sigma^{-1} \mu_\ell| \le \pi_\ell\sqrt{\mu_k^\T \Sigma^{-1}\mu_k} \sqrt{\mu_\ell^\T \Sigma^{-1}\mu_\ell}.
        \]
    \end{proof}

    The following lemmas contain deviation inequalities related with $\wh\bM-\bM$ and $\wh \pi - \pi$.

    \begin{lemma}\label{lem_mu_sum}
        Let $v\in\RR^p$ and $u\in \RR^L$ be any fixed vectors. Under \eqref{model}, for any $t\ge 0$, with probability at least $1-2e^{-t^2/2}$, we have
        \[
            \left|
                    \sum_{k=1}^L {n_k\over n} v^\T(\wh\mu_k - \mu_k) u_k
            \right| \le t \|u\|_2 \sqrt{ v^\T \sw v \over n } \sqrt{\max_k n_k \over n}.
        \]  
    \end{lemma}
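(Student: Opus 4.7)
The plan is to reduce the weighted sum to a single sum of independent Gaussians by conditioning on the labels, and then apply the standard Gaussian tail bound.

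First, I would rewrite the left-hand side in a form that exposes its Gaussian structure. Observe that, by definition of $\wh\mu_k$ and $n_k$,
\[
\frac{n_k}{n}(\wh\mu_k - \mu_k) \;=\; \frac{1}{n}\sum_{i=1}^n (X_i - \mu_k)\,\1\{Y_i = \be_k\},
\]
so the sum in the lemma equals
\[
S \;:=\; \frac{1}{n}\sum_{i=1}^n u_{Y_i}\,v^\T (X_i - \mu_{Y_i}),
\]
where I write $u_{Y_i}$ for $u_k$ when $Y_i = \be_k$.

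Next, I would condition on the label vector $Y = (Y_1,\ldots,Y_n)$. Under model \eqref{model}, conditionally on $Y$, the vectors $X_i - \mu_{Y_i}$ are independent and each distributed $\cN_p(0,\sw)$, so the scalars $v^\T(X_i - \mu_{Y_i})$ are i.i.d.\ $\cN(0, v^\T \sw v)$. Hence $nS \mid Y$ is a linear combination of independent centered Gaussians with conditional variance
\[
\Var(nS \mid Y) \;=\; v^\T \sw v \,\sum_{i=1}^n u_{Y_i}^2 \;=\; v^\T \sw v \,\sum_{k=1}^L n_k u_k^2 \;\le\; v^\T \sw v \cdot \|u\|_2^2 \cdot \max_k n_k .
\]

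From here the standard Gaussian tail inequality gives, for every $t \ge 0$,
\[
\PP\!\left(|S| \ge t\,\|u\|_2\sqrt{\tfrac{v^\T \sw v}{n}}\sqrt{\tfrac{\max_k n_k}{n}} \;\Big|\; Y\right) \;\le\; 2e^{-t^2/2},
\]
and since the right-hand side does not depend on $Y$, the same bound holds unconditionally (the randomness of $\max_k n_k$ is already absorbed on the event of the displayed inequality). There is no real obstacle here; the only subtlety is recognizing that conditioning on $Y$ makes the summands jointly Gaussian with a conveniently bounded variance, which is precisely what the $\sqrt{\max_k n_k/n}$ factor in the stated bound is designed to accommodate.
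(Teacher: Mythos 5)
Your proof is correct and is essentially the same as the paper's: rewrite the weighted sum as $\frac1n\sum_i v^\T(X_i-\bM Y_i)\,u^\T Y_i$, condition on the labels so the sum is centered Gaussian with conditional variance $\frac{v^\T\sw v}{n^2}\,u^\T\bY^\T\bY u=\frac{v^\T\sw v}{n^2}\sum_k n_k u_k^2\le\frac{v^\T\sw v}{n^2}\|u\|_2^2\max_k n_k$, and apply the standard Gaussian tail bound before unconditioning. Your handling of the $Y$-measurable threshold $\max_k n_k$ is also the right justification for the unconditioning step.
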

    \begin{proof}
        Recall that, for any $k\in[L]$,
        \[
            \wh\mu_k = {1\over n_k}\sum_{i=1}^nX_i\1\{ Y_i= \be_k\}.
        \]
        We obtain
        \begin{align*}
             \sum_{k=1}^L {n_k\over n} v^\T(\wh\mu_k - \mu_k) u_k 
            & = v^\T \sum_{k=1}^L  {1\over n}\sum_{i=1}^n\left(X_i - \mu_k\right)u_k \1\{ Y_i= \be_k\}\\
            &=  v^\T \sum_{k=1}^L  {1\over n}\sum_{i=1}^n\left(X_i - \bM Y_i\right)u^\T Y_i \1\{ Y_i= \be_k\}\\
            &= {1\over n}\sum_{i=1}^n v^\T \left(X_i - \bM Y_i\right)u^\T Y_i .
        \end{align*}
        For $i\in [n]$, by conditioning on $\bY$, we know that $W_i := X_i - \bM Y_i$ are i.i.d. from $\cN_p(0, \sw)$, hence 
        \[
            {1\over n}\sum_{i=1}^n v^\T W_i u^\T Y_i ~ \Big | ~ \bY \sim \cN\left(
                0, {v^\T\sw v\over n} {u^\T \bY^\T \bY u \over n}
            \right).
        \]
        The claim follows by invoking the standard Gaussian tail probability bounds  and after unconditioning. 
    \end{proof}

    \begin{lemma}\label{lem_pi_sum}
        Let $u,v\in \RR^L$ be any fixed vectors. Under \eqref{model}, for any $t\ge 0$, with probability at least $1-2e^{-t/2}$,
        \[
            \left|
                    \sum_{k=1}^L  (\wh\pi_k - \pi_k) u_k v_k
            \right| \lesssim \sqrt{t \sum_{k=1}^L \pi_k u_k^2v_k^2 \over n} + {t \|u\|_\i \|v\|_\i \over n}.
        \]  
    \end{lemma}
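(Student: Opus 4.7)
The plan is to rewrite the quantity as a single empirical mean of i.i.d.\ bounded random variables and then apply Bernstein's inequality. Specifically, since $\wh\pi_k = n^{-1}\sum_{i=1}^n \1\{Y_i = \be_k\}$, one has
\[
\sum_{k=1}^L (\wh\pi_k - \pi_k) u_k v_k = \frac{1}{n}\sum_{i=1}^n Z_i, \qquad Z_i := \sum_{k=1}^L u_k v_k \bigl(\1\{Y_i = \be_k\} - \pi_k\bigr).
\]
The $Z_i$ are i.i.d.\ mean-zero random variables. I would next collect the two ingredients needed for Bernstein: (i) a uniform bound $|Z_i| \le 2\|u\|_\i\|v\|_\i$, which follows from $|\sum_k u_k v_k \1\{Y_i = \be_k\}| \le \|u\|_\i \|v\|_\i$ and the same bound on the centering constant $\sum_k \pi_k u_k v_k$; and (ii) a variance estimate, using that for each fixed $i$ the indicator $\1\{Y_i = \be_k\}$ equals $1$ with probability $\pi_k$ and these events are disjoint across $k$, so
\[
\EE Z_i^2 \le \EE\Bigl(\sum_k u_k v_k \1\{Y_i = \be_k\}\Bigr)^2 = \sum_{k=1}^L \pi_k u_k^2 v_k^2.
\]

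With these two facts, Bernstein's inequality (see, e.g., the standard statement for bounded i.i.d.\ sums) yields, for every $t\ge 0$,
\[
\PP\left\{\left|\frac{1}{n}\sum_{i=1}^n Z_i\right| \ge \sqrt{\frac{2t}{n}\sum_{k=1}^L \pi_k u_k^2 v_k^2} + \frac{2t\,\|u\|_\i\|v\|_\i}{3n}\right\} \le 2e^{-t},
\]
which, after rescaling $t\mapsto t/2$ and absorbing absolute constants, produces precisely the claimed inequality with probability at least $1-2e^{-t/2}$. No other step is needed.

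The proof is essentially mechanical: there is no significant obstacle, and the only thing to be mildly careful about is getting the sub-Gaussian versus sub-exponential tradeoff right so that both terms appear with the correct scaling in $n$ and in $t$. Since the summands are bounded and the second moment is controlled by $\sum_k \pi_k u_k^2 v_k^2$ rather than by the crude bound $\|u\|_\i^2\|v\|_\i^2$, Bernstein (as opposed to Hoeffding) is what delivers the sharper first term in the stated bound.
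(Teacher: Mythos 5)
Your proposal is correct and follows essentially the same route as the paper: both rewrite the sum as an empirical average of i.i.d.\ bounded summands $Z_i$ built from the label indicators, bound $|Z_i|$ by a multiple of $\|u\|_\i\|v\|_\i$, control the variance by $\sum_k \pi_k u_k^2 v_k^2$ using that the indicators are disjoint across $k$, and apply Bernstein's inequality for bounded variables. The only cosmetic difference is that you center $Z_i$ inside its definition while the paper subtracts the mean afterwards, which changes nothing of substance.
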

    \begin{proof}
Observe that we can write
\begin{align*}
             \sum_{k=1}^L  (\wh\pi_k - \pi_k) u_k v_k &=               \sum_{k=1}^L {1\over n}\sum_{i=1}^n 
               Y_{ik}  u_k v_k - \sum_{k=1}^L \pi_k  u_k v_k\\
            &=  {1\over n}\sum_{i=1}^n Z_i - \pi_k  u_k v_k
        \end{align*}
Here
        \[ Z_i = \sum_{k=1}^L  Y_{ik}  u_k v_k\]
         are independent, bounded random variables
(as          $|Z_i| \le \|u\|_\i \|v\|_\i$)
with mean
        \[
            \EE[Z_i] =  \sum_{k=1}^L \pi_k  u_k v_k
        \]
        and variance 
        \[
            \Var(Z_i) = \Var(y_i^\T \diag(u) v) \le \sum_{k=1}^L \pi_k v_k^2 u_k^2
        \]
        for $i\in[n]$.
     The proof follows after a straightforward application of  Bernstein's inequality for bounded random variables. 
          \end{proof}

    \medskip
    
    The following lemma controls the difference  
    \[
        \wh \Sigma_w - \sw := {1\over n}\sum_{k=1}^L \sum_{i=1}^n\1\{ Y_i=\be_k\} (X_i - \mu_k)(X_i - \mu_k)^\T - \sw.
    \]
    \begin{lemma}\label{lem_sw}
        Let  $u,v\in \RR^p$ be  any fixed vectors.  
        Under model \eqref{model}, for any $t\ge 0$, with probability at least $1-2e^{-t}$, one has 
        \[
            \left|u^\T \left(\wh \Sigma_w - \sw \right) v\right| \lesssim \sqrt{u^\T \sw u}\sqrt{v^\T \sw v}\left(\sqrt{t\over n} + {t\over n}\right)
        \]
    \end{lemma}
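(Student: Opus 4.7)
The plan is to rewrite $\wh\Sigma_w$ in terms of the centered residuals $W_i := X_i - \mu_{Y_i} = X_i - \bM Y_i$, and then reduce the statement to a one-dimensional sub-exponential Bernstein bound for an i.i.d.\ sum. Under model~\eqref{model}, conditional on $Y_i = \be_k$ we have $W_i \sim \cN_p(0,\sw)$, and since this conditional law does not depend on $k$, the $W_i$'s are in fact \emph{unconditionally} i.i.d.\ $\cN_p(0,\sw)$. Hence
\[
 \wh\Sigma_w - \sw = \frac{1}{n}\sum_{i=1}^n \bigl(W_iW_i^\T - \sw\bigr),
\]
and consequently
\[
 u^\T\bigl(\wh\Sigma_w - \sw\bigr)v = \frac{1}{n}\sum_{i=1}^n \bigl[(u^\T W_i)(v^\T W_i) - u^\T \sw v\bigr].
\]

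Now I would exploit that $(u^\T W_i, v^\T W_i)$ is jointly centered Gaussian with marginal variances $u^\T\sw u$ and $v^\T\sw v$ and covariance $u^\T \sw v$. Each summand $Z_i := (u^\T W_i)(v^\T W_i) - u^\T \sw v$ is a centered quadratic form in a Gaussian, hence sub-exponential, with sub-exponential (Orlicz $\psi_1$) norm bounded by a universal constant times $\sqrt{u^\T\sw u}\sqrt{v^\T\sw v}$; this follows from the standard fact $\|(u^\T W)(v^\T W)\|_{\psi_1}\lesssim \|u^\T W\|_{\psi_2}\|v^\T W\|_{\psi_2}\lesssim \sqrt{u^\T\sw u}\sqrt{v^\T\sw v}$ combined with centering. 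Applying Bernstein's inequality for i.i.d.\ sub-exponential random variables (see, e.g., Vershynin's concentration text) to $\frac{1}{n}\sum_{i=1}^n Z_i$ yields, for every $s\ge 0$,
\[
 \PP\!\left\{\Bigl|\frac{1}{n}\sum_{i=1}^n Z_i\Bigr| \ge C\sqrt{u^\T\sw u}\sqrt{v^\T\sw v}\Bigl(\sqrt{s/n}+s/n\Bigr)\right\}\le 2e^{-s},
\]
which is exactly the stated bound with $s=t$.

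The main (and essentially only) technical point is verifying the sub-exponential norm bound on $Z_i$. Rather than invoking Hanson--Wright in two variables, the cleanest route is to note that by polarization $(u^\T W_i)(v^\T W_i) = \tfrac{1}{4}\bigl[((u+v)^\T W_i)^2 - ((u-v)^\T W_i)^2\bigr]$, so $Z_i$ is a linear combination of two centered $\chi^2_1$-type variables whose sub-exponential norms are $O(\|(u\pm v)^\T W_i\|_{\psi_2}^2) = O\bigl((u\pm v)^\T\sw(u\pm v)\bigr)$; expanding and using Cauchy--Schwarz bounds this by $O\bigl(\sqrt{u^\T\sw u}\sqrt{v^\T\sw v}\bigr)$ as needed. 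Everything else is a direct application of Bernstein.
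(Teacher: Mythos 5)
Your proposal is correct and follows essentially the same route as the paper: rewrite $\wh\Sigma_w-\sw$ as $\frac{1}{n}\sum_{i=1}^n\bigl[(X_i-\bM Y_i)(X_i-\bM Y_i)^\T-\sw\bigr]$ with $W_i=X_i-\bM Y_i$ i.i.d.\ $\cN_p(0,\sw)$, and then invoke a standard sub-exponential Bernstein bound for the Gaussian quadratic form $(u^\T W_i)(v^\T W_i)$. The only difference is that you spell out the $\psi_1$-norm/polarization details that the paper subsumes under ``standard concentration inequalities,'' which is fine.
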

    \begin{proof}
        Start with
        \begin{align*}
            \wh \Sigma_w - \sw  &= {1\over n}\sum_{k=1}^L \sum_{i=1}^n \1\{ Y_i =\be_k\}  (X_i - \mu_k)(X_i - \mu_k)^\T - \sw\\
            &= {1\over n}\sum_{i=1}^n\left[(X_i - \bM Y_i)(X_i - \bM Y_i)^\T - \sw\right].
        \end{align*}
        Recall that,  conditioning $\bY$,  $W_i = X_i - \bM Y_i$ for $i\in [n]$ are independent, $\cN_p(0, \sw)$. The result follows from standard concentration inequalities for the quadratic term of Gaussian random vectors. 
    \end{proof}

    \subsection{Concentration inequalities related with  $\wh \Sigma - \Sigma$}
    \label{sec_proof_sigma}
    
    Recall that $\wh \Sigma = n^{-1}\bX^\T \bX$ and  the definition of the event $\cE_\pi$ is given in \eqref{def_event_pi}. 
    
    \begin{lemma}\label{lem_Sigma_hat}
    	Under model \eqref{model} and \cref{ass_pi},
         assume $L\log (n) \le n$. On the event $\cE_\pi$, for any fixed $v_1, v_2\in \RR^p$ and any $t\ge 0$, with probability at least $1-8e^{-t}$, one has
    	\begin{align*}
    		\left| v_1^\T (\wh \Sigma - \Sigma) v_2\right| &\lesssim ~ 
    		\sqrt{v_1^\T \sw v_1}\sqrt{v_2^\T \sw v_2} \left(\sqrt{t\over n} + {t\over n}\right)\\
    		&\quad + \sqrt{\sum_{k=1}^L (v_1^\T \mu_k)^2 (v_2^\T\mu_k)^2
    			\over L} \sqrt{t  \over n} + \|\bM^\T v_1\|_\i \|\bM^\T v_2\|_\i{t  \over n} \\
    		&\quad + \left({\|\bM^\T v_1\|_2\over \sqrt L}\sqrt{v_2^\T\sw v_2}+{\|\bM^\T v_2\|_2\over \sqrt L}\sqrt{v_1^\T\sw v_1}\right)\sqrt{t\over n}.
    	\end{align*}
    \end{lemma}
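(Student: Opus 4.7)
The plan is to decompose $X_i = W_i + \bM Y_i$, where $W_i := X_i - \bM Y_i$ satisfies $W_i \mid \bY \sim \cN_p(0,\sw)$ independently across $i$, and thus write
\begin{align*}
\wh\Sigma - \Sigma
&= \left(\wh\Sigma_w - \sw\right)
+ \bM\left(D_{\wh\pi} - D_\pi\right)\bM^\T
+ \frac{1}{n}\sum_{i=1}^n W_i Y_i^\T \bM^\T
+ \frac{1}{n}\sum_{i=1}^n \bM Y_i W_i^\T,
\end{align*}
where $\wh\Sigma_w = n^{-1}\sum_i W_i W_i^\T$, and use $n^{-1}\sum_i Y_i Y_i^\T = D_{\wh\pi}$. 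Testing against $v_1$ on the left and $v_2$ on the right splits the task into four pieces.

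The first piece $v_1^\T(\wh\Sigma_w - \sw)v_2$ is exactly the content of \cref{lem_sw}, which gives the first line on the right-hand side of the claimed bound. For the second piece, I expand
\[
v_1^\T \bM(D_{\wh\pi} - D_\pi)\bM^\T v_2 = \sum_{k=1}^L (\wh\pi_k - \pi_k)(v_1^\T\mu_k)(v_2^\T\mu_k),
\]
and apply \cref{lem_pi_sum} with the choice $u_k = v_1^\T\mu_k$ and $v_k = v_2^\T\mu_k$. Under \cref{ass_pi}, $\pi_k\asymp 1/L$, so $\sum_k \pi_k (v_1^\T\mu_k)^2(v_2^\T\mu_k)^2 \asymp L^{-1}\sum_k (v_1^\T\mu_k)^2(v_2^\T\mu_k)^2$, yielding exactly the second line on the right-hand side of the stated bound.

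For the two cross terms, I will condition on $\bY$. Conditional on $\bY$, the sum $S := \sum_{i=1}^n (v_1^\T W_i)(Y_i^\T \bM^\T v_2)$ is a Gaussian with mean $0$ and conditional variance
\[
\Var(S \mid \bY) = (v_1^\T \sw v_1)\sum_{i=1}^n (Y_i^\T \bM^\T v_2)^2 = (v_1^\T\sw v_1)\sum_{k=1}^L n_k (\mu_k^\T v_2)^2.
\]
The standard Gaussian tail bound gives $|S|/n \lesssim \sqrt{(v_1^\T\sw v_1)\sum_k (n_k/n)(\mu_k^\T v_2)^2 / n}\cdot \sqrt{t}$ with conditional probability $1-2e^{-t}$. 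On the event $\cE_\pi$ in \eqref{def_event_pi}, using $L\log(n)\le n$ and \cref{ass_pi}, we have $n_k/n\asymp 1/L$, so $\sum_k (n_k/n)(\mu_k^\T v_2)^2 \lesssim \|\bM^\T v_2\|_2^2/L$. This produces the summand $L^{-1/2}\|\bM^\T v_2\|_2 \sqrt{v_1^\T\sw v_1}\sqrt{t/n}$; the symmetric cross term gives the other piece of the third line.

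The main obstacle is bookkeeping the conditional-versus-unconditional randomness in the two cross terms: the Gaussian concentration is only available after conditioning on $\bY$, while the variance itself then depends on the random $n_k$, which must be controlled separately via $\cE_\pi$. Collecting the four bounds by the triangle inequality and taking a union bound over the relevant $O(1)$ high-probability events produces the stated inequality with probability at least $1-8e^{-t}$.
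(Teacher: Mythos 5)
Your proof is correct and follows essentially the same route as the paper: your decomposition of $\wh\Sigma-\Sigma$ into $(\wh\Sigma_w-\sw)$, the $\bM(D_{\wh\pi}-D_\pi)\bM^\T$ term, and the two cross terms is exactly the paper's decomposition (the cross terms $n^{-1}\sum_i W_iY_i^\T\bM^\T$ coincide with $\sum_k\wh\pi_k(\wh\mu_k-\mu_k)\mu_k^\T$), and your handling of each piece matches theirs — \cref{lem_sw} for the first, \cref{lem_pi_sum} with $u_k=v_1^\T\mu_k$, $v_k=v_2^\T\mu_k$ for the second, and a conditional-on-$\bY$ Gaussian tail bound (which is precisely the content of the paper's \cref{lem_mu_sum}) combined with $\max_k\wh\pi_k\lesssim 1/L$ on $\cE_\pi$ for the cross terms.
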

    \begin{proof}
    	By definition, we have  
    	\begin{align*}
    		\wh\Sigma &= {1\over n}\bX^\T \bX  = {1\over n}\sum_{k=1}^L \sum_{i=1}^n \1\{ Y_i=\be_k\}  X_i X_i^\T \\
    		&=  {1\over n}\sum_{k=1}^L\left[
    		\sum_{i=1}^n \1\{ Y_i=\be_k\} (X_i - \mu_k)(X_i-\mu_k)^\T + n_k(\wh \mu_k  \mu_k^\T + \mu_k \wh \mu_k^\T) - n_k\mu_k\mu_k^\T
    		\right]\\
    		&= \wh \Sigma_w +  \sum_{k=1}^L{n_k\over n}\mu_k\mu_k^\T + \sum_{k=1}^L {n_k\over n}\left[
    		(\wh\mu_k - \mu_k) \mu_k^\T +  \mu_k (\wh\mu_k - \mu_k)^\T
    		\right]
    	\end{align*}
    	where we write 
    	\[
    	\wh \Sigma_w := {1\over n}\sum_{k=1}^L 
    	\sum_{i=1}^n \1\{ Y_i=\be_k\} (X_i - \mu_k)(X_i-\mu_k)^\T.
    	\]
    	By further using the decomposition of $\Sigma$ in \eqref{eq_Sigma} and $\wh\pi_k = n_k/n$ for $k\in [L]$, we find that 
    	\begin{align*}
    		\wh\Sigma  - \Sigma &= \wh\Sigma_w - \sw + \sum_{k=1}^L (\wh \pi_k - \pi_k) \mu_k\mu_k^\T +  \sum_{k=1}^L \wh\pi_k\left[
    		(\wh\mu_k - \mu_k) \mu_k^\T +  \mu_k (\wh\mu_k - \mu_k)^\T
    		\right].
    	\end{align*}
    	The first term $\wh\Sigma_w - \sw
     $ is bounded in  \cref{lem_sw}. We bound the second term by invoking \cref{ass_pi} and \cref{lem_pi_sum} with $u = \bM^\T v_1$ and $v = \bM^\T v_2$. This yields 
    	\begin{align}\label{bd_pi_diff}\nonumber
    		\left|\sum_{k=1}^L (\wh \pi_k - \pi_k) v_1^\T \mu_k\mu_k^\T v_2\right| &\lesssim \sqrt{t \sum_{k=1}^L  \pi_k (v_1^\T \mu_k)^2 (v_2^\T\mu_k)^2 \over n} + {t \|\bM^\T v_1\|_\i \|\bM^\T v_2\|_\i \over n}\\
    		&\lesssim \sqrt{\sum_{k=1}^L (v_1^\T \mu_k)^2 (v_2^\T\mu_k)^2
    			\over L} \sqrt{t  \over n} + \|\bM^\T v_1\|_\i \|\bM^\T v_2\|_\i{t  \over n} 
    	\end{align}
    	with probability at least $1-2e^{-t}$. 
    	Furthermore, on the event $\cE_\pi$, \cref{ass_pi} and $L\log(n) \le n$ imply 
    	\[
    	\wh\pi_k \lesssim \pi_k + \sqrt{\log(n) \over nL} \lesssim {1\over L},\quad \forall \ k\in [L],
    	\]
    	and after we invoke \cref{lem_mu_sum} twice with $v = v_1, u = \bM^\T v_2$  and $v = v_2, u = \bM^\T v_1$, respectively, we get that
    	\begin{align}\label{bd_mu_diff_1}
    		& \sum_{k=1}^L \wh\pi_k 
    		v_1^\T (\wh\mu_k - \mu_k) \mu_k^\T v_2 \lesssim  \|\bM^\T v_1\|_2 \sqrt{t v_2^\T\sw v_2 \over n } \sqrt{\max_k \wh \pi_k} \lesssim {\|\bM^\T v_1\|_2\over \sqrt L}\sqrt{t v_2^\T\sw v_2 \over n },\\ \label{bd_mu_diff_2}
    		&  \sum_{k=1}^L \wh\pi_k v_1^\T \mu_k (\wh\mu_k - \mu_k)^\T v_2 \lesssim          \|\bM^\T v_2\|_2 \sqrt{t v_1^\T \sw v_1 \over n } \sqrt{\max_k \wh \pi_k}\lesssim {\|\bM^\T v_2\|_2\over \sqrt L}\sqrt{t v_1^\T \sw v_1 \over n }
    	\end{align} hold
    	with probability at least $1-4e^{-t}$. Collecting all three bounds completes the proof.  
    \end{proof}
    
    \medskip 
    
    As an immediate application of \cref{lem_Sigma_hat}, we have the following bounds on the sup-norm and operator norm of  
    the matrix $B^\T (\wh \Sigma - \Sigma) B$.
    
    \begin{lemma}\label{lem_B_Sigma_diff_B}
    	Under conditions of \cref{lem_Sigma_hat}, assume $\Delta_\i \gtrsim 1$. On the event $\cE_\pi$, the following holds  with probability at least $1-n^{-2}$,
    	\begin{align*}
    		& \|B^\T (\wh \Sigma - \Sigma) B\|_\i \lesssim  \left(1 \wedge {\Delta_\i  \over L}\right)^{3/2}  \sqrt{\log (n)\over nL},\\
    		& \|B^\T (\wh \Sigma - \Sigma) B\|_\op\lesssim    {\Delta_\op   \over L+\Delta_\op}\sqrt{\Delta_\i\log(n) \over nL} +  {\Delta_\op \over 
    			L +\Delta_\op}{\Delta_\i\log(n) \over n} .
    	\end{align*}
    \end{lemma}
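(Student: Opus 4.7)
My plan is to derive both bounds by combining Lemma \ref{lem_Sigma_hat} with the closed-form identities for $B$ in Lemma \ref{lem_B_fact}, differing only in how the bilinear form is instantiated.

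\textbf{The $\ell_\infty$-norm bound.} For each fixed pair $k,\ell \in [L]$, I would apply Lemma \ref{lem_Sigma_hat} with $v_1 = B_k$, $v_2 = B_\ell$ and $t = C\log n$. To simplify the resulting expression, I would substitute the bounds \eqref{bd_BSigmaB}, \eqref{bd_Bmu_2}, \eqref{bd_Bmu_sup} from Lemma \ref{lem_B_fact}, yielding $B_\ell^\T \sw B_\ell \lesssim (1/L)\wedge (\Delta_\i/L^2)$, $\|\bM^\T B_\ell\|_2^2 \lesssim 1\wedge(\Delta_\i/L)$ and $\|\bM^\T B_\ell\|_\i \lesssim 1\wedge(\Delta_\i/L)$ under \cref{ass_pi}. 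Plugging these in and separating the cases $\Delta_\i \le L$ and $\Delta_\i \ge L$, the five terms all collapse to at most $(1\wedge \Delta_\i/L)^{3/2}\sqrt{\log n/(nL)}$. Finally, a union bound over the $L^2 \le n^2$ pairs (after adjusting the absolute constant) yields probability at least $1 - n^{-2}$ on the event $\cE_\pi$.

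\textbf{The operator norm bound.} Since $B^\T(\wh \Sigma - \Sigma)B$ is symmetric $L\times L$, I would use a standard discretization: let $\cN$ be a $(1/3)$-net of $\SS^{L-1}$ with $|\cN| \le 7^L$, so that $\|B^\T(\wh\Sigma - \Sigma)B\|_\op \le 3 \max_{v\in\cN}|v^\T B^\T(\wh\Sigma - \Sigma)B v|$. For each fixed $v \in \cN$, I would invoke Lemma \ref{lem_Sigma_hat} with $v_1 = v_2 = Bv$ and $t = CL\log n$, and then take a union bound over $\cN$, absorbing the $7^L$ factor into $e^{-t}$.  The bookkeeping requires two \emph{spectral} estimates that strengthen \eqref{bd_BSigmaB}--\eqref{bd_Bmu_sup}: namely $\|B^\T \sw B\|_\op \le \|B^\T \Sigma B\|_\op \lesssim \Delta_\op/[L(L+\Delta_\op)]$ and $\|\bM^\T B\|_\op \lesssim \Delta_\op/(L+\Delta_\op)$. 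These do not appear explicitly but follow from the identities \eqref{mu_B} and \eqref{quad_B} of Lemma \ref{lem_B_fact} by the substitution $M := D_\pi^{1/2}\Delta D_\pi^{1/2}$: one writes $\Omega^{-1} = D_\pi^{1/2}(\bI_L + M)^{-1}D_\pi^{1/2}$, so $\Delta\Omega^{-1} = D_\pi^{-1/2}M(\bI_L+M)^{-1}D_\pi^{1/2}$, and using $\|M\|_\op \lesssim \Delta_\op/L$ together with the monotonicity of $\mu \mapsto \mu/(1+\mu)$ on $[0,\infty)$ yields the claimed operator bound. Substituting these into the five terms of Lemma \ref{lem_Sigma_hat} and simplifying under the standing assumption $\Delta_\i \gtrsim 1$ (which absorbs lower-order terms such as $\log n/n$ into $\Delta_\i\log n/n$) produces the claimed estimate.

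\textbf{Main obstacle.} The only non-routine step is the pair of spectral estimates $\|B^\T \Sigma B\|_\op \lesssim \Delta_\op/[L(L+\Delta_\op)]$ and $\|\bM^\T B\|_\op \lesssim \Delta_\op/(L+\Delta_\op)$, which are sharper than what \eqref{bd_BSigmaB}--\eqref{bd_Bmu_sup} give in a diagonal-wise sense, and require the $M/(\bI+M)$ rewriting above. The remainder is careful accounting: tracking which of the five terms in Lemma \ref{lem_Sigma_hat} is dominant in each regime ($\Delta_\op \lesssim L$ versus $\Delta_\op \gtrsim L$), and verifying that every piece is controlled by the claimed two-term bound after invoking $\Delta_\i \gtrsim 1$ and $L \log n \le n$.
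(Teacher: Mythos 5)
Your overall route is the same as the paper's: entrywise application of \cref{lem_Sigma_hat} with $t\asymp\log n$ plus \eqref{bd_BSigmaB}--\eqref{bd_Bmu_sup} for the sup-norm bound, and a $(1/3)$-net over $\SS^{L-1}$ with $t\asymp L\log n$ plus spectral facts from \cref{lem_B_fact} for the operator norm. The sup-norm half and your derivation of $\|\bM^\T B\|_\op\lesssim \Delta_\op/(L+\Delta_\op)$ and $\|B^\T\Sigma B\|_\op\lesssim \Delta_\op/[L(L+\Delta_\op)]$ via $\Omega^{-1}=D_\pi^{1/2}(\bI_L+M)^{-1}D_\pi^{1/2}$ are correct and match what the paper does.

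There is, however, a genuine gap in the operator-norm bookkeeping: the two spectral estimates you single out as "the only non-routine step" are not enough. After setting $v_1=v_2=Bu$ in \cref{lem_Sigma_hat}, the second and third terms involve $\|\bM^\T Bu\|_\i$, and if you control this only through $\|\bM^\T Bu\|_\i\le\|\bM^\T B\|_\op\lesssim\Delta_\op/(L+\Delta_\op)$, the resulting contributions are $\bigl(\tfrac{\Delta_\op}{L+\Delta_\op}\bigr)^2\sqrt{\log n/n}$ and $\bigl(\tfrac{\Delta_\op}{L+\Delta_\op}\bigr)^2 L\log n/n$, which exceed the claimed bound. Concretely, take $\sw^{-1/2}\mu_k=\pm v$ with $\|v\|_2^2=\Delta_\i\asymp1$ on two balanced halves of the classes (so $\Delta=\Delta_\i ss^\T$, $\Delta_\op= L\Delta_\i\asymp L$): then your versions of these terms are of order $\sqrt{\log n/n}$ and $L\log n/n$, while the target is $\sqrt{\log n/(nL)}+\log n/n$ --- off by factors $\sqrt L$ and $L$. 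What is needed (and what the paper uses) is the sharper uniform entrywise control $\|\bM^\T Bu\|_\i\le\sqrt{\Delta_\i\Delta_\op/[L(L+\Delta_\op)]}$ for all unit $u$, obtained by bounding $\max_j\|B^\T\mu_j\|_2$ through $B=\Sigma^{-1}\bM D_\pi$, $\max_j\mu_j^\T\Sigma^{-1}\mu_j\le\Delta_\i$, $\|D_\pi\|_\op\lesssim1/L$ and the contraction $\|\Sigma^{-1/2}\bM D_\pi\bM^\T\Sigma^{-1/2}\|_\op\lesssim\Delta_\op/(L+\Delta_\op)$; with this third estimate the second and third terms collapse exactly onto the two terms of the stated bound. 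So the fix is available within \cref{lem_B_fact}'s identities, but as written your plan would not deliver the lemma in the regime $\Delta_\op\gg\Delta_\i$.
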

    \begin{proof}
    	For the sup-norm bound, it suffices to bound from above 
    	$|B_k^\T (\wh \Sigma - \Sigma) B_k|$  for any $k\in [L]$. Invoking \cref{lem_Sigma_hat} with $v_1 =v_2 =  B_k$  and $t = \log(n)$ together with \eqref{bd_BSigmaB}, \eqref{bd_Bmu_2} and \eqref{bd_Bmu_sup} of \cref{lem_B_fact} gives that, with probability at least $1-n^{-2}$,
    	\begin{align*}
    		|B_k^\T (\wh \Sigma - \Sigma) B_k| 
    		&\lesssim  B_k^\T \sw B_k  \sqrt{\log(n) \over n}  + 
    		\|\bM^\T B_k\|_\i \sqrt{B_k^\T \bM \bM^\T B_k}  \sqrt{\log(n) \over nL} \\\nonumber
    		&\quad +  \|\bM^\T B_k \|_\i^2 {\log(n) \over n}+ 2\sqrt{B_k^\T \sw B_k  B_k^\T \bM \bM^\T B_k}   \sqrt{\log (n)\over nL} \\\nonumber
    		&\lesssim  \left(1 \wedge {\Delta_\i  \over L}\right)\sqrt{\log(n)\over nL^2}+ 
    		\left(1 \wedge {\Delta_\i  \over L}\right)^{3/2}\sqrt{\log (n)\over nL}+  \left(1 \wedge {\Delta_\i  \over L}\right)^{2}{\log(n)\over n}\\
    		&\lesssim  
    		\left(1 \wedge {\Delta_\i  \over L}\right)^{3/2}\sqrt{\log (n)\over nL}. 
    	\end{align*}
    	The last step uses $\Delta_\i \gtrsim 1$ to collect terms. This proves the first claim.
    	
    	Regarding the operator norm bound, a standard discretization argument gives 
    	\[
    	\|B^\T (\wh \Sigma - \Sigma) B\|_\op \le 3 \max_{u \in \cN_L(1/2)} u^\T B^\T (\wh \Sigma - \Sigma) B u
    	\]
    	where $\cN_L(1/3)$ denotes the $(1/3)$-epsilon net of $\{u \in \RR^L:  \|u\|_2 = 1\}$. Note that $|\cN_L(1/3)| \le 7^L$. Then invoking \cref{lem_Sigma_hat} with $t = C L\log(n)$ and using
    	\begin{align*}
    		&\|B^\T \sw B\|_\op \le {\Delta_\op   \over L(L+\Delta_\op)},\\ 
    		&\|B^\T \bM \|_\op \le {\Delta_\op   \over L+\Delta_\op},\\ 
    		&\|\bM^\T B u\|_\i \le \sqrt{\Delta_\i\Delta_\op \over 
    			L(L +\Delta_\op)}
    	\end{align*} 
    	deduced from 
    	\cref{lem_B_fact} give that, with probability at least $1- 8 n^{-CL + L\log(7)} \ge 1-n^{-L}$, 
    	\begin{align*}
    		u^\T B^\T (\wh \Sigma - \Sigma) Bu  
    		&\lesssim  u^\T B ^\T \sw B u  \sqrt{L\log(n) \over n}  + 
    		\|\bM^\T B u\|_\i \sqrt{u^\T B^\T \bM \bM^\T B u}  \sqrt{\log(n) \over n} \\\nonumber
    		&\quad +  \|\bM^\T B u \|_\i^2 {L\log(n) \over n}+ 2\sqrt{u^\T B^\T \sw B u u^\T B^\T \bM \bM^\T B u}   \sqrt{\log (n)\over n} \\\nonumber
    		&\lesssim     {\Delta_\op   \over L+\Delta_\op}\sqrt{\Delta_\i\log(n) \over nL} +  {\Delta_\op \over 
    			L +\Delta_\op}{\Delta_\i\log(n) \over n} 
    	\end{align*}
    	holds uniformly over $u \in \cN_L(1/3)$. We also use $\Delta_\i \gtrsim 1$ to simplify expressions in the last step above. This completes the proof. 
    \end{proof}

    The following lemma provides upper bounds of the operator norm of $(\wh \Sigma - \Sigma)$. 
    
    \begin{lemma}\label{lem_Sigma_diff_op}
    	Under model \eqref{model}  and \cref{ass_sw}, assume $(p+L)\log(n) \le n$. Then with probability at least $1-n^{-2}$, the following holds uniformly over $v \in \cS^p$
    	\[
    	v^\T (\wh \Sigma - \Sigma) v ~ \lesssim  ~ v^\T \Sigma v \left(\sqrt{p\log(n) \over n} + \sqrt{L\log(n) \over n}\right).
    	\]
    	As a result, with the same probability, we have 
    	\[
    	\lambda_p(\wh \Sigma) \ge {1\over 2} \lambda_p(\Sigma).
    	\]
    \end{lemma}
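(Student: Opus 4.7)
The statement is a uniform-over-$\cS^p$ version of Lemma \ref{lem_Sigma_hat_sup}, so my plan is to reuse exactly that proof strategy, replacing the low-dimensional space $\Gamma_s$ by the full sphere. First I would fix a $1/3$-net $\cN \subset \cS^p$ of cardinality at most $7^p$; the standard discretization trick then gives $\sup_{v\in\cS^p} v^\T(\wh\Sigma - \Sigma)v \le 3 \max_{u \in \cN} u^\T(\wh\Sigma-\Sigma) u$, so it suffices to establish a pointwise bound of the claimed form and then union-bound over $\cN$.

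For each fixed $u \in \cN$, I would re-derive the proof of Lemma \ref{lem_Sigma_hat} with $v_1 = v_2 = u$ (rather than invoking it as a black box), keeping the decomposition
\[
\wh\Sigma - \Sigma = (\wh\Sigma_w - \sw) + \sum_{k=1}^L (\wh\pi_k - \pi_k)\mu_k\mu_k^\T + \sum_{k=1}^L \wh\pi_k\bigl[(\wh\mu_k-\mu_k)\mu_k^\T + \mu_k(\wh\mu_k-\mu_k)^\T\bigr].
\]
The first and third pieces are Gaussian-concentration terms and can be controlled with Lemmas \ref{lem_sw} and \ref{lem_mu_sum} at tail parameter $t = Cp\log(n)$; using $u^\T\sw u \le u^\T\Sigma u$ and $\|\bM^\T u\|_2^2/L \lesssim u^\T\bM D_\pi \bM^\T u \le u^\T\Sigma u$ (valid under Assumption \ref{ass_pi}), each contributes at most $u^\T\Sigma u \sqrt{p\log(n)/n}$. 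A union bound over $\cN$ costs $7^p$, which is absorbed into $e^{-t}$ for $C$ sufficiently large, leaving probability at least $1 - n^{-2}$.

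The delicate term is $\sum_k (\wh\pi_k-\pi_k)(u^\T\mu_k)^2$. Controlling it via Bernstein (as in Lemma \ref{lem_Sigma_hat}) introduces extraneous $\sqrt{L}$ factors; instead, as in the proof of Lemma \ref{lem_Sigma_hat_sup}, I would work on the event $\cE_\pi$ from \eqref{def_event_pi}, on which Assumption \ref{ass_pi} and $L\log(n)\le n$ yield the deterministic bound $\max_k |\wh\pi_k-\pi_k|/\pi_k \lesssim \sqrt{L\log(n)/n}$. Then
\[
\Bigl|\sum_{k=1}^L (\wh\pi_k-\pi_k)(u^\T\mu_k)^2\Bigr| \le \max_k \frac{|\wh\pi_k-\pi_k|}{\pi_k}\, u^\T \bM D_\pi \bM^\T u \lesssim u^\T\Sigma u \sqrt{L\log(n)/n}
\]
holds uniformly in $u\in\cS^p$, with no additional union-bound cost. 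Summing the two types of contributions produces the claimed bound.

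For the eigenvalue consequence, the variational principle gives $\lambda_p(\wh\Sigma) = \min_{v\in\cS^p} v^\T\wh\Sigma v \ge \min_v v^\T\Sigma v \cdot (1 - C\epsilon_n) = \lambda_p(\Sigma)(1 - C\epsilon_n)$ where $\epsilon_n = \sqrt{p\log(n)/n}+\sqrt{L\log(n)/n}$; the condition $(p+L)\log(n) \le n$ (with a sufficiently small absolute constant, if needed, to absorb $C$) ensures $\epsilon_n \le 1/2$ and hence $\lambda_p(\wh\Sigma) \ge \lambda_p(\Sigma)/2$. The main technical obstacle is the multiplicative scaling: every term in the bound must be proportional to $u^\T\Sigma u$ rather than to $\|u\|_2^2$ or $\lambda_1(\Sigma)$, which is precisely why the $\wh\pi - \pi$ contribution must be peeled off and handled through $\cE_\pi$ using the relation $\|\bM^\T u\|_2^2/L \asymp u^\T\bM D_\pi\bM^\T u$.
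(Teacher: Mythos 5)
Your overall route is exactly the paper's intended proof: the paper defers the first statement to ``the same arguments as \cref{lem_Sigma_hat_sup}'' (net, pointwise bounds from the proof of \cref{lem_Sigma_hat} at tail level $t\asymp p\log n$, and the $\wh\pi-\pi$ term peeled off on $\cE_\pi$), and proves the second statement by writing $v^\T\wh\Sigma v\ge v^\T\Sigma v\,\lambda_p\bigl(\Sigma^{-1/2}\wh\Sigma\Sigma^{-1/2}\bigr)$ and applying Weyl; your variational phrasing of the eigenvalue step is equivalent. Your bookkeeping (absorbing $7^p$ into $e^{-t}$, the use of $\|\bM^\T u\|_2^2/L\lesssim u^\T\bM D_\pi\bM^\T u$, and the deterministic $\cE_\pi$ bound, all of which indeed require \cref{ass_pi}, used implicitly by the paper here as well) matches the paper.

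There is, however, one step that as written does not deliver the stated bound. The discretization inequality $\sup_{v\in\cS^p}v^\T(\wh\Sigma-\Sigma)v\le 3\max_{u\in\cN}|u^\T(\wh\Sigma-\Sigma)u|$ transfers an \emph{absolute} bound: combined with your per-net-point bound $|u^\T(\wh\Sigma-\Sigma)u|\lesssim u^\T\Sigma u\,\epsilon_n$, with $\epsilon_n:=\sqrt{p\log(n)/n}+\sqrt{L\log(n)/n}$, it only yields $\sup_{v}|v^\T(\wh\Sigma-\Sigma)v|\lesssim\lambda_1(\Sigma)\,\epsilon_n$, not the per-$v$ multiplicative bound $|v^\T(\wh\Sigma-\Sigma)v|\lesssim v^\T\Sigma v\,\epsilon_n$ that the lemma asserts and that your final paragraph (and \cref{lem_XE_op}) actually needs; with the $\lambda_1(\Sigma)$ version one only gets $\lambda_p(\wh\Sigma)\ge\lambda_p(\Sigma)-C\lambda_1(\Sigma)\epsilon_n$, which is useless when $\Sigma$ is ill-conditioned. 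The repair stays entirely inside your argument: whiten before discretizing. Bound $\|\Sigma^{-1/2}(\wh\Sigma-\Sigma)\Sigma^{-1/2}\|_\op\le 3\max_{w\in\cN}|w^\T\Sigma^{-1/2}(\wh\Sigma-\Sigma)\Sigma^{-1/2}w|$ and apply your pointwise bounds at the fixed vectors $v=\Sigma^{-1/2}w$, for which $v^\T\Sigma v=\|w\|_2^2=1$, so every net point contributes $\lesssim\epsilon_n$ with no $\Sigma$-dependent factor (the $\cE_\pi$ term is also relative, so it whitens without loss); un-whitening gives $|v^\T(\wh\Sigma-\Sigma)v|\le\|\Sigma^{-1/2}(\wh\Sigma-\Sigma)\Sigma^{-1/2}\|_\op\, v^\T\Sigma v$ for every $v$, which is the first claim in the two-sided form needed for the lower bound on $\lambda_p(\wh\Sigma)$. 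This whitened formulation is also precisely how the paper's own proof of the second statement is phrased, so the fix aligns your write-up with the paper rather than departing from it.
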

    \begin{proof}
    	The proof of the first statement follows from the same arguments of proving \cref{lem_Sigma_hat_sup} and is thus omitted. 
    	
    	For the second statement, for any $v\in \cS^p$, we have 
    	\[
    	v^\T \wh \Sigma v  = v^\T \Sigma^{1/2} \left(\Sigma^{-1/2}\wh \Sigma \Sigma^{-1/2}\right)  \Sigma^{1/2} v \ge v^\T\Sigma v  ~ \lambda_p \left(
    	\Sigma^{-1/2}\wh \Sigma \Sigma^{-1/2}
    	\right).
    	\]
    	Since, by Weyl's inequality and on the event that the first result holds,
    	\[
    	\lambda_p \left(
    	\Sigma^{-1/2}\wh \Sigma \Sigma^{-1/2}
    	\right) \ge 1 - \|\Sigma^{-1/2}(\wh \Sigma - \Sigma) \Sigma^{-1/2}\|_\op \ge 1 - \left(\sqrt{p\log(n) \over n} + \sqrt{L\log(n) \over n}\right)
    	\]
    	we conclude 
    	\[
    	v^\T \wh \Sigma v \ge {1\over 2}v^\T \Sigma v
    	\]
    	uniformly over $\cS^p$, completing the proof. 
    \end{proof}

    \subsection{Auxiliary lemmas}\label{app_sec_aux_lem}

    The following lemmas are proved in \cite{BW2023}. 
    \begin{lemma}\label{lem_pi}
        Assume $\pi_{\min} \ge 2\log n/n$ for some sufficiently large constant $C$. Then, for any $k\in [L]$
        	\[
        	\PP\left\{
        	|\wh\pi_k - \pi_k| < \sqrt{16\pi_k (1-\pi_k)\log n\over n}
        	\right\}\ge 1-n^{-2}.
        	\]
        	Furthermore, if $\pi_{\min} \ge C\log n/ n$ for some sufficiently large constant $C$, then 
        	\[
        	\PP\left\{
        	c\pi_k \le \wh \pi_k \le c'\pi_k
        	\right\} \ge 1-n^{-2}.
        \]
    \end{lemma}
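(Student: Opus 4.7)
The plan is to observe that $n\wh\pi_k = \sum_{i=1}^n \1\{Y_i = \be_k\}$ is the sum of $n$ i.i.d. Bernoulli$(\pi_k)$ random variables, each bounded in $[0,1]$ with variance $\pi_k(1-\pi_k)$. A direct application of Bernstein's inequality then yields, for every $t \ge 0$,
\[
\PP\left\{|\wh \pi_k - \pi_k| \ge t\right\} \le 2\exp\left(-\frac{nt^2/2}{\pi_k(1-\pi_k) + t/3}\right).
\]
Both claims of the lemma will follow by plugging in an appropriate $t$ and using the assumed lower bound on $\pi_{\min}$ to dominate the sub-exponential (linear) term in the Bernstein denominator by the Bernoulli variance.

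For the first claim, I would set $t = \sqrt{16\,\pi_k(1-\pi_k)\log n/n}$. The key arithmetic check is that the assumption $\pi_{\min}\ge 2\log n/n$ forces $t \lesssim \pi_k(1-\pi_k)$: indeed, since either $\pi_k\le 1/2$ (so $\pi_k(1-\pi_k)\ge \pi_k/2 \ge \log n/n$) or $1-\pi_k\ge \pi_{\min}$ via $1-\pi_k=\sum_{j\ne k}\pi_j$, one gets $\pi_k(1-\pi_k)\gtrsim \log n/n$ in every case, so the ratio $t/(\pi_k(1-\pi_k))$ is bounded by an absolute constant. Consequently the Bernstein denominator is at most a constant multiple of $\pi_k(1-\pi_k)$, and the exponent is at least $c\log n$ for a constant $c$ that one can make strictly larger than $2$ by careful bookkeeping, producing the required $n^{-2}$ tail.

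For the second claim, I would apply the same Bernstein inequality with $t = \pi_k/2$ (say), aiming at the relative-error event $|\wh\pi_k-\pi_k| < \pi_k/2$, which is equivalent to the two-sided sandwich $c\pi_k\le \wh\pi_k\le c'\pi_k$ with $c = 1/2$ and $c' = 3/2$. Then $nt^2/2 = n\pi_k^2/8$, the denominator is $\pi_k(1-\pi_k)+\pi_k/6 \le (4/3)\pi_k$, and the exponent becomes at least $c\, n\pi_k$ for an absolute constant $c$. Under the stronger assumption $\pi_{\min}\ge C\log n/n$ with $C$ sufficiently large, this exponent exceeds $2\log n$, delivering the $n^{-2}$ tail. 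The result then follows after adjusting the constants $c,c'$ if one wishes for a sharper sandwich.

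There is really no conceptual obstacle here; the only step that requires any care is the first, where one must verify that the constant $16$ (combined with $\pi_{\min}\ge 2\log n/n$) is large enough to ensure that the Bernstein exponent survives at level $2\log n$ after absorbing the sub-exponential correction term $t/3$ into the variance term. I would carry out this bookkeeping explicitly by splitting on whether $\pi_k\le 1/2$ or $\pi_k > 1/2$ and using $1-\pi_k\ge \pi_{\min}$ in the latter case.
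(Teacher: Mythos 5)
Your proof is correct: Bernstein's inequality with the two substitutions you describe, together with the observation that $\pi_k(1-\pi_k)\ge \log n/n$ under $\pi_{\min}\ge 2\log n/n$ (splitting on $\pi_k\le 1/2$ versus using $1-\pi_k=\sum_{j\ne k}\pi_j\ge\pi_{\min}$ when $\pi_k>1/2$), yields $t/3\le \tfrac{4}{3}\pi_k(1-\pi_k)$ and hence an exponent of at least $\tfrac{24}{7}\log n$ for the first claim, and an exponent of order $C\log n$ for the second with $t=\pi_k/2$ (giving $c=1/2$, $c'=3/2$), both sufficient for the $n^{-2}$ tails. This is essentially the same (standard) argument as the source: the paper itself gives no proof here but defers to \citet{BW2023}, and your self-contained Bernstein bookkeeping checks out.
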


    \begin{lemma}\label{lem_mu}
    	Suppose that model \eqref{model} holds.  For any deterministic vector $v\in \RR^p$ and $k\in [L]$,  for all $t>0$,
    		\[
    		\PP\left\{
    		\left|v^\T (\wh \mu_k-\mu_k)\right| \ge t\sqrt{v^\T  \sw v \over n_k} 
    		\right\} \le 2e^{-{t^2/ 2}}.
    		\] 
    \end{lemma}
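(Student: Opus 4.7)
The plan is to condition on the label vector $\bY$ (equivalently, on $n_k$ and the index set $\{i : Y_i = \be_k\}$) and exploit the fact that $\wh\mu_k - \mu_k$ is then a conditionally Gaussian average. Concretely, under model \eqref{model}, the conditional distribution of $X_i$ given $Y_i = \be_k$ is $\cN_p(\mu_k,\sw)$, and the $X_i$'s are conditionally independent given $\bY$. So conditionally on $\bY$,
\[
\wh\mu_k - \mu_k = \frac{1}{n_k}\sum_{i : Y_i = \be_k} (X_i - \mu_k)
\]
is the average of $n_k$ i.i.d. $\cN_p(0,\sw)$ vectors, hence itself $\cN_p(0, \sw/n_k)$. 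Consequently $v^\T(\wh\mu_k - \mu_k) \mid \bY \sim \cN(0, v^\T \sw v / n_k)$.

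From here the proof reduces to the one-dimensional Gaussian tail bound: for $Z \sim \cN(0,1)$, $\PP\{|Z|\ge t\} \le 2e^{-t^2/2}$. Applying this to the standardized variable $v^\T(\wh\mu_k - \mu_k)/\sqrt{v^\T\sw v/n_k}$ (conditionally on $\bY$) yields
\[
\PP\!\left\{|v^\T(\wh\mu_k - \mu_k)| \ge t\sqrt{v^\T \sw v/n_k}\,\bigg|\,\bY\right\} \le 2 e^{-t^2/2}.
\]
Since the right-hand side does not depend on $\bY$, taking expectation over $\bY$ preserves the bound and gives the unconditional statement.

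The only step requiring any care is the implicit assumption $n_k \ge 1$, so that $\wh\mu_k$ is well defined; this is handled either by restricting to the event $\{n_k \ge 1\}$ (which is the regime in which the lemma is applied downstream, e.g.\ on the event $\cE_\pi$ together with \cref{ass_pi}) or by adopting the convention that the inequality holds vacuously when $n_k = 0$. No obstacle of substance arises: the entire argument is a one-line conditional Gaussian tail bound, which is why this lemma is catalogued as an auxiliary result.
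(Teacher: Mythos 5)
Your proof is correct and is essentially the argument the paper relies on: the lemma is cited from \cite{BW2023} rather than reproved, and the same conditioning-on-$\bY$ device (conditional Gaussianity of the within-class deviations, a standard Gaussian tail bound, then unconditioning since the bound is uniform in $\bY$) is exactly what the paper itself uses in the proof of \cref{lem_mu_sum}. Your remark about the $n_k\ge 1$ edge case is a sensible precaution and does not affect the result.
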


\end{document}